\theoremstyle{plain}
\newtheorem{theorem}{Theorem}[section]
\newtheorem{corollary}[theorem]{Corollary}
\newtheorem{proposition}[theorem]{Proposition}
\newtheorem{lemma}[theorem]{Lemma}
\theoremstyle{definition}
\newtheorem{definition}[theorem]{Definition}
\newtheorem{assumption}[theorem]{Assumption}
\theoremstyle{remark}
\newtheorem{remark}[theorem]{Remark} 
\newtheorem{example}[theorem]{Example}
\numberwithin{equation}{section}
\numberwithin{figure}{section}
\numberwithin{table}{section}
\newcommand{\R}{\mathbb{R}}
\newcommand{\N}{\mathbb{N}}
\newcommand{\C}{\mathbb{C}}                           
\newcommand{\Z}{\mathbb{Z}}
\newcommand{\T}{\mathbb{T}}
\newcommand{\s}[1]{\CMcal{#1}}
\newcommand{\bb}[1]{\mathscr{#1}}
\newcommand{\rr}[1]{\mathfrak{#1}}
\newcommand{\n}[1]{\mathbb{#1}}
\newcommand{\expo}[1]{\,\mathrm{e}^{#1}\,}                 
\newcommand{\dd}{\,\mathrm{d}}
\newcommand{ \ii}{\,\mathrm{i}\,}
\newcommand{\virg}[1]{\lq\lq#1\rq\rq}                \newcommand{\ie}{\textsl{i.\,e.\,}}
\newcommand{\eg}{\textsl{e.\,g.\,}}
\newcommand{\cf}{\textsl{cf}.\,}
\newcommand{\A}{\mathfrak{A}}
\begin{document}

\title[About the notion of  eigenstates for $C^*$-algebras and some applications]{About the notion of  eigenstates for $C^*$-algebras and some application in Quantum Mechanics}

\author[G. De~Nittis]{Giuseppe De Nittis}

\address[G. De~Nittis]{Facultad de Matem\'aticas \& Instituto de F\'{\i}sica,
  Pontificia Universidad Cat\'olica de Chile,
  Santiago, Chile.}
\email{gidenittis@mat.uc.cl}

\author[D. Polo]{Danilo polo}

\address[D. Polo]{Facultad de Matem\'aticas,
  Pontificia Universidad Cat\'olica de Chile,
  Santiago, Chile.}
\email{djpolo@mat.uc.cl}

\vspace{2mm}

\date{\today}

\begin{abstract}
This work is concerned with the notion of \emph{eigenstates} for $C^*$-algebras.
After reviewing some basic and structural results, 
we
explore the possibility of reinterpreting certain typical concepts of quantum mechanics (\eg dynamical equilibrium states, ground states, gapped states,  Fermi surfaces) in terms of (algebraic) eigenstates. 
\medskip

\noindent
{\bf MSC 2010}:
Primary: 		46L30;
Secondary: 	47L80, 47L90, 82B10.\\
\noindent
{\bf Keywords}:
{\it Eigenstates, ground states, gapped states, Fermi surfaces.}

\end{abstract}

\maketitle

\tableofcontents

\section{Introduction}\label{sect:intro}
Usually, the concept of \emph{eigenstate}  is formulated within the Hilbert space formulation of  Quantum Mechanics (QM) \cite{dirac-30,vonneumann-32}: given a linear operator $H$ (commonly called \emph{observable}) acting on the Hilbert space $\s{H}$ (the \emph{state} space) and a number  $\lambda\in\C$ (the \emph{measure outcome}), then one says that $\psi\in \s{H}$ is an \emph{eigenstate} of $H$ with \emph{eigenvalue} $\lambda$ if the equation $H\psi=\lambda \psi$ is satisfied.
However, immediately after the birth of the Hilbert space formulation of QM, it was realized that the quantum theory can be rigorously reformulated 
in a purely algebraic language. The \emph{algebraic foundation of QM} \cite{murray-von-neumann-36,segal-47} turns out to be  a more general and flexible theory compared with  the original Hilbert space formulation.
In this setting the observables are interpreted as elements of a $C^*$-algebra $\rr{A}$ and the states are the continuous linear funcional $\omega:\rr{A}\to\C$. The connection with the 
standard formulation is given by the \emph{GNS representation} (see Section \ref{sec:eig-rep})
which provides a way to associate to a given state $\omega$ a triple $(\pi_\omega, \mathcal{H}_\omega, \psi_\omega)$ where $\pi_\omega$ is a representation of $\rr{A}$ over the Hilbert space $\mathcal{H}_\omega$, and $\psi_\omega$ is a cyclic vector $\psi_\omega$ which  permits to reconstruct the statistic (or \emph{expectation values}) of $\omega$. One of the main advantages of the algebraic approach is that it provides a framework suitable for dealing without ambiguity with systems with infinite degrees of freedom such as the statistical mechanics systems \cite{bratteli-robinson-87,bratteli-robinson-97}   and field theories \cite{haag-96}.

\medskip

 In light of the above, it is worth asking whether there is an algebraic interpretation of the concept of eigenstate of an operator. In fact this is not so exotic or difficult. Given an observable $H$ of a $C^*$-algebra $\rr{A}$ and a number $\lambda\in\C$ one says that 
a state $\omega$ is an \emph{eigenstate} of $H$ if  the eigenvalue equation $\pi_\omega(H)\psi_\omega=\lambda \psi_\omega$ 
is satisfied in the related GNS representation 
(Theorem \ref{teo_rep}). 
Remarkably,
this somehow natural definition of an eigenstate can be characterized in a purely algebraic way without resorting to the  GNS representation (Definition \ref{def_00}). It is not clear to us where and when this definition was first introduced in the literature. However, it has been used by various authors for distinct reasons. For instance in \cite{riedel-88,riedel-90} it is used for the study of spectral properties of the \emph{almost Mathieu operator} and in \cite{Pash1, Pash2} for the study of the simplicity of certain $C^*$-algebras. A recent review \cite{rinehart-21} is devoted to summarize the main properties of the eigenstates for  self-adjoint operators.
\medskip

 \medskip

This work is aimed to review and popularize the concept of  eigenstates for   elements of a $C^*$-algebra, and to provide a possible use of this notion to reinterpret certain ideas typical for (quantum)  condensed matter systems.
 By following, and extending, the presentation in \cite{rinehart-21}, we will  explore the relation between the notion of eigenstate and the algebraic formulation of \emph{dynamic equilibrium state}, \emph{ground state}, \emph{gapped state}, and \emph{Fermi surface}. Along the way, we provide proofs that are missing or scattered in the literature. 

\medskip

In our opinion, one of the major contributions of this work is the connection between {Fermi surfaces} \cite{Ash,Cal,Kit,Kuc} and eigenstates. In particular we can show (Theorem \ref{prop: eigenstate fermi_S}) that {Fermi surface} can be appropriately interpreted as eigenstates (of the related of the $C^*$-algebra). This paves the way to extend the notion of Fermi surface
beyond the usual setting of periodic operators. In fact in Section \ref{sec:GenFer} we provide a generalized definition of Fermi surface which, in a certain sense, can be imagined as a \virg{non-commutative version} of the usual notion of Fermi surface. This work provides only the first step in this direction, although the desired long-term goal is to have a complete characterization  of the properties of an  eigenstates are to be a \virg{non-commutative Fermi surface}.

\medskip

\noindent
{\bf Structure of the paper and main results.}
In Section \ref{sec:eig-aff} we review  some properties of eigenstates and generalize the results of \cite{rinehart-21} to normal elements. In particular, we show that if $A$ is a normal element of a $C^*$-algebra $\A$, then there is a correspondence between its spectrum and its set of eigenstates (\cf Corollary \ref{corol_norm}). It is also proved that for any normal element the set of eigenstates is closed under functional calculus (\cf Theorem \ref{teoF-calc}). Other characterizations of  eigenstates in terms of   ideals and (left) invertibility are also presented.
Section \ref{sec:eig-din} is devoted to the relation between the notion of eigenstate  of  a self-adjoint element and the algebraic formulation of concepts like \emph{equilibrium state},  \emph{ground state} and \emph{gapped state}. In more detail, we show that  pure states invariant under the dynamic generated by a self-adjoint element $H$ are exactly the eigenstates of $H$ (Proposition \ref{prop:pur_inv}), and we present an explicit characterization of the eigenstates in terms of ground states and gapped  states (Propositions \ref{prop_ground_A} and  \ref{Prop gs}). In Section \ref{secFS}, we introduce a new algebraic definition of  \emph{Fermi surface} for elements of a $C^*$-algebra $\A$. Under mild technical conditions, the main result of this section states that one can build an eigenstate associated with each Fermi surface of a self-adjoint element in  suitable $C^*$-algebras (Theorem \ref{prop: eigenstate fermi_S}). A physical example is is considered in detail in Section \ref{sec:graf}. Supporting material concerning the \emph{disintegration theorem} has been included in Appendix \ref{sec:disint}.

\medskip

\noindent
{\bf Acknowledgements}
GD's research is supported by the grant {Fondecyt Regular - 1190204}. DP’s research is supported by ANID-Subdirección de Capital Humano/ Doctorado Nacional/ 2022-21220144. 
GD would like to cordially thank J. Bellissard   for several inspiring discussions on the algebraic notion of eigenstate and its relation with the notion of Fermi surface, and for suggesting references \cite{riedel-88,riedel-90}.


\section{Eigenstates of C*-algebras}\label{sec:eig-aff}

\subsection{Basic definition}\label{sec:def}
In the following, $\rr{A}$ 
will always be a unital $C^*$-algebra with \emph{unit} ${\bf 1}$.
In fact, there is no loss of generality in this assumption, since every non-unital $C^*$-algebra can be endowed with a unit   in a standard way \cite[Proposition 2.1.5]{bratteli-robinson-87}.
The  \emph{state space} of  $\rr{A}$, \ie, the set of the normalized positive linear functionals over $\rr{A}$ \cite[Definition 2.3.14]{bratteli-robinson-87}, 
 will be denoted with
$\n{E}_{\rr{A}}$.
\begin{definition}[Eigenstate]\label{def_00}
Let $A\in\rr{A}$. A state $\omega\in \n{E}_{\rr{A}}$ is an \emph{eigenstate} of $A$ with
\emph{eigenvalue} $\lambda\in\C$ if and only if 
\[
\omega(BA)\;=\;\lambda\;\omega(B)\;,\qquad \forall\; B\in \rr{A}\;.
\]
The set of all eigenvalues of $A$ will be denoted with ${\rm Eig}(A)$.
\end{definition}

\medskip

By choosing $B={\bf 1}$ in the definition above it turns out that 
\[
\lambda\;=\;\omega(A)\;,
\]
namely the  value of the eigenvalue $\lambda$ is determined by the evaluation of the eigenstate $\omega$ on the operator $A$.

\begin{remark}\label{rk:shift-spec}
Let $\omega\in \n{E}_{\rr{A}}$ be an {eigenstate} of $A$ with
{eigenvalue} $\lambda\in\C$. Then $\omega$ is also an an {eigenstate} of $A_0:=A-\lambda_0{\bf 1}$
with {eigenvalue} $\lambda-\lambda_0$ for every $\lambda_0\in\C$.
\hfill $\blacktriangleleft$
\end{remark}

Given $A\in\rr{A}$ and $\lambda\in\C$ let 
\begin{equation}
\rr{J}_{A,\lambda}\;:=\;\overline{\{J=B(A-\lambda{\bf 1})\;|\; B\in\rr{A}\}}\;\subseteq\;\rr{A}\;.
\end{equation}
be  the associated closed left-ideal.

\begin{lemma}\label{lemma_01}
The state $\omega\in \n{E}_{\rr{A}}$ is an  {eigenstate} of $A\in\rr{A}$ with
{eigenvalue} $\lambda\in\C$ if and only if  $\omega|_{\rr{J}_{A,\lambda}}\equiv 0$.
\end{lemma}
\proof
For the implication $(\Rightarrow)$ let us start with  the case $J=B(A-\lambda{\bf 1})$. Then
$$
\omega(J)\;=\;\omega(BA)\;-\;\lambda\omega(B)\;=\;0\;.
$$
In the case $J$ is the norm-limit of the sequence $J_n:=B_n(A-\lambda{\bf 1})$ then one has by continuity that 
$$
|\omega(J)|\;=\;|\omega(J-J_n)|\;\leqslant\;\|J-J_n\|
$$
and in turn $\omega(J)=0$. The implication $(\Leftarrow)$  follows by observing that the condition $\omega|_{\rr{J}_{A,\lambda}}\equiv 0$ implies Definition \ref{def_00}.\qed

\medskip

\begin{proposition}\label{prop:quad=0}
The state $\omega\in \n{E}_{\rr{A}}$ is an  {eigenstate} of $A\in\rr{A}$ with
{eigenvalue} $\lambda\in\C$ if and only if
$$
\omega\big((A-\lambda{\bf 1})^*(A-\lambda{\bf 1})\big)\;=\;0\;.
$$
\end{proposition}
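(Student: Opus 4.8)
The plan is to reduce everything to the Cauchy--Schwarz inequality for states, which asserts that for a positive normalized functional $\omega$ the sesquilinear form $(B,C)\mapsto\omega(B^*C)$ is positive semidefinite, so that
\[
|\omega(B^*C)|^2\;\leqslant\;\omega(B^*B)\,\omega(C^*C)\;,\qquad \forall\,B,C\in\rr{A}\;.
\]
Throughout I would abbreviate $A_\lambda:=A-\lambda{\bf 1}$, so that Definition \ref{def_00} reads $\omega(BA_\lambda)=0$ for all $B\in\rr{A}$, and the quantity to be controlled is $\omega(A_\lambda^*A_\lambda)$.

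For the direction $(\Rightarrow)$ I would argue directly: if $\omega$ is an eigenstate then $\omega(BA_\lambda)=0$ for every $B$, and the special choice $B=A_\lambda^*$ gives $\omega(A_\lambda^*A_\lambda)=0$ at once. Alternatively, one observes that $A_\lambda^*A_\lambda=A_\lambda^*(A-\lambda{\bf 1})\in\rr{J}_{A,\lambda}$ and applies Lemma \ref{lemma_01}.

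The content lies in the direction $(\Leftarrow)$. Assuming $\omega(A_\lambda^*A_\lambda)=0$, I want to recover $\omega(BA_\lambda)=0$ for an arbitrary $B\in\rr{A}$. Writing $BA_\lambda=(B^*)^*A_\lambda$ and applying the inequality above with first slot $B^*$ and second slot $A_\lambda$, I obtain
\[
|\omega(BA_\lambda)|^2\;\leqslant\;\omega(BB^*)\,\omega(A_\lambda^*A_\lambda)\;=\;0\;,
\]
because the second factor vanishes by hypothesis. Hence $\omega(BA_\lambda)=0$ for all $B$, which is precisely the eigenstate condition of Definition \ref{def_00}.

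I do not expect any genuine obstacle here; the only step deserving a word of justification is the Cauchy--Schwarz inequality itself, which is standard for states. Conceptually, the statement says that the vanishing of the single \virg{variance-type} expectation $\omega(A_\lambda^*A_\lambda)$ suffices, by positivity, to annihilate all the off-diagonal expectations $\omega(BA_\lambda)$ simultaneously. This is most transparent in the GNS picture, where $\omega(A_\lambda^*A_\lambda)=\|\pi_\omega(A_\lambda)\psi_\omega\|^2$, so that the hypothesis is equivalent to $\pi_\omega(A)\psi_\omega=\lambda\psi_\omega$, matching the eigenvalue equation of Theorem \ref{teo_rep}.
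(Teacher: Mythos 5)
Your proof is correct and takes essentially the same approach as the paper: the forward direction by killing an element of the ideal $\rr{J}_{A,\lambda}$ (your direct choice $B=(A-\lambda{\bf 1})^*$ in Definition \ref{def_00} amounts to the same thing), and the reverse direction by the Cauchy--Schwarz inequality for states, exactly as in the paper's proof. Incidentally, your factor $\omega(BB^*)$ is the one that actually comes out of $|\omega(B^*C)|^2\leqslant\omega(B^*B)\,\omega(C^*C)$ applied with first slot $B^*$ and second slot $A-\lambda{\bf 1}$; the paper writes $\omega(B^*B)$ at that spot, a harmless slip since either factor is finite and is multiplied by zero.
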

\proof
The implication $(\Rightarrow)$ is a consequence of Lemma \ref{lemma_01}.
The implication $(\Leftarrow)$ follows from the Cauchy-Schwartz inequality for states \cite[Lemma 2.3.10]{bratteli-robinson-87}
$$
\left|\omega\left(B(A-\lambda{\bf 1})\right)\right|^2\;\leqslant\;\omega\left(B^*B\right)\; \omega\left((A-\lambda{\bf 1})^*(A-\lambda{\bf 1})\right)\;=\;0\;,\quad\forall\; B\in\rr{A}
$$
which implies that $\omega(BA)=\lambda\omega(B)$ for every $B\in\rr{A}$.
\qed

\medskip

Let us recall that $A\in\rr{A}$ is \emph{normal} if $AA^*=A^*A$.
\begin{corollary}\label{cor:norm=0}
Let $A\in\rr{A}$ be a normal element and $\omega\in \n{E}_{\rr{A}}$  an eigenstate of $A$ with eigenvalue $\lambda\in\C$. Then $\omega$ is also an  eigenstate of $A^*$ with eigenvalue $\overline{\lambda}\in\C$.
\end{corollary}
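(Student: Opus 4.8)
The plan is to reduce the statement to the quadratic characterization of eigenstates furnished by Proposition \ref{prop:quad=0}, and to exploit the single structural fact that normality of $A$ is inherited by the shifted element $A-\lambda{\bf 1}$.

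First I would set $B:=A-\lambda{\bf 1}$ and check that $B$ is normal whenever $A$ is. A direct expansion gives $BB^*=AA^*-\overline{\lambda}A-\lambda A^*+|\lambda|^2{\bf 1}$ and $B^*B=A^*A-\lambda A^*-\overline{\lambda}A+|\lambda|^2{\bf 1}$, and these two expressions coincide precisely because $AA^*=A^*A$. This is the only point at which the normality hypothesis is used, and I expect it to be the conceptual crux of the argument, even though the computation itself is entirely routine.

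Next, since $\omega$ is an eigenstate of $A$ with eigenvalue $\lambda$, Proposition \ref{prop:quad=0} yields $\omega(B^*B)=0$. Invoking the normality just established, one has $B^*B=BB^*$, and therefore $\omega(BB^*)=0$ as well.

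Finally I would rewrite this last identity in terms of $A^*$. Using $(A^*-\overline{\lambda}{\bf 1})^*=A-\lambda{\bf 1}=B$, we obtain $BB^*=(A^*-\overline{\lambda}{\bf 1})^*(A^*-\overline{\lambda}{\bf 1})$, so that $\omega\big((A^*-\overline{\lambda}{\bf 1})^*(A^*-\overline{\lambda}{\bf 1})\big)=0$. Applying Proposition \ref{prop:quad=0} once more, this time to the element $A^*$ and the scalar $\overline{\lambda}$, shows that $\omega$ is an eigenstate of $A^*$ with eigenvalue $\overline{\lambda}$, which is exactly the claim. No genuine obstacle arises beyond the normality observation in the first step; the remainder is a bookkeeping manipulation of adjoints together with two invocations of Proposition \ref{prop:quad=0}.
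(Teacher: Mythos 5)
Your proof is correct and follows essentially the same route as the paper's: both reduce the claim to Proposition \ref{prop:quad=0}, use normality to swap $(A-\lambda{\bf 1})^*(A-\lambda{\bf 1})$ for $(A-\lambda{\bf 1})(A-\lambda{\bf 1})^*$, and then recognize the latter as $(A^*-\overline{\lambda}{\bf 1})^*(A^*-\overline{\lambda}{\bf 1})$. The only difference is that you spell out the (routine) verification that $A-\lambda{\bf 1}$ is normal, which the paper leaves implicit.
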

\proof In view of Proposition \ref{prop:quad=0} and the normality of $A$ it holds true that
$$
\begin{aligned}
0\;&=\;\omega\big((A-\lambda{\bf 1})^*(A-\lambda{\bf 1})\big)\;=\;\omega\big((A-\lambda{\bf 1})(A-\lambda{\bf 1})^*\big)\\
&=\;\omega\big((A^*-\overline{\lambda}{\bf 1})^*(A^*-\overline{\lambda}{\bf 1})\big)\;.
\end{aligned}
$$
Then, by using again Proposition \ref{prop:quad=0} one concludes the proof.
\qed

\subsection{Eigenstates and spectrum}\label{sec:eig-spec}
Let us recall that the \emph{spectrum} of $A\in\rr{A}$
is defined as  \cite[Definition 2.2.1]{bratteli-robinson-87}
\[
{\rm Spec}(A)\;:=\;\{\lambda\in\C\;|\; (A-\lambda{\bf 1})\;\text{is not invertible in}\;\rr{A}\}\;.
\]

\medskip

The next result shows that the eigenvalues are elements of the spectrum. 
\begin{proposition}\label{prop_ev_01}
Let $A\in\rr{A}$. Then, ${\rm Eig}(A)\subseteq{\rm Spec}(A)$.
\end{proposition}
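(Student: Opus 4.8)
The plan is to prove the contrapositive inclusion: I will show that if $\lambda\notin{\rm Spec}(A)$, then $\lambda\notin{\rm Eig}(A)$. So I would start by assuming $A-\lambda{\bf 1}$ is invertible in $\rr{A}$, and let $\omega\in\n{E}_{\rr{A}}$ be any candidate eigenstate of $A$ with eigenvalue $\lambda$; the goal is to derive a contradiction from these two hypotheses.

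The key observation is that invertibility of $A-\lambda{\bf 1}$ forces the unit into the left-ideal $\rr{J}_{A,\lambda}$. Indeed, taking $B=(A-\lambda{\bf 1})^{-1}$ one has ${\bf 1}=(A-\lambda{\bf 1})^{-1}(A-\lambda{\bf 1})=B(A-\lambda{\bf 1})\in\rr{J}_{A,\lambda}$. By Lemma \ref{lemma_01}, an eigenstate $\omega$ with eigenvalue $\lambda$ must vanish identically on $\rr{J}_{A,\lambda}$, whence $\omega({\bf 1})=0$. This contradicts the normalization $\omega({\bf 1})=1$ enjoyed by every state. Therefore no eigenstate with eigenvalue $\lambda$ can exist, that is $\lambda\notin{\rm Eig}(A)$, which is precisely the claim.

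An equivalent route, should one prefer to avoid the ideal, is through Proposition \ref{prop:quad=0}: if $A-\lambda{\bf 1}$ is invertible then so is the positive element $(A-\lambda{\bf 1})^*(A-\lambda{\bf 1})$, hence it is bounded below by $c{\bf 1}$ for some $c>0$; positivity and normalization of $\omega$ then give $\omega\big((A-\lambda{\bf 1})^*(A-\lambda{\bf 1})\big)\geqslant c>0$, contradicting the vanishing demanded by Proposition \ref{prop:quad=0}. Both arguments are short; I would present the first one since it is the most transparent.

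There is essentially no serious obstacle once Lemma \ref{lemma_01} is in hand. The only point worth pausing on is verifying that the element exhibiting ${\bf 1}\in\rr{J}_{A,\lambda}$ really has the required left-multiplier form $B(A-\lambda{\bf 1})$, and not $(A-\lambda{\bf 1})B$; this is automatic because $(A-\lambda{\bf 1})^{-1}$ is a genuine two-sided inverse, so the identity ${\bf 1}=B(A-\lambda{\bf 1})$ holds with $B=(A-\lambda{\bf 1})^{-1}$ on the correct side.
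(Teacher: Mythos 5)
Your proof is correct and is essentially the same argument as the paper's: both use Lemma \ref{lemma_01} with $B=(A-\lambda{\bf 1})^{-1}$ to force $\omega({\bf 1})=0$, contradicting the normalization of states (the paper phrases it as a direct contradiction rather than a contrapositive, which is immaterial). Your alternative route via Proposition \ref{prop:quad=0} and the lower bound $c{\bf 1}$ on the invertible positive element is also valid, but the argument you chose to present is the one in the paper.
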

\proof
Let $\lambda\in {\rm Eig}(A)$ and  $\omega\in \n{E}_{\rr{A}}$ be the associated eigenstate.
By contradiction, let us suppose that $B=(A-\lambda{\bf 1})^{-1}\in\rr{A}$.
Then, in view of Lemma \ref{lemma_01}, one would have that
$$
\omega({\bf 1})\;=\;\omega\left(B(A-\lambda{\bf 1})\right)\;=\;0
$$
which is a contradiction. Therefore $A-\lambda{\bf 1}$ cannot be  invertible.
\qed

\medskip

It is interesting to have  a criterion that guarantees the equality
\begin{equation}\label{eq:Sp-EV-01}
{\rm Eig}(A)\;=\;{\rm Spec}(A)\;\qquad A\in\rr{A}\;.
\end{equation}
For that, we need the following   preliminary result.
\begin{lemma}\label{lemma_02}
Let $A\in\rr{A}$. Then,
$\lambda\in {\rm Eig}(A)$ if and only if ${\bf 1}\notin \rr{J}_{A,\lambda}$.  
\end{lemma}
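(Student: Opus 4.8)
The plan is to reduce everything to the existence of a state that annihilates the left ideal $\rr{J}_{A,\lambda}$. Indeed, by Lemma \ref{lemma_01}, $\lambda\in{\rm Eig}(A)$ holds precisely when there is a state $\omega\in\n{E}_{\rr{A}}$ with $\omega|_{\rr{J}_{A,\lambda}}\equiv 0$. With this reformulation the forward implication is immediate: any such $\omega$ satisfies $\omega({\bf 1})=1\neq 0$, so ${\bf 1}$ cannot belong to $\rr{J}_{A,\lambda}$. The content of the statement therefore lies in the converse, namely in producing an annihilating state out of the single hypothesis ${\bf 1}\notin\rr{J}_{A,\lambda}$.

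For the converse I would first translate the membership ${\bf 1}\in\rr{J}_{A,\lambda}$ into a left-invertibility statement for $C:=A-\lambda{\bf 1}$. On the one hand, if ${\bf 1}\in\rr{J}_{A,\lambda}=\overline{\rr{A}C}$, then $B_nC\to{\bf 1}$ for some sequence $(B_n)$, so $B_nC$ is invertible for $n$ large (being within distance $<1$ of ${\bf 1}$), and $C$ admits a left inverse $D:=(B_nC)^{-1}B_n$ with $DC={\bf 1}$. A left inverse in turn forces $C^*C$ to be invertible: from $DC={\bf 1}$ one gets ${\bf 1}=(DC)^*(DC)\leqslant\|D\|^2\,C^*C$, whence $C^*C\geqslant\|D\|^{-2}{\bf 1}$. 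Conversely, if $C^*C$ is invertible then $D:=(C^*C)^{-1}C^*$ is a left inverse and ${\bf 1}=DC\in\rr{J}_{A,\lambda}$. Summarizing, ${\bf 1}\in\rr{J}_{A,\lambda}$ if and only if $C^*C$ is invertible.

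It then remains to observe that ${\bf 1}\notin\rr{J}_{A,\lambda}$ means $C^*C$ is not invertible, i.e. $0\in{\rm Spec}(C^*C)$ since $C^*C\geqslant 0$. At this point I would construct the desired state: on the abelian $C^*$-subalgebra generated by ${\bf 1}$ and $C^*C$, which by Gelfand theory is identified with the continuous functions on ${\rm Spec}(C^*C)$, the evaluation at the point $0$ is a state sending $C^*C$ to $0$; extending it to a state $\omega$ on $\rr{A}$ by Hahn--Banach (an extension of norm one with $\omega({\bf 1})=1$ is automatically positive) yields $\omega(C^*C)=0$. Proposition \ref{prop:quad=0} then guarantees that $\omega$ is an eigenstate of $A$ with eigenvalue $\lambda$, so $\lambda\in{\rm Eig}(A)$, completing the converse. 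The main obstacle is exactly this last construction: the left-invertibility equivalences are purely algebraic, whereas producing the annihilating state genuinely requires the interplay between positivity, the spectral picture supplied by the functional calculus, and the extension property of states.
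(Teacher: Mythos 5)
Your proof is correct, and the converse direction takes a genuinely different route from the paper's. The paper defines $\widetilde{\omega}(\alpha{\bf 1}+J):=\alpha$ on the subalgebra $\C{\bf 1}+\rr{J}_{A,\lambda}$ (well defined exactly because ${\bf 1}\notin\rr{J}_{A,\lambda}$), extends it to a state on $\rr{A}$ by \cite[Proposition 2.3.24]{bratteli-robinson-87}, and concludes with Lemma \ref{lemma_01}, since the extension annihilates the whole ideal. You instead prove the purely algebraic equivalence: ${\bf 1}\in\rr{J}_{A,\lambda}$ if and only if $C:=A-\lambda{\bf 1}$ is left invertible, if and only if $C^*C$ is invertible, so that the hypothesis becomes $0\in{\rm Spec}(C^*C)$; you then take evaluation at $0$ on the commutative subalgebra $C^*({\bf 1},C^*C)\simeq C({\rm Spec}(C^*C))$ (implicitly using spectral permanence, so that the spectrum computed in the subalgebra agrees with that in $\rr{A}$), extend by Hahn--Banach with automatic positivity of a norm-one unital functional, and finish with Proposition \ref{prop:quad=0} instead of Lemma \ref{lemma_01}. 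Both arguments rest on the same functional-analytic engine (Hahn--Banach plus automatic positivity), but they differ in what gets extended, and each has its advantages. In your version the positivity of the unextended functional is manifest, since it is a character, whereas the paper's $\widetilde{\omega}$ requires a separate check (a positive element $\alpha{\bf 1}+J$ with $\alpha<0$ would force a positive invertible element, hence ${\bf 1}$, into the left ideal). Your left-invertibility equivalence also yields Proposition \ref{prop_ev_02}, namely ${\rm Spec}_L(A)={\rm Eig}(A)$, essentially for free, because it identifies ${\bf 1}\notin\rr{J}_{A,\lambda}$ with the failure of left invertibility of $A-\lambda{\bf 1}$; the paper proves that separately via the norm estimate $\|J-{\bf 1}\|\geqslant 1$ on the dense subideal. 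What the paper's route buys is brevity and a marginally more general fact: any closed left ideal not containing the unit is annihilated by some state, with no reference to the particular generator $A-\lambda{\bf 1}$.
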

\proof
The implication $(\Rightarrow)$ is a direct consequence of Lemma \ref{lemma_01}. In fact, if 
$\omega\in \n{E}_{\rr{A}}$ is an eigenstate of $A$ associated to  $\lambda\in {\rm Eig}(A)$, then $\omega(J)=0$ for every $J\in \rr{J}_{A,\lambda}$, and $\omega({\bf 1})=1$ by definition of states. Consequently ${\bf 1}\notin \rr{J}_{A,\lambda}$. The implication $(\Leftarrow)$ requires the use of 
\cite[Proposition 2.3.24]{bratteli-robinson-87} which is a consequence of  the Hahn-Banach theorem. Let ${\bf 1}\notin \rr{J}_{A,\lambda}$ and consider the unital $C^*$-algebra 
$$
\rr{B}_{A,\lambda}\;:=\;\C{\bf 1}\;+\;\rr{J}_{A,\lambda}\;\subseteq\;\rr{A}\;
$$ 
and  the state $\widetilde{\omega}:\rr{B}_{A,\lambda}\to \C$  defined by
$$
\widetilde{\omega}(\alpha{\bf 1}+J)\;=\;\alpha\;,\qquad \forall\;\alpha\in\C\;,\quad \forall\;J\in\rr{J}_{A,\lambda}\;.
$$
Then, there exists a state $\omega\in \n{E}_{\rr{A}}$ which extends $\widetilde{\omega}$. In particular, this means that 
$\omega|_{\rr{J}_{A,\lambda}}\equiv 0$ and in turn $\lambda\in {\rm Eig}(A)$ in view of Lemma \ref{lemma_01}.
\qed

\medskip

Following \cite[Chapter VII, Definition 3.1]{conway-90} let us introduce the \emph{left spectrum} of $A\in\rr{A}$ defined as
\[
{\rm Spec}_L(A)\;:=\;\{\lambda\in\C\;|\; (A-\lambda{\bf 1})\;\text{is not left-invertible in}\;\rr{A}\}\;.
\]
The \emph{right spectrum} ${\rm Spec}_R(A)$ is defined similarly.
From the definitions above, it holds true that
$$
{\rm Spec}(A)\;=\;{\rm Spec}_L(A)\;\cup\; {\rm Spec}_R(A)\;.
$$
Moreover one has that ${\rm Spec}_L(A)={\rm Spec}_R(A^*)$\;.

\medskip

\begin{remark}\label{rk:L-R-spec}
Let $\rr{A}=\rr{B}(\s{H})$ be the  $C^*$-algebra of bounded operators over some Hilbert space $\s{H}$ and $A\in\rr{A}$. Then, in view of \cite[Chapter XI, Proposition 1.1]{conway-90} one has that
${\rm Spec}_L(A)$ 
and ${\rm Spec}_R(A)$ coincide with the \emph{approximate point spectrum} and the \emph{surjective spectrum} of $A$, respectively .
\hfill $\blacktriangleleft$
\end{remark}

\medskip

\begin{proposition}\label{prop_ev_02}
Let $A\in\rr{A}$. Then, ${\rm Spec}_L(A)={\rm Eig}(A)$.
\end{proposition}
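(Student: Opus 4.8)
The plan is to combine Lemma \ref{lemma_02}, which characterizes $\lambda\in {\rm Eig}(A)$ through the condition ${\bf 1}\notin \rr{J}_{A,\lambda}$, with an elementary description of left-invertibility in terms of the \emph{same} left-ideal. Writing $C:=A-\lambda{\bf 1}$, I would reduce the whole statement to the single equivalence
$$
{\bf 1}\in \rr{J}_{A,\lambda}\quad\Longleftrightarrow\quad C\ \text{is left-invertible in}\ \rr{A}\;,
$$
since negating both sides and invoking Lemma \ref{lemma_02} then yields ${\rm Eig}(A)={\rm Spec}_L(A)$ at once.

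The easy implication $(\Leftarrow)$ is immediate: if $DC={\bf 1}$ for some $D\in\rr{A}$, then ${\bf 1}=DC$ lies in $\rr{A}C\subseteq\overline{\rr{A}C}=\rr{J}_{A,\lambda}$, recalling that by definition $\rr{J}_{A,\lambda}$ is the closure of $\{BC\;|\;B\in\rr{A}\}$.

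The substantive implication $(\Rightarrow)$ requires upgrading membership of ${\bf 1}$ in the \emph{closure} $\rr{J}_{A,\lambda}=\overline{\rr{A}C}$ to the existence of an honest left inverse. For this I would invoke the standard Neumann-series fact that, in a unital Banach algebra, any element lying within distance strictly less than $1$ from ${\bf 1}$ is invertible. Concretely, if ${\bf 1}\in\overline{\rr{A}C}$ then there exists $B\in\rr{A}$ with $\|BC-{\bf 1}\|<1$; hence $BC$ is invertible in $\rr{A}$, and $(BC)^{-1}B$ is a left inverse of $C$. This passage from the closure to an exact left inverse is the only genuinely nontrivial point; everything else amounts to unwinding definitions.

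Finally, negating the displayed equivalence shows that ${\bf 1}\notin \rr{J}_{A,\lambda}$ if and only if $C=A-\lambda{\bf 1}$ fails to be left-invertible, i.e.\ if and only if $\lambda\in{\rm Spec}_L(A)$. Chaining this with Lemma \ref{lemma_02} gives the chain of equivalences $\lambda\in{\rm Eig}(A)\Leftrightarrow{\bf 1}\notin\rr{J}_{A,\lambda}\Leftrightarrow\lambda\in{\rm Spec}_L(A)$, which establishes ${\rm Eig}(A)={\rm Spec}_L(A)$, as claimed.
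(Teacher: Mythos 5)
Your proof is correct and follows essentially the same route as the paper's: both reduce the statement via Lemma \ref{lemma_02} to the equivalence between ${\bf 1}\in\rr{J}_{A,\lambda}$ and left-invertibility of $A-\lambda{\bf 1}$, and both rest on the fact that the open unit ball around ${\bf 1}$ consists of invertible elements. The only difference is presentational: you argue directly via the Neumann series, while the paper takes the contrapositive (non-left-invertibility forces $0\in{\rm Spec}(J)$ for every $J\in\rr{A}(A-\lambda{\bf 1})$, hence $\|J-{\bf 1}\|\geqslant 1$, so ${\bf 1}$ stays out of the closure).
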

\proof
The inclusion ${\rm Eig}(A)\subseteq {\rm Spec}_L(A)$ is a direct consequence of Lemma \ref{lemma_02}. To prove the opposite inclusion let $\lambda\in {\rm Spec}_L(A)$. In order to show that $\lambda\in{\rm Eig}(A)$ it is enough to prove that ${\bf 1}\notin \rr{J}_{A,\lambda}$ in view of Lemma \ref{lemma_02}.
Let $\rr{J}'_{A,\lambda}:=\rr{A}(A-\lambda{\bf 1})$ be the dense subideal of $\rr{J}_{A,\lambda}$ defined by elements of the form 
$J=B(A-\lambda{\bf 1})$ with $B\in \rr{A}$. Since $A-\lambda{\bf 1}$ is not left-invertible by definition, it follows that ${\bf 1}\notin \rr{J}'_{A,\lambda}$. 
Therefore no element of $\rr{J}'_{A,\lambda}$ can be invertible.
Thus, for every $J\in  \rr{J}'_{A,\lambda}$ it holds true that $0\in {\rm Spec}(J)$, or equivalently $1\in {\rm Spec}({\bf 1}-J)$. It follows that $\|J-{\bf 1}\|\geqslant 1$, and consequently
 ${\bf 1}\notin \rr{J}_{A,\lambda}$ by a continuity argument.
\qed

\medskip

Putting together the content of Propositions \ref{prop_ev_01} and  \ref{prop_ev_02} one gets
\begin{equation}\label{eq:Sp-EV-02}
{\rm Spec}_L(A)\;=\;{\rm Eig}(A)\;\subseteq\;{\rm Spec}(A)\;\qquad A\in\rr{A}\;.
\end{equation}
In order to pass from \eqref{eq:Sp-EV-02} to \eqref{eq:Sp-EV-01}
one needs some more condition that guarantees equality between  the left spectrum and the spectrum. 

\medskip

\begin{corollary}\label{corol_norm}
For every normal element $A\in\rr{A}$ the equality \eqref{eq:Sp-EV-01} holds true.
\end{corollary}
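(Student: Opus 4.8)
The plan is to upgrade the chain \eqref{eq:Sp-EV-02} to the full equality \eqref{eq:Sp-EV-01} by exploiting the symmetry between an element and its adjoint, which becomes rigid under normality. Since \eqref{eq:Sp-EV-02} already provides ${\rm Eig}(A)={\rm Spec}_L(A)\subseteq{\rm Spec}(A)$, the only thing left to establish is the reverse inclusion ${\rm Spec}(A)\subseteq{\rm Eig}(A)$. To this end I would start from the decomposition ${\rm Spec}(A)={\rm Spec}_L(A)\cup{\rm Spec}_R(A)$ and treat a point $\lambda\in{\rm Spec}(A)$ according to which piece it belongs to.

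If $\lambda\in{\rm Spec}_L(A)$, then $\lambda\in{\rm Eig}(A)$ by Proposition \ref{prop_ev_02} and there is nothing to prove. The substantive case is $\lambda\in{\rm Spec}_R(A)$. Here I would pass to the adjoint: since the involution is conjugate-linear, $(A-\lambda{\bf 1})^*=A^*-\overline{\lambda}{\bf 1}$, and $A-\lambda{\bf 1}$ fails to be right-invertible precisely when $A^*-\overline{\lambda}{\bf 1}$ fails to be left-invertible. Hence $\lambda\in{\rm Spec}_R(A)$ forces $\overline{\lambda}\in{\rm Spec}_L(A^*)$, and applying Proposition \ref{prop_ev_02} to $A^*$ yields $\overline{\lambda}\in{\rm Eig}(A^*)$. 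Let $\omega\in\n{E}_{\rr{A}}$ be an eigenstate of $A^*$ with eigenvalue $\overline{\lambda}$. The crucial observation is that $A^*$ is again normal, so Corollary \ref{cor:norm=0} applies to it and promotes $\omega$ to an eigenstate of $(A^*)^*=A$ with eigenvalue $\overline{\overline{\lambda}}=\lambda$. Thus $\lambda\in{\rm Eig}(A)$, which closes the reverse inclusion and gives \eqref{eq:Sp-EV-01}.

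The step I expect to carry the real weight is the transfer from the right spectrum back to ${\rm Eig}(A)$: a priori, points of ${\rm Spec}_R(A)$ need not be left spectral values, so without normality there is no reason for them to be eigenvalues. It is exactly Corollary \ref{cor:norm=0} — the normality-driven interchange of eigenstates of $A$ and of $A^*$ — that removes the asymmetry between left and right invertibility and collapses ${\rm Spec}_L(A)$, ${\rm Spec}_R(A)$ and ${\rm Spec}(A)$ onto one another. As a sanity check, the same conclusion can be reached by the classical route: for the normal element $N:=A-\lambda{\bf 1}$ one has $N^*N=NN^*$, so $N$ is left-invertible iff $N^*N$ is invertible iff $NN^*$ is invertible iff $N$ is right-invertible, whence left-invertibility, right-invertibility and invertibility coincide. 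I would nevertheless prefer the argument above, since it stays entirely within the eigenstate formalism developed in this section.
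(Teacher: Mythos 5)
Your proof is correct, but it follows a genuinely different route from the paper. The paper's proof is a two-line citation: it invokes \cite[Chapter XI, Proposition 1.4]{conway-90}, which asserts that for normal elements the left spectrum coincides with the full spectrum, and then concludes immediately from \eqref{eq:Sp-EV-02}. You instead keep everything inside the eigenstate formalism of the section: you decompose ${\rm Spec}(A)={\rm Spec}_L(A)\cup{\rm Spec}_R(A)$, dispatch the left part via Proposition \ref{prop_ev_02}, and handle the right part by passing to the adjoint --- $\lambda\in{\rm Spec}_R(A)$ gives $\overline{\lambda}\in{\rm Spec}_L(A^*)={\rm Eig}(A^*)$, and then Corollary \ref{cor:norm=0}, applied to the normal element $A^*$, turns the resulting eigenstate of $A^*$ back into an eigenstate of $A$ with eigenvalue $\lambda$. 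This is where normality enters in your argument, exactly as it should; note in passing that your conjugation bookkeeping (${\rm Spec}_R(A)$ corresponds to the \emph{conjugate} of ${\rm Spec}_L(A^*)$) is actually more careful than the paper's own unconjugated statement ${\rm Spec}_L(A)={\rm Spec}_R(A^*)$. What each approach buys: the paper's is shorter but outsources the key spectral fact to an external reference stated for Hilbert-space operators; yours is self-contained, uses only results already proved in the paper, and your closing remark (left-invertibility of a normal $N$ $\Leftrightarrow$ invertibility of $N^*N=NN^*$ $\Leftrightarrow$ right-invertibility) is in effect an elementary proof of the cited Conway proposition, so nothing is lost in generality by working in an abstract $C^*$-algebra.
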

\proof
For normal elements, the equality between  the left spectrum and spectrum is proved in \cite[Chapter XI, Proposition 1.4]{conway-90}. Then the result follows from \eqref{eq:Sp-EV-02}.
\qed

\medskip

Let $\rr{K}(\s{H})$ be the (non-unital) $C^*$-algebra of compact operators on the separable infinite-dimensional  Hilbert space $\s{H}$. As usual, in order to have a unit let us consider the standard  extension $\rr{K}^+(\s{H}):=\C{\bf 1}+\rr{K}(\s{H})$.
Since the structure of $\rr{K}(\s{H})$ does not depend on the specific (separable)  space $\s{H}$, we will use the short notations $\rr{K}$ and $\rr{K}^+$.
\medskip

\begin{corollary}
For every  $A\in\rr{K}^+$ the equality \eqref{eq:Sp-EV-01} holds true.
\end{corollary}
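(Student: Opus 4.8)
The plan is to leverage the chain \eqref{eq:Sp-EV-02}, namely ${\rm Spec}_L(A)={\rm Eig}(A)\subseteq{\rm Spec}(A)$, which already reduces the statement to the single inclusion ${\rm Spec}(A)\subseteq{\rm Spec}_L(A)$. Equivalently, I would show that for every $A\in\rr{K}^+$ and every $\lambda\in\C$, if $B:=A-\lambda{\bf 1}$ is left-invertible in $\rr{K}^+$ then it is already invertible in $\rr{K}^+$; in other words, that for these algebras left-invertibility automatically upgrades to two-sided invertibility. This is exactly the feature that fails for a general $C^*$-algebra (the unilateral shift is left- but not two-sided invertible) and whose restoration by the compact structure is the heart of the argument.

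First I would write $B=\beta{\bf 1}+K$ with $\beta\in\C$ and $K\in\rr{K}$, using that every element of $\rr{K}^+$ has the form $\alpha{\bf 1}+K$. Applying the canonical quotient $*$-homomorphism $\sigma\colon\rr{K}^+\to\rr{K}^+/\rr{K}\cong\C$, $\sigma(\alpha{\bf 1}+K)=\alpha$, and using that a unital homomorphism sends left-invertible elements to left-invertible elements, I obtain that $\sigma(B)=\beta$ is left-invertible in $\C$, hence $\beta\neq0$. Thus $B$ is a nonzero scalar plus a compact operator.

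Next, realizing $\rr{K}^+$ as a unital $C^*$-subalgebra of $\rr{B}(\s{H})$, a left inverse $C\in\rr{K}^+$ of $B$ satisfies $CB={\bf 1}$ in $\rr{B}(\s{H})$ as well, so $B$ is injective on $\s{H}$. Since $\beta^{-1}B={\bf 1}+\beta^{-1}K$ is of the form ${\bf 1}+K'$ with $K'$ compact, the Fredholm alternative (the Riesz theory for operators of the form ${\bf 1}+K'$ with $K'$ compact) applies: injectivity forces surjectivity, hence $B$ is bijective and therefore invertible in $\rr{B}(\s{H})$. Finally, by spectral permanence for the unital $C^*$-subalgebra $\rr{K}^+\subseteq\rr{B}(\s{H})$, invertibility of $B$ in $\rr{B}(\s{H})$ implies invertibility of $B$ in $\rr{K}^+$; alternatively, one checks directly from the Riesz theory that $({\bf 1}+\beta^{-1}K)^{-1}={\bf 1}+M$ for some $M\in\rr{K}$, so that $B^{-1}\in\rr{K}^+$. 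In either case $\lambda\notin{\rm Spec}(A)$, which yields ${\rm Spec}(A)\subseteq{\rm Spec}_L(A)$ and closes the argument.

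I expect the main obstacle to be precisely the passage from left-invertibility to genuine invertibility: everything hinges on the Fredholm alternative, which is what distinguishes $\rr{K}^+$ (where the non-scalar part is always compact, forcing Fredholm index zero) from a generic $C^*$-algebra, and this is what makes the equality \eqref{eq:Sp-EV-01} survive even for non-normal elements. The two auxiliary points—that the scalar $\beta$ cannot vanish, and that the resulting inverse remains inside $\rr{K}^+$—are comparatively routine, the first via the symbol homomorphism $\sigma$ and the second via spectral permanence.
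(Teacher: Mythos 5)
Your proof is correct, but it follows a genuinely different route from the paper's. The paper also starts from Proposition \ref{prop_ev_01} (so that only ${\rm Spec}(A)\subseteq{\rm Eig}(A)$ needs proving), writes $A=T+z{\bf 1}$ with $T\in\rr{K}$, and then splits into two cases: for $\lambda\neq z$ it applies the Riesz--Schauder theorem to produce a genuine eigenvector $\psi$ of $T$, hence of $A$, and converts it into an eigenstate via the vector state of Theorem \ref{teo_rep}; only in the boundary case $\lambda=z$ does it invoke Proposition \ref{prop_ev_02}, deriving a contradiction from ${\bf 1}=ST$ with $T$ compact. You, by contrast, never construct an eigenstate at all: you rely entirely on the chain \eqref{eq:Sp-EV-02} and prove the single operator-theoretic statement that in $\rr{K}^+$ left-invertibility upgrades to two-sided invertibility. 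Your two ingredients are in fact the paper's ingredients repackaged: the symbol-homomorphism argument forcing $\beta\neq 0$ is the paper's case $\lambda=z$ in disguise (both amount to saying that a compact operator cannot have a left inverse when $\s{H}$ is infinite-dimensional, which is also what makes $\sigma$ well defined), while the Fredholm alternative for ${\bf 1}+\beta^{-1}K$ is the same Riesz theory underlying the Riesz--Schauder theorem, used through \emph{injective implies surjective} rather than through \emph{nonzero spectral points are eigenvalues}. What the paper's route buys is explicitness: for $\lambda\neq z$ the eigenstate is exhibited concretely as the vector state $\omega_\psi$ attached to an honest eigenvector. What your route buys is uniformity (no case split in the conclusion) and a sharper conceptual point: the only obstruction to \eqref{eq:Sp-EV-01} is the gap between left and two-sided invertibility, and index-zero Fredholm theory closes that gap in $\rr{K}^+$ --- which also explains at a glance why the unilateral shift of Example \ref{ex:uni-shif} violates \eqref{eq:Sp-EV-01} while elements of $\rr{K}^+$ cannot.
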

\proof
Let $A=T+z{\bf 1}\in \rr{K}^+$ for  some $z\in \mathbb{C}$ and $T\in \rr{K}$. In view of Proposition \ref{prop_ev_01} we only need to prove 
that ${\rm Spec}(A)\subseteq{\rm Eig}(A)$. Let
 $\lambda \in {\rm Spec}(A) $.  Then $T-(\lambda-z){\bf 1}$  is not invertible,  hence $\lambda-z\in {\rm Spec}(T)$. If $\lambda\neq z$, then by the Riesz-Schauder theorem \cite[Theorem VI.15]{reed-simon-I}  there exists a $\psi\neq 0$ in $\mathcal{H}$ such that $T\psi=(\lambda-z)\psi$. Therefore,  $A\psi=\lambda \psi$ and $\lambda \in  {\rm Eig}(A)$ as we will see in Theorem \ref{teo_rep}.
 Now, let $\lambda=z$ and assume that $\lambda \notin {\rm Eig}(A)$. Then, from Proposition \ref{prop_ev_02} one infers the existence of a bounded operator $S$ such that
    $$
    {\bf 1}\;=\;S(A-\lambda{\bf 1})\;=\;ST\;.
    $$
The last equality would 
    imply that  the identity ${\bf 1}$ is compact since $\rr{K}$ is an ideal. However, this is a  contradiction whenever $\s{H}$ is infinite-dimensional.
    This completes the proof.
\qed

\begin{example}[Unilateral shift]\label{ex:uni-shif}
It is not hard to check that there are operators for which the equality \ref{eq:Sp-EV-01} does not hold. Let us
consider the   unilateral shift operator $\rr{s}$ on $\ell^2(\mathbb{N})$ defined by
$$\rr{s}\;:\;(n_1,n_2,...)\;\longmapsto\;(0,n_1,n_2,...)\;,\qquad\forall\;n\;:=\;(n_1,n_2,...)\in \ell^2(\mathbb{N})\;.
$$
Its adjoint is given by $\rr{s}^*:(n_1,n_2,...)\mapsto(n_2,n_3,...)$.
From \cite[Chapter VII, Proposition 6.5 \& Corollary 6.6]{conway-90}
one knows that ${\rm Spec}(\rr{s})={\n{D}_1}={\rm Spec}(\rr{s}^*)$ where $\n{D}_1:=\{\lambda\in\C\;|\;|\lambda|\leqslant1\}$ is the closed unit disk.
On the other hand, by   \cite[Chapter XI, Proposition 1.1]{conway-90}  one gets that ${\rm Spec}_L(\rr{s})={\rm Spec}_{a.p.}(\rr{s})=\partial \n{D}_1\equiv\n{S}^1$ 
where  $\n{S}^1:=\{\lambda\in\C\;|\;|\lambda|=1\}$. In view of Proposition
\ref{prop_ev_02}
  one finally obtains that ${\rm Eig}(\rr{s})\simeq\n{S}^1$ showing that ${\rm Eig}(\rr{s})\neq {\rm Spec}(\rr{s})$. It is worth to point out that the operator $\rr{s}^*$ meets equation \eqref{eq:Sp-EV-01} since it holds true that  ${\rm Spec}_L(\rr{s}^*)={\n{D}_1}$.
\hfill $\blacktriangleleft$
\end{example}

\medskip

The next result uses the functional calculus for normal elements of $A\in\rr{A}$. In this case $A$ and $A^*$  
generate a commutative sub-$C^*$-algebra of $\rr{A}$ which is isomorphic to the $C^*$-algebra $C({\rm Spec}(A))$ of continuous functions over the compact set ${\rm Spec}(A)\subset\C$ in view of the {Gelfand-Nainmark theorem} \cite[Section VII.2]{conway-90}. In particular, for every continuous function $f\in C({\rm Spec}(A))$ there is an associated (normal) element $f(A)\in \rr{A}$ with spectrum $f({\rm Spec}(A))$ (spectral mapping theorem). From Corollary \ref{corol_norm} one gets that 
\begin{equation}\label{eq:Sp-EV-0101}
{\rm Eig}(f(A))\;=\;f({\rm Spec}(A))\;.
\end{equation}
The next result relates the eigenstates of $A$ with the eigenstates of $f(A)$ when $A$ is a normal element.

\begin{theorem}\label{teoF-calc}
Let $A\in\rr{A}$ be a normal element and $f\in C({\rm Spec}(A))$. If $\omega$ is an eigenstate of $A$ with eigenvalue $\lambda\in\C$ then $\omega$ is also an eigenstate of $f(A)$ with eigenvalue $f(\lambda)\in\C$.
\end{theorem}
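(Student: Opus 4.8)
The plan is to reduce the claim, via Proposition \ref{prop:quad=0}, to a single scalar identity and then prove that identity by combining the normality of $A$ with a density argument. Concretely, set $g:=(\overline{f}-\overline{f(\lambda)})(f-f(\lambda))=|f-f(\lambda)|^2\in C({\rm Spec}(A))$, which is nonnegative and satisfies $g(\lambda)=0$. Since $f(A)$ is normal with $f(A)^*=\overline{f}(A)$, the functional calculus gives $g(A)=(f(A)-f(\lambda){\bf 1})^*(f(A)-f(\lambda){\bf 1})$. By Proposition \ref{prop:quad=0}, $\omega$ is an eigenstate of $f(A)$ with eigenvalue $f(\lambda)$ if and only if $\omega(g(A))=0$, so it suffices to establish this last equality.

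First I would record the behaviour of $\omega$ on $*$-monomials in $A$. By hypothesis $\omega(BA)=\lambda\,\omega(B)$ for all $B$, and by Corollary \ref{cor:norm=0} the eigenstate $\omega$ is also an eigenstate of $A^*$ with eigenvalue $\overline{\lambda}$, whence $\omega(BA^*)=\overline{\lambda}\,\omega(B)$ for all $B$. Because $A$ is normal, $A$ and $A^*$ commute, so every $*$-polynomial $p(A,A^*)$ is a linear combination of ordered monomials $A^m(A^*)^n$. Peeling factors off from the right one at a time and applying the two relations above repeatedly yields $\omega\big(B\,A^m(A^*)^n\big)=\lambda^m\overline{\lambda}^{\,n}\,\omega(B)$; taking $B={\bf 1}$ and using linearity gives $\omega(p(A,A^*))=p(\lambda,\overline{\lambda})$ for every $*$-polynomial $p$.

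Next I would pass from polynomials to the continuous function $g$ by approximation. By the Stone--Weierstrass theorem the $*$-polynomials in $z,\overline{z}$ are uniformly dense in $C({\rm Spec}(A))$, so choose $p_k\to g$ uniformly on ${\rm Spec}(A)$. The Gelfand--Naimark functional calculus is isometric, hence $\|p_k(A,A^*)-g(A)\|=\sup_{z\in{\rm Spec}(A)}|p_k(z,\overline{z})-g(z)|\to 0$. Since $\omega$ is norm-continuous of norm one, $\omega(g(A))=\lim_k\omega(p_k(A,A^*))=\lim_k p_k(\lambda,\overline{\lambda})$. Here $\lambda\in{\rm Eig}(A)\subseteq{\rm Spec}(A)$ by Proposition \ref{prop_ev_01}, so $\lambda$ lies in the set on which the convergence $p_k\to g$ holds, giving $\lim_k p_k(\lambda,\overline{\lambda})=g(\lambda)=0$. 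Thus $\omega(g(A))=0$, and Proposition \ref{prop:quad=0} finishes the argument.

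The step I expect to require the most care is the transfer of information between the two faces of the functional calculus: the evaluation $p\mapsto p(\lambda,\overline{\lambda})$ lives on scalars, while $p\mapsto\omega(p(A,A^*))$ lives on the algebra, and the bridge is exactly the isometry of the Gelfand--Naimark calculus together with the fact that $\lambda$ is genuinely a spectral point of $A$. If $\lambda$ failed to lie in ${\rm Spec}(A)$ the final evaluation $p_k(\lambda,\overline{\lambda})\to g(\lambda)$ would be unjustified, so invoking Proposition \ref{prop_ev_01} is essential rather than cosmetic. An alternative route avoiding Stone--Weierstrass would work directly in the GNS representation, where $\omega(BA)=\lambda\,\omega(B)$ together with Corollary \ref{cor:norm=0} force $\pi_\omega(A)\psi_\omega=\lambda\psi_\omega$ and $\pi_\omega(A^*)\psi_\omega=\overline{\lambda}\psi_\omega$, so that $\pi_\omega(f(A))\psi_\omega=f(\lambda)\psi_\omega$ by continuity of the functional calculus; I nevertheless prefer the purely algebraic version above, since it relies only on results already established in this section.
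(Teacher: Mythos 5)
Your proof is correct, and its computational core is the same as the paper's: both establish $\omega\big(BA^{n}(A^{*})^{m}\big)=\lambda^{n}\overline{\lambda}^{m}\,\omega(B)$ by induction using Corollary \ref{cor:norm=0}, pass to polynomials by linearity, and then invoke Stone--Weierstrass together with the continuity of $\omega$ and the isometry of the functional calculus. The genuine difference lies in how the limiting step is organized. The paper pushes the limit through the full eigenstate identity, showing $\omega(Bf(A))=f(\lambda)\,\omega(B)$ for \emph{every} $B\in\rr{A}$ at once; you instead reduce the whole theorem, via Proposition \ref{prop:quad=0}, to the single scalar identity $\omega(g(A))=0$ for $g=|f-f(\lambda)|^{2}$, so the approximation argument is only needed at $B={\bf 1}$ and for one specific function, with the Cauchy--Schwarz mechanism inside Proposition \ref{prop:quad=0} restoring the general $B$ afterwards. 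Your version also makes explicit a point the paper's one-line conclusion leaves implicit: the evaluation $p_k(\lambda,\overline{\lambda})\to g(\lambda)$ is legitimate only because $\lambda\in{\rm Eig}(A)\subseteq{\rm Spec}(A)$ (Proposition \ref{prop_ev_01}), i.e.\ because $\lambda$ lies in the set on which the uniform convergence takes place. So your route buys a slightly cleaner limit and a fully justified evaluation step, at the cost of invoking one more result from the section; the paper's route is more direct but terser about exactly these two points.
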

\proof
By using induction on $n$ and $m$ and Corollary \ref{cor:norm=0} one obtains for any monomial $p_{n,m}(x)\;:=\;x^n\overline{x}^m$ that
$$\omega(Bp_{n,m}(A))\;=\;\omega(BA^n(A^*)^m)\;=\;\lambda^n\overline{\lambda}^m\omega(B)\;=\;p_{n,m}(\lambda)\omega(b).$$
Then, by linearity one gets that $\omega$ is   an eigenstate of $p(A)$ with  eigenvalue $p(\lambda)$ for every polynomial $p$.
The final result follows from the Stone-Weierstrass theorem and the continuity of $\omega$.
\qed

\medskip

\begin{corollary}\label{cor_commut}
Let $A\in \rr{A}$ be a normal 
and $\omega\in \n{E}_{\rr{A}}$ an eigenstate of $A$ related to the eigenvalue $\lambda\in{\rm Spec}(A)$. Let $f:\C\to\C$ be a continuous function. Then,
$$
\omega([B,f(A)])\;=\;\omega(Bf(A))\;-\; \omega(f(A)B)\;=\;0
$$
for every $B\in \rr{A}$.
\end{corollary}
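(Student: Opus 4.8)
The plan is to reduce the vanishing of $\omega([B,f(A)])$ to showing that both $\omega(Bf(A))$ and $\omega(f(A)B)$ equal $f(\lambda)\,\omega(B)$. The first identity is immediate: since $A$ is normal, Theorem \ref{teoF-calc} (applied to the restriction $f|_{{\rm Spec}(A)}\in C({\rm Spec}(A))$) guarantees that $\omega$ is an eigenstate of $f(A)$ with eigenvalue $f(\lambda)$, so by Definition \ref{def_00} one has $\omega(Bf(A))=f(\lambda)\,\omega(B)$ for every $B\in\rr{A}$. The whole content of the corollary is therefore the companion identity $\omega(f(A)B)=f(\lambda)\,\omega(B)$, in which the function of $A$ now sits on the \emph{left}.

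To handle the left-hand multiplication I would exploit the self-adjointness structure of states. First note that $f(A)$ is again a normal element, since $f(A)$ and $f(A)^*=\overline{f}(A)$ both lie in the commutative $C^*$-subalgebra generated by $A$ and $A^*$ and hence commute. Applying Corollary \ref{cor:norm=0} to the normal element $f(A)$ (with eigenvalue $f(\lambda)$) then yields that $\omega$ is also an eigenstate of $f(A)^*$ with eigenvalue $\overline{f(\lambda)}$; explicitly, $\omega(Bf(A)^*)=\overline{f(\lambda)}\,\omega(B)$ for all $B\in\rr{A}$.

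The remaining step is a conjugation argument using the Hermiticity of states, $\omega(C^*)=\overline{\omega(C)}$. Writing $\omega(f(A)B)=\overline{\omega\big((f(A)B)^*\big)}=\overline{\omega\big(B^*f(A)^*\big)}$ and invoking the eigenstate property of $f(A)^*$ just established gives
$$
\omega(f(A)B)\;=\;\overline{\overline{f(\lambda)}\,\omega(B^*)}\;=\;f(\lambda)\,\overline{\omega(B^*)}\;=\;f(\lambda)\,\omega(B)\;,
$$
where the last equality again uses $\overline{\omega(B^*)}=\omega(B)$. Combining this with $\omega(Bf(A))=f(\lambda)\,\omega(B)$ shows that the two terms of the commutator cancel, so $\omega([B,f(A)])=0$.

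I do not anticipate any serious obstacle: the argument is essentially bookkeeping resting on the already-established Theorem \ref{teoF-calc} and Corollary \ref{cor:norm=0}. The only points demanding a little care are the correct placement of complex conjugates in the conjugation step and the preliminary (but routine) observation that $f(A)$ is normal, which is precisely what licenses the use of Corollary \ref{cor:norm=0}.
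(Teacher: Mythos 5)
Your proposal is correct and follows essentially the same route as the paper's own proof: Theorem \ref{teoF-calc} gives $\omega(Bf(A))=f(\lambda)\,\omega(B)$, and the companion identity $\omega(f(A)B)=f(\lambda)\,\omega(B)$ is obtained by Hermiticity of states together with Corollary \ref{cor:norm=0} applied to $f(A)$. Your version is in fact slightly more careful, since you explicitly justify that $f(A)$ is normal before invoking Corollary \ref{cor:norm=0}, a point the paper leaves implicit.
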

\proof
From Theorem \ref{teoF-calc} one has that 
$$
\omega(Bf(A))\;=\;f(\lambda)\; \omega(B)\;.
$$
On the other hand, by Corollary \ref{cor:norm=0} it is also true that
$$
\begin{aligned}
\omega(f(A)B)\;&=\;\overline{\omega\left(B^*f(A)^*\right)}\;=\;\overline{\overline{f(\lambda)}\; \omega\left(B^*\right)}\\
&=\;f(\lambda)\;\overline{\omega\left(B^*\right)}\;=\;f(\lambda)\; \omega(B) \;.
\end{aligned}
$$
This completes the proof.
\qed

\subsection{Eigenstates and representations}\label{sec:eig-rep}
In this section, we will study the behavior of the notion of eigenstate under $\ast$-representations of the $C^*$-algebra $\rr{A}$ 
in the algebra $\rr{B}(\s{H})$ of bounded operators on the Hilbert space $\s{H}$.

\begin{theorem}\label{teo_rep}
Let $\pi:\rr{A}\to\rr{B}(\s{H})$ be a $\ast$-representation of $\rr{A}$ and
$$
\omega_\psi(A)\;:=\;\langle\psi,\pi(A)\psi\rangle_{\s{H}}\;,\qquad A\in\rr{A}
$$
be the vector state associated with the normalized vector $\psi\in\s{H}$.
Then, $\omega_\psi$ is an eigenstate of $A$ with eigenvalue $\lambda\in\C$ if and only if $\pi(A)\psi=\lambda\psi$.
\end{theorem}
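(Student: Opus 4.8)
The plan is to bypass the defining relation $\omega_\psi(BA)=\lambda\,\omega_\psi(B)$ altogether and instead route the argument through the quadratic characterization of Proposition \ref{prop:quad=0}: the state $\omega_\psi$ is an eigenstate of $A$ with eigenvalue $\lambda$ if and only if $\omega_\psi\big((A-\lambda{\bf 1})^*(A-\lambda{\bf 1})\big)=0$. Since this is a single scalar identity equivalent to the eigenstate property, it lets me treat both implications simultaneously rather than quantifying over all $B\in\rr{A}$ by hand.

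First I would evaluate this scalar using that $\pi$ is a unital $\ast$-homomorphism. Unitality guarantees $\pi({\bf 1})=\Id$, hence $\omega_\psi({\bf 1})=\|\psi\|^2=1$, so that $\omega_\psi$ is indeed a state; it also gives $\pi(A-\lambda{\bf 1})=\pi(A)-\lambda\Id$ and, for the adjoint, $\pi\big((A-\lambda{\bf 1})^*\big)=\pi(A)^*-\overline{\lambda}\Id=\big(\pi(A)-\lambda\Id\big)^*$. Applying $\pi$ to the positive element and using multiplicativity yields
\[
\omega_\psi\big((A-\lambda{\bf 1})^*(A-\lambda{\bf 1})\big)=\big\langle\psi,\big(\pi(A)-\lambda\Id\big)^*\big(\pi(A)-\lambda\Id\big)\psi\big\rangle_{\s{H}}=\big\|\big(\pi(A)-\lambda\Id\big)\psi\big\|_{\s{H}}^2.
\]

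The conclusion is then immediate: by Proposition \ref{prop:quad=0}, $\omega_\psi$ is an eigenstate of $A$ with eigenvalue $\lambda$ exactly when the right-hand side vanishes, and a norm in $\s{H}$ vanishes if and only if the vector itself is zero, i.e. $\pi(A)\psi=\lambda\psi$. The only genuinely one-directional content sits in the forward implication, where the vanishing of a \emph{scalar} expectation must be upgraded to the \emph{vector} equation $\pi(A)\psi=\lambda\psi$; this upgrade is precisely what the positivity built into Proposition \ref{prop:quad=0} buys us, since the quadratic form collapses to an honest squared norm. (The reverse implication can alternatively be checked by hand: if $\pi(A)\psi=\lambda\psi$ then $\omega_\psi(BA)=\langle\psi,\pi(B)\pi(A)\psi\rangle_{\s{H}}=\lambda\,\omega_\psi(B)$, but the unified route above makes this separate computation unnecessary.) I expect no real obstacle beyond keeping track of the conjugate $\overline{\lambda}$ when adjoining $A-\lambda{\bf 1}$ and confirming that $\pi$ may be taken unital so that $\omega_\psi$ is normalized.
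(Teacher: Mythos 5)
Your proof is correct, and it reorganizes the argument rather than merely reproducing it. The forward implication is in substance identical to the paper's: the paper likewise derives
$0=\omega_\psi\big((A-\lambda{\bf 1})^*(A-\lambda{\bf 1})\big)=\big\|(\pi(A)-\lambda{\bf 1})\psi\big\|_{\s{H}}^2$
(citing Lemma \ref{lemma_01}, which is exactly what the forward half of Proposition \ref{prop:quad=0} rests on) and concludes $\pi(A)\psi=\lambda\psi$. Where you genuinely diverge is the converse: the paper checks it by the one-line direct computation $\omega_\psi(BA)=\langle\psi,\pi(B)\pi(A)\psi\rangle_{\s{H}}=\lambda\,\omega_\psi(B)$, whereas you recycle the equivalence in Proposition \ref{prop:quad=0}, whose nontrivial half is powered by the Cauchy--Schwarz inequality for states. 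Your route buys economy and symmetry---a single scalar identity settles both implications, with no quantification over $B\in\rr{A}$---at the price of importing Cauchy--Schwarz where the paper needs only multiplicativity of $\pi$; conversely, the paper's split treatment keeps the easy direction visibly elementary and self-contained. One shared caveat: both you and the paper implicitly need $\omega_\psi$ to be normalized, i.e.\ $\pi({\bf 1})\psi=\psi$ (automatic when $\pi$ is unital, or nondegenerate); your closing remark handles this at the same level of rigor as the paper does, so it is not a gap.
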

\proof
Let us start with the implication $(\Rightarrow)$. If 
$\omega_\psi$ is an eigenstate then, in view of Lemma \ref{lemma_01}, one obtains
$$
0\;=\;\omega_\psi\left((A-\lambda{\bf 1})^*(A-\lambda{\bf 1})\right)\;=\;\|(\pi(A)-\lambda{\bf 1})\psi\|_{\s{H}}^2
$$
which implies $\pi(A)\psi=\lambda\psi$. 
The implication  $(\Leftarrow)$ follows from the direct computation
$$
\omega_\psi(BA)\;=\;\langle\psi,\pi(B)\pi(A)\psi\rangle_{\s{H}}\;=\;\lambda\;\langle\psi,\pi(B)\psi\rangle_{\s{H}}\;=\;\lambda\;
\omega_\psi(B)\;.
$$
This completes the proof.
\qed

\medskip

Let us recall that every state $\omega\in \n{E}_{\rr{A}}$ define the \emph{GNS representation} $(\pi_\omega, \s{H}_\omega, \psi_\omega)$
\cite[Chapter VIII, Theorem 5.14]{conway-90}. Here $\psi_\omega$ is the cyclic vector of the representation, and it holds true that
$$
\omega(A)\;=\;\langle\psi_\omega,\pi_\omega(A)\psi_\omega\rangle_{\s{H}_\omega}\;,\qquad A\in\rr{A}\;.
$$
The following result is an immediate consequence of Theorem \ref{teo_rep}.
\begin{corollary}\label{coro 2.15}
Let $A\in\rr{A}$ and $\omega\in \n{E}_{\rr{A}}$  an eigenstate of $A$ with eigenvalue $\lambda\in\C$. Let $(\pi_\omega, \s{H}_\omega, \psi_\omega)$
 be the GNS representation of $\omega$. Then
$$
\pi_\omega(A)\psi_\omega\;=\;\lambda\psi_\omega\;.
$$
\end{corollary}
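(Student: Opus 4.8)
The plan is to obtain this corollary as a direct specialization of Theorem \ref{teo_rep} to the GNS representation. The key observation is that the GNS construction produces, for a given state $\omega$, a triple $(\pi_\omega, \s{H}_\omega, \psi_\omega)$ in which $\psi_\omega$ is a (normalized) cyclic vector satisfying the reproducing property $\omega(A)=\langle\psi_\omega,\pi_\omega(A)\psi_\omega\rangle_{\s{H}_\omega}$ for all $A\in\rr{A}$. This is precisely the hypothesis required to invoke Theorem \ref{teo_rep} with the particular choices $\pi=\pi_\omega$, $\s{H}=\s{H}_\omega$, and $\psi=\psi_\omega$.

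First I would note that the vector state $\omega_{\psi_\omega}$ built from $\psi_\omega$ in the GNS representation coincides with $\omega$ itself, by the defining relation of the GNS construction recalled just before the statement. Hence $\omega$ is an eigenstate of $A$ with eigenvalue $\lambda$ if and only if $\omega_{\psi_\omega}$ is. Since by hypothesis $\omega$ is indeed such an eigenstate, Theorem \ref{teo_rep} applies in its implication $(\Rightarrow)$: the fact that $\omega_{\psi_\omega}$ is an eigenstate of $A$ with eigenvalue $\lambda$ forces
$$
\pi_\omega(A)\psi_\omega\;=\;\lambda\psi_\omega\;,
$$
which is exactly the claimed identity.

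There is essentially no obstacle here, since the work has already been done in Theorem \ref{teo_rep}; the only point requiring care is the identification $\omega=\omega_{\psi_\omega}$, which is guaranteed by the cyclic-vector normalization in the GNS theorem. The proof therefore reduces to a single line applying the previous theorem. For completeness one could explicitly recall that $\psi_\omega$ is normalized (so that $\omega({\bf 1})=\|\psi_\omega\|^2=1$), ensuring that the normalization hypothesis of Theorem \ref{teo_rep} is met.
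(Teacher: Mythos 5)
Your proposal is correct and follows exactly the paper's route: the paper states the corollary as an immediate consequence of Theorem \ref{teo_rep}, applied to the vector state defined by the cyclic vector $\psi_\omega$, which coincides with $\omega$ by the GNS reproducing property. Your added remark on the normalization $\omega({\bf 1})=\|\psi_\omega\|^2=1$ is a sound way of making that identification explicit.
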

\subsection{Relation between distinct eigenstates}\label{sec:eig-dist}
In this section, we will investigate the relation between eigenstates of the element $A\in\rr{A}$ related to distinct eigenvalues. Let us start with a definition. A collection of states $\{\omega_1,\ldots,\omega_N\}\in \n{E}_{\rr{A}}$ is called \emph{linearly independent} if
 $\sum_{k=1}^Na_k\omega_k\equiv0$ with $a_j\in\C$ implies 
$a_1=\ldots=a_N=0$.
\begin{theorem}
Any collection $\{\omega_1,\ldots,\omega_N\}\in \n{E}_{\rr{A}}$ of eigenstates of $A\in\rr{A}$ all having distinct eigenvalues $\{\lambda_1,\ldots,\lambda_N\}\subset\C$, is
linearly independent.
\end{theorem}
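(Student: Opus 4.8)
The plan is to reduce the linear independence of the states to the invertibility of a Vandermonde matrix built from the distinct eigenvalues. Suppose a vanishing linear combination $\sum_{k=1}^N a_k\omega_k\equiv0$ is given, with $a_k\in\C$; the goal is to force every $a_k$ to be zero. The key is to probe this identity against the powers of $A$, whose eigenstate values are explicitly computable.

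First I would record the effect of the eigenstate property on powers of $A$. Since each $\omega_k$ satisfies $\omega_k(BA)=\lambda_k\,\omega_k(B)$ for all $B\in\rr{A}$, a straightforward induction on $n$ (taking $B=A^{n-1}$ at the inductive step) gives
\[
\omega_k(A^n)\;=\;\lambda_k\,\omega_k(A^{n-1})\;=\;\cdots\;=\;\lambda_k^n\,\omega_k({\bf 1})\;=\;\lambda_k^n\;,
\]
where in the last equality I use that states are normalized, so $\omega_k({\bf 1})=1$.

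Next I would test the hypothesis $\sum_{k=1}^N a_k\omega_k\equiv0$ on the specific elements ${\bf 1}=A^0,\,A,\,A^2,\,\ldots,\,A^{N-1}$. Because the functional vanishes identically, evaluation at $A^n$ together with the computation above yields the homogeneous linear system
\[
\sum_{k=1}^N a_k\,\lambda_k^n\;=\;0\;,\qquad n=0,1,\ldots,N-1\;.
\]

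Finally I would observe that the coefficient matrix $\big(\lambda_k^{\,n}\big)_{n,k}$ is (the transpose of) a Vandermonde matrix, whose determinant equals $\prod_{1\leqslant j<k\leqslant N}(\lambda_k-\lambda_j)$. Since the eigenvalues $\lambda_1,\ldots,\lambda_N$ are pairwise distinct, this determinant is nonzero, so the system is invertible and admits only the trivial solution $a_1=\ldots=a_N=0$; this establishes the claimed linear independence. The argument is essentially forced once one decides to probe the relation with powers of $A$, so I do not expect a genuine obstacle here. The only points requiring mild care are that the eigenstate identity must be \emph{iterated} (not merely used once) in order to reach the values $\omega_k(A^n)=\lambda_k^n$, and that one must climb up to the exponent $N-1$ in order to obtain a square, invertible system. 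Note that, unlike the functional-calculus results of Theorem \ref{teoF-calc}, this argument does not require $A$ to be normal.
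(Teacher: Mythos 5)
Your argument is correct, but it takes a genuinely different route from the paper's. You probe the vanishing combination only against the powers ${\bf 1},A,\ldots,A^{N-1}$, using the iterated identity $\omega_k(A^n)=\lambda_k^n$ (obtained by taking $B=A^{n-1}$ in the eigenstate relation), and then conclude via the invertibility of the Vandermonde matrix with nodes $\lambda_1,\ldots,\lambda_N$. The paper instead inducts on $N$: it applies the vanishing functional to elements $B(A-\lambda_N{\bf 1})$ with $B\in\rr{A}$ arbitrary, so that the $N$-th term drops out (since $\omega_N$ annihilates the left ideal $\rr{J}_{A,\lambda_N}$, \cf Lemma \ref{lemma_01}), leaving the relation $\sum_{k=1}^{N-1}a_k(\lambda_k-\lambda_N)\omega_k\equiv 0$ to which the inductive hypothesis applies; the eigenvalue differences $\lambda_k-\lambda_N\neq 0$ then force the coefficients to vanish. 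Your approach buys something slightly stronger: since it only uses the values of the states on polynomials in $A$, it shows that the restrictions of $\omega_1,\ldots,\omega_N$ to the unital subalgebra generated by $A$ alone are already linearly independent, and it is non-inductive, reducing everything to explicit finite-dimensional linear algebra. The paper's proof, by exploiting the arbitrariness of $B$, stays closer to the ideal-theoretic characterization of eigenstates and avoids any determinant computation. Both arguments, as you correctly note, require no normality assumption on $A$.
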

\proof
The proof is by  induction on $N$. The case $N=1$ is clear.
Assume now that the conclusion holds for $N-1$, and
suppose that $\sum_{k=1}^Na_k\omega_k\equiv0$. By applying this sum to the element $B(A-\lambda_N{\bf 1})$ for some $B\in \rr{A}$  one gets
$$
\begin{aligned}
0\;&=\;\sum_{k=1}^Na_k\omega_k\big(B(A-\lambda_N{\bf 1})\big)\\&=\;\sum_{k=1}^{N-1}a_k\omega_k\big(B(A-\lambda_N{\bf 1})\big)
\;=\;\sum_{k=1}^{N-1}a_k(\lambda_k-\lambda_N)\omega_k(B)\;.
\end{aligned}
$$
By the inductive hypothesis one has that $a_k(\lambda_k-\lambda_N)=0$ for every $k=1,\ldots,N-1$. Since all the eigenvalues are distinct it follows that $a_1=\ldots=a_{N-1}=0$. Finally $a_N=0$ in view of  the argument for  $N=1$.
\qed

\medskip

Following \cite[Definition 3.2.3]{pedersen-79},
let us recall that two linear functionals $\omega_1$ and $\omega_2$
over $\rr{A}$ are said \emph{orthogonal}
if
$$
\|\omega_1-\omega_2\|\;=\;\|\omega_1\|\;+\; \|\omega_2\|\;.
$$

\begin{theorem}
Let $A\in \rr{A}$ be normal. Then, any two eigenstates of $A$ with different eigenvalues are orthogonal.
\end{theorem}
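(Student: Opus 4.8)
The plan is to exploit the fact that for states the norm equality defining orthogonality collapses to a single numerical statement, and then to produce an explicit test element realizing the extreme value. Since $\omega_1,\omega_2$ are states on the unital $C^*$-algebra $\rr{A}$, we have $\|\omega_1\|=\|\omega_2\|=1$, and the triangle inequality already yields $\|\omega_1-\omega_2\|\leqslant\|\omega_1\|+\|\omega_2\|=2$. Hence orthogonality is equivalent to the reverse bound $\|\omega_1-\omega_2\|\geqslant 2$, and to establish it I would exhibit an element $B\in\rr{A}$ with $\|B\|\leqslant 1$ for which $|(\omega_1-\omega_2)(B)|$ attains the value $2$.

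The construction of $B$ is where normality enters. Let $\lambda_1\neq\lambda_2$ be the two eigenvalues; by Proposition \ref{prop_ev_01} both lie in the compact set ${\rm Spec}(A)\subset\C$. Since these are two distinct points of a compact Hausdorff space, Urysohn's lemma (or the Tietze extension theorem) furnishes a continuous real-valued function $f\in C({\rm Spec}(A))$ with $f(\lambda_1)=1$, $f(\lambda_2)=-1$ and $\|f\|_\infty=1$. I would then set $B:=f(A)$, the element obtained from the continuous functional calculus for the normal element $A$. Because the Gelfand--Naimark isomorphism is isometric, one has $\|B\|=\|f\|_\infty=1$.

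The remaining step is a direct evaluation. By Theorem \ref{teoF-calc}, normality guarantees that each $\omega_i$ is again an eigenstate of $f(A)$, now with eigenvalue $f(\lambda_i)$; evaluating the eigenstate relation at $B={\bf 1}$ gives $\omega_i(f(A))=f(\lambda_i)$. Consequently $(\omega_1-\omega_2)(B)=f(\lambda_1)-f(\lambda_2)=2$, whence $\|\omega_1-\omega_2\|\geqslant|(\omega_1-\omega_2)(B)|=2$. Combined with the automatic upper bound this gives $\|\omega_1-\omega_2\|=2=\|\omega_1\|+\|\omega_2\|$, which is exactly orthogonality.

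I expect the only genuine point requiring care to be the passage through Theorem \ref{teoF-calc}: without normality one could not transport the eigenstate property along the functional calculus, and the separating function $f$ would not interact cleanly with $\omega_1,\omega_2$. The topological construction of $f$ is routine and the isometry of the functional calculus is standard, so the substance of the argument is concentrated in reducing orthogonality to a single evaluation and then reading off the value $f(\lambda_1)-f(\lambda_2)$.
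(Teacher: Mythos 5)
Your proposal is correct and follows essentially the same route as the paper's own proof: an Urysohn separating function $f$ with $f(\lambda_1)=1$, $f(\lambda_2)=-1$, $\|f\|_\infty=1$, transported through the functional calculus via Theorem \ref{teoF-calc} to get $(\omega_1-\omega_2)(f(A))=2$, and then the two-sided norm estimate $2\leqslant\|\omega_1-\omega_2\|\leqslant\|\omega_1\|+\|\omega_2\|=2$. The only difference is presentational: you make explicit the isometry of the functional calculus and the citation of Proposition \ref{prop_ev_01} placing the eigenvalues in ${\rm Spec}(A)$, points the paper leaves implicit.
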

\proof
Let $\omega_1$ and $\omega_2$ be two eigenstates of $A$ with eigenvalues $\lambda_1\neq\lambda_2$, respectively. As a consequence of the Urysohn's lemma, there exists a continuous function $f\in C({\rm Spec}(A))$ such that $f(\lambda_1)=1$, $f(\lambda_2)=-1$ and $\|f\|_\infty=1$. From Theorem \ref{teoF-calc} one gets that 
$$
(\omega_1-\omega_2)(f(A))\;=\;f(\lambda_1)\;-\;f(\lambda_2)\;=\;2\;.
$$
Consequently,
$$
2\;\leqslant\;\|\omega_1-\omega_2\|\;\leqslant\;\|\omega_1\|\;+\;\|\omega_2\|\;=\;2
$$
which implies the orthogonality between $\omega_1$ and $\omega_2$.
\qed

\subsection{Eigenstates of a projection}\label{sec:eig-proj}

Let us recall that a self-adjoint projection is an element 
$P\in\rr{A}$ such that $P^*=P=P^2$.

\medskip

\begin{theorem}\label{eigen:projection}
Let $P\in\rr{A}$ be a self-adjoint projection and $\omega\in \n{E}_{\rr{A}}$. Let $\omega_P\in \n{E}_{\rr{A}}$ be the state defined by
$$
\omega_P(A)\;:=\;\frac{\omega(PAP)}{\omega(P)}\;,\qquad \forall\; A\in\rr{A}\;.
$$
The following are equivalent:
\begin{itemize}
\item[(i)] $\omega_P=\omega$;
\vspace{1mm}
\item[(ii)] $\omega$ is an eigenstate of $P$ with eigenvalue $1$;
\vspace{1mm}
\item[(iii)] $\omega(P)=1$.
\end{itemize}
\end{theorem}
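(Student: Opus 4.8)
The plan is to establish the three equivalences through the cyclic chain $(i)\Rightarrow(iii)\Rightarrow(ii)\Rightarrow(i)$, using systematically the identities $P^2=P$ and $P^*=P$ together with the fact that ${\bf 1}-P$ is again a self-adjoint projection. Before starting, I would record two preliminary facts. First, the definition of $\omega_P$ tacitly requires $\omega(P)\neq0$; since $P=P^*P$ is a positive element one has $\omega(P)\geqslant0$, so the hypothesis is really $\omega(P)>0$. Second, $\omega_P$ is genuinely a state: normalization follows from $\omega_P({\bf 1})=\omega(P^2)/\omega(P)=1$, and positivity from $\omega_P(A^*A)=\omega\big((AP)^*(AP)\big)/\omega(P)\geqslant0$.

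For $(i)\Rightarrow(iii)$ I would simply evaluate $\omega_P$ on $P$ itself. Since $P^3=P$ one gets $\omega_P(P)=\omega(P^3)/\omega(P)=1$, so the assumption $\omega_P=\omega$ forces $\omega(P)=\omega_P(P)=1$. For $(iii)\Rightarrow(ii)$ I would appeal to Proposition \ref{prop:quad=0}, the key algebraic identity being
$$(P-{\bf 1})^*(P-{\bf 1})\;=\;(P-{\bf 1})^2\;=\;P^2-2P+{\bf 1}\;=\;{\bf 1}-P\;,$$
obtained from $P^*=P$ and $P^2=P$. Hence $\omega\big((P-{\bf 1})^*(P-{\bf 1})\big)=\omega({\bf 1}-P)=1-\omega(P)=0$, and Proposition \ref{prop:quad=0} yields at once that $\omega$ is an eigenstate of $P$ with eigenvalue $1$.

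For $(ii)\Rightarrow(i)$ the target is the identity $\omega(PAP)=\omega(A)$ for every $A\in\rr{A}$; together with $\omega(P)=1$ (which is immediate, since the remark after Definition \ref{def_00} gives $\omega(P)=\lambda=1$) this reads $\omega_P=\omega$. From $\omega({\bf 1}-P)=0$ and the fact that ${\bf 1}-P$ is a self-adjoint projection, the Cauchy--Schwarz inequality for states gives, for every $X\in\rr{A}$,
$$\big|\omega\big(X({\bf 1}-P)\big)\big|^2\;\leqslant\;\omega(XX^*)\,\omega\big(({\bf 1}-P)^*({\bf 1}-P)\big)\;=\;\omega(XX^*)\,\omega({\bf 1}-P)\;=\;0\;,$$
and symmetrically $\omega\big(({\bf 1}-P)X\big)=0$, so that $\omega(XP)=\omega(X)=\omega(PX)$ for all $X$. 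Applying these with $X=PA$ and then $X=A$ gives $\omega(PAP)=\omega(PA)=\omega(A)$, as desired. The main obstacle I anticipate is purely one of left--right symmetry: condition $(ii)$ is a priori the one-sided relation $\omega(XP)=\omega(X)$, whereas $\omega(PAP)$ also requires the left version $\omega(PX)=\omega(X)$. The Cauchy--Schwarz estimate applied to $({\bf 1}-P)X$ settles this cleanly; alternatively one could invoke Corollary \ref{cor_commut}, which (as $P$ is normal) gives $\omega(PX)=\omega(XP)$ directly. Every remaining step is a routine manipulation of the projection identities.
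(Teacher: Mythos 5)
Your proof is correct, and it takes a mildly different route from the paper's. The paper establishes the two equivalences $(i)\Leftrightarrow(ii)$ and $(ii)\Leftrightarrow(iii)$ separately, whereas you close the cycle $(i)\Rightarrow(iii)\Rightarrow(ii)\Rightarrow(i)$. Your step $(i)\Rightarrow(iii)$ (just evaluating $\omega_P$ at $P$) is a shortcut the paper does not use (it reaches $(iii)$ from $(i)$ through $(ii)$), and your $(iii)\Rightarrow(ii)$ is the same as the paper's, resting on the identity $({\bf 1}-P)^*({\bf 1}-P)={\bf 1}-P$ and Proposition \ref{prop:quad=0}. The genuine point of divergence is $(ii)\Rightarrow(i)$: you correctly isolate the left--right asymmetry of Definition \ref{def_00} as the only obstacle (the eigenstate condition gives $\omega(XP)=\omega(X)$, but $\omega(PAP)$ also needs $\omega(PX)=\omega(X)$), and you resolve it by Cauchy--Schwarz applied to $({\bf 1}-P)X$, using $\omega({\bf 1}-P)=0$. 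The paper instead obtains the left-sided identity by taking adjoints: from $\omega(BP)=\omega(B)$ for all $B$ and the state property $\omega(C^*)=\overline{\omega(C)}$ it gets $\omega(PB^*)=\omega(B^*)$, hence $\omega(PBP)=\omega(BP)=\omega(B)$. Both devices work; yours essentially re-runs the Cauchy--Schwarz argument underlying Proposition \ref{prop:quad=0} (or, as you observe, one can invoke Corollary \ref{cor_commut} since $P$ is normal), while the paper's adjoint trick is a one-line alternative requiring no inequality. Your preliminary observations --- that the definition of $\omega_P$ tacitly requires $\omega(P)>0$, and that $\omega_P$ is indeed a state --- are correct and address points the paper leaves tacit.
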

\proof
$(i)\Leftrightarrow (ii)$. For all $B\in\rr{A}$ one has that
$$
\omega(BP)\;=\;\omega_P(BP)\;=\;\frac{\omega(PBP^2)}{\omega(P)}\;=\;\frac{\omega(PBP)}{\omega(P)}\;=\;\omega_P(B)\;=\;\omega(B)\;,
$$
then $\omega$ is an eigenstate of $P$ with eigenvalue $1$. On the other hand, if $\omega$ is an eigenstate of $P$ with eigenvalue $1$, then $\omega(BP)=\omega(B)$ for every $B\in\rr{A}$,
and in turn  $\omega(PB^*)=\omega(B^*)$ by taking the adjoint.
By replacing the generic element $B^*$ with $BP$ one gets
 $\omega(PBP)=\omega(BP)=\omega(B)$. This fact along with $\omega(P)=1$ provides $\omega_P(B)=\omega(B)$ for every $B\in\rr{A}$.\\
$(ii)\Leftrightarrow (iii)$. If $\omega$ is an eigenstate of $P$ with eigenvalue $1$ then
$$
0\;=\;\omega(0)\;=\;\omega(({\bf 1}-P)P)\;=\;\omega({\bf 1}-P)\;.
$$
Therefore
$$
1\;=\;\omega({\bf 1})\;=\;\omega(({\bf 1}-P)+P)\;=\;\omega(P)\;.
$$
On the other hand from $\omega(P)=1$ one gets $\omega({\bf 1}-P)=0$ and in turn $\omega(({\bf 1}-P)^*({\bf 1}-P))=\omega({\bf 1}-P)=0$
which implies that $\omega$ is an eigenstate of $P$ with eigenvalue $1$ in view of Proposition \ref{prop:quad=0}.\qed

\begin{remark}[Eigenstates for gapped spectral projections]
Let $H=H^*$ be a selfadjoint element in $\mathfrak{A}$ and $\sigma_*\subset {\rm Spec}(H)\subset\R$ a closed (hence compact) subset of the spectrum of $H$. 
Let us assume the \emph{gap condition}
 ${\rm dist}(\sigma_*,{\rm Spec}(H)\setminus \sigma_*)>0$ and denote with $P_*$  the spectral projection of $H$ on $\sigma_*$. Via functional calculus,  and in view of the gap condition, we can choose a real continuous function $f:\mathbb{R}\rightarrow\mathbb{R}$ such that $f$ coincides with the characteristic function of $\sigma_\ast$ when restricted to ${\rm Spec}(H)$,  and  $f(H)=P_*$. In particular, this ensures that $P_\ast\in \rr{A}$. 
 We will refer to $P_*$ as a \emph{gapped} spectral projection of $H$. Our aim is to construct an eigenstate of $P_*$ (with eigenvalue $1$). By invoking the disintegration theorem for measures (\cf Theorem \ref{teo: desintegration}),  one can obtain a real probability measure $\mu_1$ with support in $\sigma_*$. In fact, let $\nu:=\mu\circ f^{-1}$ be the pushforward measure by $f$ of the standard Lebesgue measure $\mu$ on $\mathbb{R}$. By  Theorem \ref{teo: desintegration} there exists a $\nu$-almost everywhere uniquely determined disintegration $\{\mu_t\}_{t\in \mathbb{R}}$ of $\mu$ over $\nu$. Since  $\mu_t$ lives on the fiber $f^{-1}(\{t\})$ for 
$\nu$-almost all $t\in\R$, it follows that $\mu_1$ is supported on
 $f^{-1}(\{1\})=\sigma_*$ as required. With  $\mu_1$ we can generate  an eigenstate $\omega$ of $P_*$. Indeed, from Corollary \ref{corol_norm}, for each $\lambda\in \sigma_*$ there exists at least an eigenstate $\omega_\lambda$ of $H$ with eigenvalue $\lambda$. 
 Let us assume that there is a point-wise continuous (or measurable) map $\sigma_*\ni\lambda\mapsto\omega_\lambda$. Then  the functional
\[\omega(A)\;=\;\int_{\sigma_*} \,d\mu_1(\lambda)\,\omega_\lambda(A)\;,\hspace{1cm}A\in \mathfrak{A} \]
results well-defined. Moreover, one can check that $\omega(1)=\mu_1(\sigma_*)=1$, which implies that $\omega\in {\rm E}_\A$. Finally, a direct computation shows
\begin{equation*}
    \begin{split}
        \omega(P_*)\;&=\;\int_{\sigma_*}{\rm d}\mu_1(\lambda)\,\omega_\lambda(P_*)\;=\;\int_{\sigma_*}d\mu_1(\lambda)\,f(\lambda)\;=\;\int_{\sigma_*}{\rm d}\mu_1(\lambda)\\
        &=\;\mu_1(\sigma_*)\;=\;1
    \end{split}
\end{equation*}
and by Theorem \ref{eigen:projection}, $\omega$ is an eigenstate of $P_*$ with eigenvalue $1$. \hfill $\blacktriangleleft$
\end{remark}

\medskip

The construction above will be taken up, and generalized in a  larger context in Section \ref{asec:class_eig}.


\section{Eigenstates and dynamics}\label{sec:eig-din}
In this section, we will relate the concept of eigenstates of a self-adjoint operator $H$ with certain properties of the dynamics induced by $H$.

\subsection{Stability under the dynamics}\label{sec:stab-din}
Let $H=H^*$ be a self-adjoint element of the $C^*$-algebra $\rr{A}$. By functional calculus, the unitaries $\expo{\ii t H}$, with $t\in\R$, are contained in $\rr{A}$. Then, one can define the \emph{dynamics} associated with $H$ as the one-parameter group of automorphisms $t\mapsto \alpha^H_t\in{\rm Aut}(\rr{A})$ defined by
\begin{equation}\label{eq:dyn_def}
\alpha^H_t(A)\;:=\;\expo{\ii t H}\; A\; \expo{-\ii t H}\;,\qquad \forall\; A\in\rr{A}\;.
\end{equation}
The dynamics $\alpha^H$ defined above is \emph{strongly continuous} in the sense that the functions $t\mapsto\|\alpha^H_t(A)\|$ are continuous for every $A\in\rr{A}$. This easily follows from the boundedness of $H$ which guarantees the continuity of $t\mapsto\|\expo{\ii t H}\|$. It turns out that the dynamics are also strongly differentiable in the sense that the limit
$$
\delta_H(A)\;:=\;\ii [H,A]\;=\;\lim_{t\to0}\frac{\alpha^H_t(A)- A}{t}
$$
exists in norm for every $A\in\rr{A}$. The commutator $\delta_H$ acts on $\rr{A}$ as a (bounded) derivation and it is called the \emph{infinitesimal generator} of the {dynamics} $\alpha^H$.

\medskip

A state $\omega\in \n{E}_{\rr{A}}$ is \emph{invariant} under the dynamics  induced by $H$ if and only if 
$$
\omega\circ \alpha^H_t\;=\;\omega\;,\qquad \forall\; t\in\R\;.
$$ 
This is equivalent to $\omega\circ \delta_H=0$. In view of of Corollary \ref{cor_commut} one has that:
\begin{proposition}\label{prop-inv}
Let $H=H^*$ be a self-adjoint element of the $C^*$-algebra $\rr{A}$
and $\omega\in \n{E}_{\rr{A}}$ an eigenstate of $H$ related to the eigenvalue $\lambda\in{\rm Spec}(H)$. Then $\omega$ is invariant under the dynamics  induced by $H$. 
\end{proposition}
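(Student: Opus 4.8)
The goal is to establish the invariance $\omega\circ\alpha^H_t=\omega$ for every $t\in\R$, and the plan is to read this identity off directly from Corollary \ref{cor_commut}. The crucial observation is that, $H$ being self-adjoint and hence normal, the unitaries $\expo{\ii t H}$ generating the dynamics are themselves continuous functions of $H$ via the functional calculus. An eigenstate of $H$ should therefore be \emph{blind} to conjugation by such unitaries, since Corollary \ref{cor_commut} says precisely that $\omega$ lets any continuous function of $H$ pass across an arbitrary element of $\rr{A}$, \ie $\omega(Bf(H))=\omega(f(H)B)$ for all $B\in\rr{A}$.

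Concretely, I would fix $A\in\rr{A}$ and $t\in\R$, set $f_t(z):=\expo{\ii t z}$ (a continuous, indeed entire, function of $z\in\C$) so that $f_t(H)=\expo{\ii t H}$ by functional calculus, and apply Corollary \ref{cor_commut} with the element $C:=A\,\expo{-\ii t H}$ playing the role of $B$. Noting that $f_t(H)\,C=\expo{\ii t H}A\expo{-\ii t H}=\alpha^H_t(A)$ while $C\,f_t(H)=A\,\expo{-\ii t H}\expo{\ii t H}=A$, the commutation property yields
\[
\omega\big(\alpha^H_t(A)\big)\;=\;\omega\big(f_t(H)\,C\big)\;=\;\omega\big(C\,f_t(H)\big)\;=\;\omega(A)\;,
\]
where the middle equality is exactly the content of Corollary \ref{cor_commut}. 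Since $A$ and $t$ are arbitrary, this is precisely $\omega\circ\alpha^H_t=\omega$.

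An equivalent and perhaps more transparent route is infinitesimal: as recalled just before the statement, invariance is the same as $\omega\circ\delta_H=0$, and since $\delta_H(A)=\ii[H,A]$ it suffices to verify $\omega([H,A])=0$ for every $A\in\rr{A}$. This is the special case of Corollary \ref{cor_commut} in which $f$ is the identity function, so that $f(H)=H$. I do not expect a genuine obstacle here; the only point requiring care is the recognition that the time evolution is entirely expressed through the functional calculus of $H$, after which the eigenstate property of $\omega$ collapses all the $t$-dependence. Should one prefer the infinitesimal formulation, one would additionally invoke the stated equivalence between $\omega\circ\delta_H=0$ and invariance, which follows from the norm-differentiability of $t\mapsto\alpha^H_t(A)$ together with the continuity of $\omega$ and the group property of $\alpha^H$.
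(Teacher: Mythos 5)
Your proposal is correct and takes essentially the same approach as the paper: the paper's entire proof is the observation, made just before the statement, that invariance is equivalent to $\omega\circ\delta_H=0$, which then follows from Corollary \ref{cor_commut} — exactly your infinitesimal argument. Your primary variant, applying Corollary \ref{cor_commut} with $f_t(z)=\expo{\ii t z}$ and $B=A\expo{-\ii t H}$ to get $\omega(\alpha^H_t(A))=\omega(A)$ directly at the group level, is the same key ingredient in a slightly different packaging and is equally valid.
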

\begin{remark}\label{remark 3,2}
It is not hard to show that the converse of Proposition \ref{prop-inv} is not true in general. In fact, there are invariant states which are not eigenstates. Consider the commutative $C^*$-algebra $C([0,1])$,  the state given by the integral
$$
\omega(f)\;:=\;\int_0^1 \dd s\; f(s)\;,\qquad f\in C([0,1])\;,
$$
and the   element $\xi(s):=s$ for all $s\in[0,1]$. Since 
$\omega(\xi^n)=(n+1)^{-1}$ for every   $n\in\N\cup\{0\}$, one gets that 
$\omega (\xi)=\frac{1}{2}$ and $\omega (\xi^2)\neq \frac{1}{2}\omega(\xi)$. Therefore, $\omega$ is not an eigenstate for $\xi$ (associated to $\lambda=\frac{1}{2}$). Nevertheless, in view of the commutativity of $C([0,1])$,  $\omega$ is trivially invariant under the dynamics induced by $\xi$. \hfill $\blacktriangleleft$
\end{remark}

In order to obtain the converse of Proposition \ref{prop-inv} we need to require some more conditions on the state $\omega$. Let us recall that a state $\omega$ is \emph{pure} if and only if the associated 
GNS representation $
( \pi_\omega,\mathcal{H}_\omega,\psi_\omega)$ is irreducible \cite[Theorem 2.3.19]{bratteli-robinson-87}. Recall that in the commutative case, the pure states coincide with the multiplicative functionals \cite[Corollary 2.3.21]{bratteli-robinson-87}, and therefore they are automatically eigenstates.

\begin{proposition}\label{prop:pur_inv}
Let $H=H^*$ be a self-adjoint element of the $C^*$-algebra $\rr{A}$, $t\mapsto \alpha^H_t$ the dynamics generated by $H$ according to \eqref{eq:dyn_def} and   $\omega\in \n{E}_{\rr{A}}$ a pure state. If $\omega$ is invariant under the dynamics $\alpha^H_t$ then it is an eigenstate of $H$.
\end{proposition}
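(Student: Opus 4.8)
The plan is to pass to the GNS representation $(\pi_\omega,\s{H}_\omega,\psi_\omega)$ of the pure state $\omega$, which is irreducible by \cite[Theorem 2.3.19]{bratteli-robinson-87}, and to exploit the fact that the dynamics is \emph{inner}: since $\expo{\ii t H}\in\rr{A}$, the automorphism $\alpha^H_t$ is already implemented inside the representation by the strongly continuous unitary group
$$
V_t\;:=\;\pi_\omega\big(\expo{\ii t H}\big)\;=\;\expo{\ii t\, h}\;,\qquad h\;:=\;\pi_\omega(H)\;,
$$
where $h$ is a bounded self-adjoint operator on $\s{H}_\omega$. First I would rewrite the invariance hypothesis $\omega\circ\alpha^H_t=\omega$ in this representation. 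Using $\pi_\omega(\alpha^H_t(A))=V_t\,\pi_\omega(A)\,V_t^*$ together with $\omega(A)=\langle\psi_\omega,\pi_\omega(A)\psi_\omega\rangle$, invariance becomes
$$
\langle V_t^*\psi_\omega,\,\pi_\omega(A)\,V_t^*\psi_\omega\rangle\;=\;\langle\psi_\omega,\,\pi_\omega(A)\,\psi_\omega\rangle\;,\qquad\forall\,A\in\rr{A}\;,
$$
that is, the unit vectors $V_t^*\psi_\omega$ and $\psi_\omega$ induce the \emph{same} vector state on $\pi_\omega(\rr{A})$.

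The central step is to upgrade this equality of vector states into equality of the vectors up to a phase. Here irreducibility enters in an essential way: since $\pi_\omega(\rr{A})'=\C{\bf 1}$, the von Neumann density theorem gives that $\pi_\omega(\rr{A})$ is dense in $\rr{B}(\s{H}_\omega)$ in the weak operator topology. As the vector functionals $T\mapsto\langle V_t^*\psi_\omega,T\,V_t^*\psi_\omega\rangle$ and $T\mapsto\langle\psi_\omega,T\psi_\omega\rangle$ are weak-operator continuous and agree on the dense set $\pi_\omega(\rr{A})$, they agree on all of $\rr{B}(\s{H}_\omega)$. This forces the equality of rank-one projections $|V_t^*\psi_\omega\rangle\langle V_t^*\psi_\omega|=|\psi_\omega\rangle\langle\psi_\omega|$, whence $V_t^*\psi_\omega=c(t)\psi_\omega$ for some scalar $c(t)$ with $|c(t)|=1$. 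Equivalently, $\psi_\omega$ is an eigenvector of $V_t=\expo{\ii t h}$ for every $t\in\R$, with $V_t\psi_\omega=\overline{c(t)}\psi_\omega$.

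It then remains to transfer this to the generator $h$. Since $h$ is bounded, the map $t\mapsto V_t\psi_\omega$ is norm-differentiable, so $d(t):=\langle\psi_\omega,V_t\psi_\omega\rangle=\overline{c(t)}$ is smooth with $d(0)=1$; differentiating $V_t\psi_\omega=\overline{c(t)}\psi_\omega$ at $t=0$ yields $\ii\,h\,\psi_\omega=d'(0)\,\psi_\omega$, and pairing with $\psi_\omega$ identifies $d'(0)=\ii\lambda$, where $\lambda:=\langle\psi_\omega,h\psi_\omega\rangle=\omega(H)\in\R$. Thus $\pi_\omega(H)\psi_\omega=\lambda\psi_\omega$, and Theorem \ref{teo_rep}, applied to the GNS representation in which $\omega=\omega_{\psi_\omega}$, shows that $\omega$ is an eigenstate of $H$ with eigenvalue $\lambda$.

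I expect the genuine obstacle to be the middle paragraph: justifying that two unit vectors defining the same vector state in an irreducible representation coincide up to a phase. This is precisely where purity is used, through the density theorem and the weak-operator continuity of vector functionals; the surrounding steps are routine translations between the algebraic and Hilbert-space pictures, and the final differentiation is elementary because $h$ is bounded.
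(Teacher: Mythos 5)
Your proof is correct, but it follows a genuinely different route from the paper's. The paper does not use the innerness of the dynamics at the Hilbert-space level in the way you do: it invokes the uniqueness (up to unitary equivalence) of the GNS representation to produce a strongly continuous unitary group $U_\omega(t)$ on $\s{H}_\omega$ satisfying $U_\omega(t)\pi_\omega(A)\psi_\omega=\pi_\omega(\alpha^H_t(A))\psi_\omega$ and $U_\omega(t)\psi_\omega=\psi_\omega$, applies Stone's theorem to get a generator $K_\omega$ with $K_\omega\psi_\omega=0$, and then compares this with the implementation by $\expo{\ii t\pi_\omega(H)}$: since both groups induce the same automorphisms, $[\pi_\omega(H)-K_\omega,\pi_\omega(A)]=0$ for all $A\in\rr{A}$, so Schur's lemma (purity) gives $\pi_\omega(H)-K_\omega=\lambda{\bf 1}$ and hence $\pi_\omega(H)\psi_\omega=\lambda\psi_\omega$. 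You instead work directly with $V_t=\pi_\omega(\expo{\ii tH})$, translate invariance into the statement that $V_t^*\psi_\omega$ and $\psi_\omega$ induce the same vector state, and use irreducibility through the double-commutant/WOT-density theorem to conclude $V_t^*\psi_\omega=c(t)\psi_\omega$, finishing by differentiating the norm-analytic group. Both arguments hinge on purity in equivalent guises --- trivial commutant (Schur) for the paper, weak density of $\pi_\omega(\rr{A})$ in $\rr{B}(\s{H}_\omega)$ for you. What the paper's scheme buys is generality: it is the template that extends verbatim to dynamics that are \emph{not} inner, i.e.\ to an arbitrary strongly continuous automorphism group admitting an invariant pure state, where no bounded generator inside $\rr{A}$ exists. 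What your scheme buys is economy: it avoids Stone's theorem and the covariance/uniqueness argument entirely, exploiting that $H\in\rr{A}$ makes $t\mapsto V_t$ norm-differentiable, so the whole proof reduces to the standard fact that in an irreducible representation two unit vectors defining the same state agree up to a phase, plus an elementary derivative at $t=0$.
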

\proof
Let  $(\pi_\omega, \s{H}_\omega, \psi_\omega)$
 be the GNS representation associated with $\omega$. As a consequence of the uniqueness up to unitary equivalence of the GNS representation  \cite[Corollary  2.3.17]{bratteli-robinson-87} one has that it exists a unique strongly continuous one-parameter group of  unitary operators $U_\omega(t)$ on $\s{H}_\omega$ such that: (i) $U_\omega(t)\pi_\omega(A)\psi_\omega=\pi_\omega(\alpha^H_t(A))\psi_\omega$, and (ii) $U_\omega(t)\psi_\omega=\psi_\omega$ for every $A\in\rr{A}$ and $t\in\R$. 
 By the Stone’s theorem there exists a self-adjoint operator $K_\omega$ on $\s{H}_\omega$ such that $U_\omega(t)=\expo{\ii t K_\omega}$ for every $t\in \R$, and in turn 
  $K_\omega\psi_\omega=0$ in view of the property (ii). On the other hand, since $\pi_\omega$ is a representation and $H\in\rr{A}$ one has that
 \[
 \pi_\omega(\alpha^H_t(A))\;:=\;\expo{\ii t \pi_\omega(H)}\; \pi_\omega(A)\; \expo{-\ii t \pi_\omega(H)}\;.
 \] 
  Therefore $K_\omega$ and $\pi_\omega(H)$ generate the same dynamics on $\s{H}_\omega$, and in turn the same commutator.
  In other words one has
  \[
  [\pi_\omega(H)-K_\omega,\pi_\omega(A)]\;=\;0\;,\qquad\forall A\in\rr{A}\;.
  \]
By considering the irreducibility of  the representation one gets  $\pi_\omega(H)-K_\omega=\lambda{\bf 1}$ for some $\lambda\in\C$ in view of the Schur's lemma. It turns out that $\pi_\omega(H)\psi_\omega=(K_\omega+\lambda{\bf 1})\psi_\omega=\lambda\psi_\omega$ and this complete the proof in view of 
  Corollary \ref{coro 2.15}.
  \qed

\subsection{Ground state condition}\label{sec:gs}
In this section, we will introduce the concept of 
ground state and we will study its relation with the dynamics.

\begin{definition}[Ground state]\label{deg:gs}
Let $H=H^*$ be a self-adjoint element of the $C^*$-algebra $\rr{A}$,  $\omega\in \n{E}_{\rr{A}}$ a  state and  $(\pi_\omega, \s{H}_\omega, \psi_\omega)$
 the related GNS representation. Then, $\omega$ is a \emph{ground state} for $H$ if and only if 
 $$
 \omega(H)\;=\:\lambda_*\;:=\; \min\; {\rm Spec}(\pi_\omega(H))\;.
 $$
\end{definition}

\medskip

The next result shows that a ground state is automatically an eigenstate.

\begin{proposition}\label{prop:gs}
Let $H=H^*$ be a self-adjoint element of the $C^*$-algebra $\rr{A}$ and $\omega\in \n{E}_{\rr{A}}$ a {ground state} for $H$. Then, $\omega$ is an eigenstate of $H$ related to the eigenvalue $\lambda_*$.
\end{proposition}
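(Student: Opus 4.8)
The plan is to pass to the GNS representation $(\pi_\omega,\s{H}_\omega,\psi_\omega)$ of $\omega$ and there reinterpret the eigenstate property as a genuine eigenvector equation, using Theorem \ref{teo_rep}. Since $H=H^*$, the operator $\pi_\omega(H)$ is bounded and self-adjoint on $\s{H}_\omega$, so its spectrum is a compact subset of $\R$ and $\lambda_*:=\min{\rm Spec}(\pi_\omega(H))$ is well-defined. By the defining property of the cyclic vector one has $\omega(H)=\langle\psi_\omega,\pi_\omega(H)\psi_\omega\rangle_{\s{H}_\omega}$ with $\|\psi_\omega\|_{\s{H}_\omega}=1$, so the ground state condition of Definition \ref{deg:gs} reads $\langle\psi_\omega,\pi_\omega(H)\psi_\omega\rangle_{\s{H}_\omega}=\lambda_*$.

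First I would record the operator inequality that expresses $\lambda_*$ being the bottom of the spectrum of the self-adjoint operator $\pi_\omega(H)$, namely that $T:=\pi_\omega(H)-\lambda_*{\bf 1}\geqslant 0$, i.e. $\langle\phi,T\phi\rangle_{\s{H}_\omega}\geqslant 0$ for every $\phi\in\s{H}_\omega$. Evaluated on the cyclic vector, the ground state condition then gives exactly $\langle\psi_\omega,T\psi_\omega\rangle_{\s{H}_\omega}=\omega(H)-\lambda_*=0$, so $\psi_\omega$ saturates the variational lower bound $\langle\phi,\pi_\omega(H)\phi\rangle_{\s{H}_\omega}\geqslant\lambda_*\|\phi\|_{\s{H}_\omega}^2$.

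The key step is to exploit positivity: for a positive operator $T$ the vanishing of the expectation $\langle\psi_\omega,T\psi_\omega\rangle_{\s{H}_\omega}=0$ forces $T\psi_\omega=0$. I would obtain this by taking the positive square root $\sqrt{T}$ via functional calculus and writing $0=\langle\psi_\omega,T\psi_\omega\rangle_{\s{H}_\omega}=\|\sqrt{T}\psi_\omega\|_{\s{H}_\omega}^2$, whence $\sqrt{T}\psi_\omega=0$ and therefore $T\psi_\omega=\sqrt{T}(\sqrt{T}\psi_\omega)=0$. This produces the eigenvector equation $\pi_\omega(H)\psi_\omega=\lambda_*\psi_\omega$. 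I expect this positive-operator argument to be the only genuine subtlety; the point is that although the bottom of the spectrum need not be an eigenvalue of $\pi_\omega(H)$ in general, here the specific cyclic vector $\psi_\omega$ attains the infimum and is thereby forced into the kernel of $T$.

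To conclude, I would invoke Theorem \ref{teo_rep} with $\pi=\pi_\omega$ and $\psi=\psi_\omega$: since $\omega$ is the vector state associated with its cyclic vector, the relation $\pi_\omega(H)\psi_\omega=\lambda_*\psi_\omega$ is equivalent to $\omega$ being an eigenstate of $H$ with eigenvalue $\lambda_*$. Alternatively, one can bypass Theorem \ref{teo_rep} and verify the criterion of Proposition \ref{prop:quad=0} directly, noting that $\omega\big((H-\lambda_*{\bf 1})^2\big)=\langle\psi_\omega,T^2\psi_\omega\rangle_{\s{H}_\omega}=\|T\psi_\omega\|_{\s{H}_\omega}^2=0$.
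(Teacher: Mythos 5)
Your proposal is correct. It shares the paper's skeleton --- reduce via the GNS representation and Theorem \ref{teo_rep} to the eigenvector equation $\pi_\omega(H)\psi_\omega=\lambda_*\psi_\omega$ --- but executes the crucial step differently. The paper invokes the full spectral theorem: writing $\pi_\omega(H)$ through its projection-valued measure $E^H$, it observes that $\omega(H)-\lambda_*=0$ means the integral of the strictly positive function $\lambda\mapsto\lambda-\lambda_*$ over ${\rm Spec}(\pi_\omega(H))\setminus\{\lambda_*\}$ against the spectral measure of $\psi_\omega$ vanishes, concludes that this measure is concentrated on $\{\lambda_*\}$, and finishes with the Pythagoras theorem to obtain $E^H(\{\lambda_*\})\psi_\omega=\psi_\omega$. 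You instead set $T:=\pi_\omega(H)-\lambda_*{\bf 1}\geqslant 0$ and use the positive square root: $0=\langle\psi_\omega,T\psi_\omega\rangle_{\s{H}_\omega}=\|\sqrt{T}\psi_\omega\|^2_{\s{H}_\omega}$ forces $\sqrt{T}\psi_\omega=0$ and hence $T\psi_\omega=\sqrt{T}\big(\sqrt{T}\psi_\omega\big)=0$. This is more elementary: it needs only the continuous functional calculus for positive operators, no projection-valued measure, no measure-theoretic reasoning, and no separate treatment of the degenerate case ${\rm Spec}(\pi_\omega(H))=\{\lambda_*\}$ which the paper dispatches at the outset. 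Moreover, your closing alternative --- verifying Proposition \ref{prop:quad=0} directly via $\omega\big((H-\lambda_*{\bf 1})^2\big)=\|T\psi_\omega\|^2_{\s{H}_\omega}=0$ --- makes the conclusion purely algebraic and bypasses Theorem \ref{teo_rep} altogether. What the paper's longer route makes explicit is that $\psi_\omega$ lies in the range of the spectral projection $E^H(\{\lambda_*\})$, a formulation that dovetails with the kernel ${\rm Ker}(\pi_\omega(H)-\lambda_*{\bf 1})$ appearing later in the discussion of non-degenerate gapped ground states; but for the statement at hand the two arguments are equally conclusive.
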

\proof
In view of Theorem \ref{teo_rep}, it is sufficient to prove that
\begin{equation}\label{eq:gs-p1}
\pi_\omega(H)\psi_\omega\;=\;\lambda_*\psi_\omega
\end{equation}
 where $( \pi_\omega,\mathcal{H}_\omega,\psi_\omega)$ is the GNS representation associated to $\omega$. Let us assume that ${\rm Spec}(\pi_\omega(H))\setminus \{\lambda_*\}\neq\emptyset$ 	otherwise it is trivial.  Let $E^H$ be the PVM associated to $\pi_\omega(H)$ and 
$$
\pi_\omega(H)\;=\;\int_{{\rm Spec}(\pi_\omega(H))} \dd E^H(\lambda)\;\lambda 
$$
its spectral decomposition. Then
$$
\begin{aligned}
0\;&=\;\omega(H)\;-\;\lambda_*\\
&=\;\langle \psi_\omega,(\pi_\omega(H)-\lambda_*{\bf 1})\psi_\omega\rangle_{\mathcal{H}_\omega}\\
&=\;\int_{{\rm Spec}(\pi_\omega(H))}  \langle \psi_\omega,\dd E^H(\lambda)\psi_\omega\rangle_{\mathcal{H}_\omega}\; (\lambda-\lambda_*)\\
&=\;\int_{{\rm Spec}(\pi_\omega(H))\setminus\{\lambda_*\}}  \langle \psi_\omega,\dd E^H(\lambda)\psi_\omega\rangle_{\mathcal{H}_\omega}\; (\lambda-\lambda_*)\;.
\end{aligned}
$$
Since the difference $\lambda-\lambda_*$ is strictly positive on the set ${\rm Spec}(\pi_\omega(H))\setminus\{\lambda_*\}$, it follows that
$$
\langle \psi_\omega, E^H({\rm Spec}(\pi_\omega(H))\setminus\{\lambda_*\})\psi_\omega\rangle_{\s{H}_\omega}\;=\;0
$$
and in turn
$$
\langle \psi_\omega, E^H(\{\lambda_*\})\psi_\omega\rangle_{\s{H}_\omega}\;=\;
\langle \psi_\omega, E^H({\rm Spec}(\pi_\omega(H)))\psi_\omega\rangle_{\s{H}_\omega}\;=\;\langle \psi_\omega, \psi_\omega\rangle_{\s{H}_\omega}\;=\;1\;.
$$
Since $E^H(\{\lambda_*\})$ is an orthogonal projection  one gets that $\|E^H(\{\lambda_*\})\psi_\omega\|_{{H}_\omega}=1$. Then the Pythagoras theorem implies that
$$
E^H(\{\lambda_*\})\psi_\omega\;=\;\psi_\omega\;.
$$
The latter equality and the spectral decomposition of $H$  imply  the relation \eqref{eq:gs-p1}. 
\qed

\begin{remark}[Absolute ground state]
The state $\omega\in \n{E}_{\rr{A}}$ will be called an \emph{absolute ground state} of $H$ if 
$$
 \omega(H)\;=\: \lambda_0\;:=\; \min\; {\rm Spec}(H)\;.
 $$
Since ${\rm Spec}(\pi_\omega(H))\subseteq {\rm Spec}(H)$ one gets that $\lambda_0\leqslant \min {\rm Spec}(\pi_\omega(H))$. However,
the  same argument in the proof of Proposition \ref{prop:gs} shows  that $\lambda_0$ is the eigenvalue of $\pi_\omega(H)$ related to the cyclic vector $\psi_\omega$.
  Therefore, one gets 
   that $\lambda_0=\min {\rm Spec}(\pi_\omega(H))$ meaning that an {absolute} ground state is  a ground state according to Definition \ref{deg:gs}.
\hfill $\blacktriangleleft$
\end{remark}

\medskip

The concept of ground states  can be expressed in terms of a certain dynamical condition as in 
\cite[Definition  5.3.18]{bratteli-robinson-97}.
The next characterization of ground states is an adaption of \cite[Proposition  5.3.19]{bratteli-robinson-97}. 
\begin{proposition}\label{prop_ground_A}
Let $H=H^*$ be a self-adjoint element of the $C^*$-algebra $\rr{A}$. 
If $\omega\in \n{E}_{\rr{A}}$ is a ground state for $H$ then
\begin{equation}\label{ineq_groun}
-\ii\;\omega\big(A^*\delta_H(A)\big)\;\geqslant\;0\;, \qquad \forall\; A\in\rr{A}\;.  
\end{equation}
On the other hand, if  $\omega\in \n{E}_{\rr{A}}$ is a pure state which meets condition \eqref{ineq_groun} then $\omega$ is a ground state for $H$.
\end{proposition}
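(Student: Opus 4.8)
The plan is to pull the bilinear expression back into the GNS representation $(\pi_\omega,\s{H}_\omega,\psi_\omega)$ and to lean on two facts already in hand: that a ground state is an eigenstate (Proposition \ref{prop:gs}) and that a pure invariant state is an eigenstate (Proposition \ref{prop:pur_inv}). First I would record the purely algebraic identity $-\ii\,A^*\delta_H(A)=A^*HA-A^*AH$, which follows from $\delta_H(A)=\ii[H,A]$, together with the GNS computation $\omega(A^*HA)=\langle\pi_\omega(A)\psi_\omega,\pi_\omega(H)\pi_\omega(A)\psi_\omega\rangle_{\s{H}_\omega}$. Whenever $\psi_\omega$ happens to be an eigenvector of $\pi_\omega(H)$ with eigenvalue $\omega(H)$, the second term collapses to $\omega(H)\,\|\pi_\omega(A)\psi_\omega\|^2$, so that for every $A\in\rr{A}$
\[
-\ii\,\omega\big(A^*\delta_H(A)\big)\;=\;\big\langle \pi_\omega(A)\psi_\omega,\,\big(\pi_\omega(H)-\omega(H){\bf 1}\big)\,\pi_\omega(A)\psi_\omega\big\rangle_{\s{H}_\omega}\;.
\]
This single identity drives both implications.

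For the forward direction, Proposition \ref{prop:gs} tells me that a ground state is an eigenstate of $H$ with eigenvalue $\lambda_*=\min{\rm Spec}(\pi_\omega(H))$, so $\pi_\omega(H)\psi_\omega=\lambda_*\psi_\omega$ and the displayed identity applies with $\omega(H)=\lambda_*$. Since $\lambda_*$ is the bottom of the spectrum of the self-adjoint operator $\pi_\omega(H)$, the operator $\pi_\omega(H)-\lambda_*{\bf 1}$ is positive and the right-hand side is manifestly $\geqslant 0$, which is exactly \eqref{ineq_groun}.

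The converse is where the real work lies. The first step is to deduce invariance, $\omega\circ\delta_H=0$. I would test \eqref{ineq_groun} on $A={\bf 1}+sB$ with $s\in\R$ and arbitrary $B\in\rr{A}$; using $\delta_H({\bf 1})=0$ this gives
\[
-\ii\,\omega\big(A^*\delta_H(A)\big)\;=\;s\,\big(-\ii\,\omega(\delta_H(B))\big)\;+\;s^2\,\big(-\ii\,\omega(B^*\delta_H(B))\big)\;\geqslant\;0\;.
\]
Reading \eqref{ineq_groun} as the statement that this is a non-negative real number, the quadratic coefficient is $\geqslant 0$ (the case $A=B$), so this is a parabola in $s$ that vanishes at $s=0$ and is non-negative on both sides of the origin; hence its linear coefficient must vanish, giving $\omega(\delta_H(B))=0$ for every $B$, i.e. $\omega$ is invariant. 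Because $\omega$ is pure, Proposition \ref{prop:pur_inv} now promotes invariance to the eigenvector equation $\pi_\omega(H)\psi_\omega=\lambda\psi_\omega$ with $\lambda=\omega(H)\in{\rm Spec}(\pi_\omega(H))$.

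It remains to identify $\lambda$ with $\lambda_*=\min{\rm Spec}(\pi_\omega(H))$. With the eigenvector property now available, the displayed identity yields $\langle\pi_\omega(A)\psi_\omega,(\pi_\omega(H)-\lambda{\bf 1})\pi_\omega(A)\psi_\omega\rangle_{\s{H}_\omega}\geqslant 0$ for all $A\in\rr{A}$; since $\psi_\omega$ is cyclic, the vectors $\pi_\omega(A)\psi_\omega$ are dense in $\s{H}_\omega$, so $\pi_\omega(H)-\lambda{\bf 1}\geqslant 0$ by continuity, whence $\lambda\leqslant\min{\rm Spec}(\pi_\omega(H))=\lambda_*$. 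As $\lambda$ already lies in the spectrum, $\lambda=\lambda_*$ and $\omega$ is a ground state. The main obstacle is this converse: extracting invariance from the inequality via the two-sided first-order perturbation $A={\bf 1}+sB$, and recognizing that purity is exactly the hypothesis that upgrades invariance to the eigenvector equation through Proposition \ref{prop:pur_inv}, after which positivity combined with cyclicity closes the argument.
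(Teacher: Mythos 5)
Your proof is correct and its architecture coincides with the paper's: the forward direction combines Proposition \ref{prop:gs} with the positivity of $\pi_\omega(H)-\lambda_*{\bf 1}$ in the GNS representation, and the converse runs through invariance, then purity via Proposition \ref{prop:pur_inv} to get the eigenvector equation, then a cyclicity argument identifying $\omega(H)$ with $\min {\rm Spec}\big(\pi_\omega(H)\big)$ (your ``$\pi_\omega(H)-\lambda{\bf 1}\geqslant 0$ on a dense set'' step is the same as the paper's variational argument via \cite[Theorem 2.19]{Tes}). The one genuine difference is the invariance step: the paper disposes of it by citing \cite[Lemma 5.3.16]{bratteli-robinson-97}, whereas you reprove that lemma inline by testing \eqref{ineq_groun} on $A={\bf 1}+sB$, $s\in\R$, and arguing that a quadratic $s\mapsto s\,c_1+s^2c_2$ with $c_2\geqslant 0$, vanishing at $s=0$ and non-negative for all real $s$, must have $c_1=0$. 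This makes your argument self-contained where the paper's is not (your perturbation computation is in fact the standard proof of the cited lemma), at the cost of a few extra lines; you also correctly flag the subtlety that \eqref{ineq_groun} must be read as asserting that $-\ii\,\omega\big(A^*\delta_H(A)\big)$ is a non-negative \emph{real} number, which is what forces the linear coefficient $c_1=-\ii\,\omega(\delta_H(B))$ to be real and hence to vanish.
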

\proof
Let $( \pi_\omega,\mathcal{H}_\omega,\psi_\omega)$ be the GNS representation associated with the state $\omega$. For the first implication, we have that
\begin{equation*}
    \begin{split}
        -\ii\;\omega\big(A^*\delta_H(A)\big)\;&=\;\omega\big(A^*HA-A^*AH\big)\;=\;\omega\big(A^*HA\big)-\omega\big((A^*A)H\big) 
        \\
        &=\;\omega\big(A^*HA\big)-\lambda_*\,\omega\big(A^*A\big)\;=\;\omega\big(A^*(H-\lambda_*{\bf 1}
        )A\big)\\
        &=\;\langle \psi_\omega\,,\, \pi_\omega(A)^*(\pi_\omega(H)-\lambda_*{\bf 1}
        )\pi_\omega(A)\psi_\omega\rangle\\
      &  =\;\langle \psi_A\,,\,(\pi_\omega(H)-\lambda_*{\bf 1}
        )\psi_A\rangle\;\geq\; 0\;,
    \end{split}
\end{equation*}
where in the second line we  used Proposition \ref{prop:gs} which ensures that $\omega$ is an eigenstate of $H$ with eigenvalue $\lambda_\ast$. In the last line, we introduced
$\psi_A:=\pi_\omega(A)\psi_\omega$, and the last inequality is a consequence of the fact that $\lambda_*$ is by assumption  the minimum of the spectrum of $\pi_\omega(H)$. For the second implication, let us start by observing that \cite[Lemma 5.3.16]{bratteli-robinson-97} implies $\omega\circ \delta_H=0$, \ie  
$\omega$  is invariant under the dynamics induced by $H$. Assuming that $\omega$ is pure one can use Proposition \ref{prop:pur_inv} which shows that $\omega$ is an eigenstate of $H$
with eigenvalue $\lambda_\ast:=\omega(H)$.
With this result
inequality 
\eqref{ineq_groun} reads $\lambda_*\omega(A^*A)\leqslant\omega(A^*HA)$ for every $A\in \rr{A}$. The latter implies
$$
\lambda_*\;\leqslant\;\inf_{\substack{A\in \rr{A} \\ \omega(A^*A)>0}} \frac{\omega(A^*HA)}{\omega(A^*A)}\;=\;\inf_{\substack{A\in \rr{A} \\ \omega(A^*A)>0}} \frac{\langle\psi_A,\pi_\omega(H)\psi_A\rangle_{\s{H}_\omega}}{\|\psi_A\|^2_{\s{H}_\omega}}\;.
$$
 In view of the cyclicity of the vector $\psi_\omega$ one finally gets
$$
\lambda_*\;\leqslant\;\inf_{ \|\phi\|_{\s{H}_\omega}=1} \langle\phi,\pi_\omega(H)\phi\rangle_{\s{H}_\omega}\;=\;\inf {\rm Spec}\big(\pi_\omega(H)\big)
$$
where the last inequality is justified by  \cite[Theorem 2.19]{Tes}. Since $\lambda_*$ belongs to the spectrum of $\pi_\omega(H)$ one concludes that $\lambda_*=\min {\rm Spec}\big(\pi_\omega(H)\big).$
\qed

\subsection{Gapped ground states}\label{sec:gap}
In many applications it is relevant to know when the ground state of a system is protected by a gap from the rest of the spectrum. Let us introduce the following precise definition.

\begin{definition}[Gap condition]
Let  $H=H^*$ be a self-adjoint element of the $C^*$-algebra $\rr{A}$ and $\omega\in \n{E}_{\rr{A}}$ a ground state of $H$ according to Definition  \ref{deg:gs}. Then, we will say that the ground state is \emph{gapped} if
\begin{equation}\label{eq:gap_02}
    {\rm Spec}\big(\pi_\omega(H)\big)\;\cap\;(\lambda_\ast,\lambda_\ast+\Delta)\;=\;\emptyset\;.
\end{equation}
for some $\Delta>0$. Moreover, if
\[
{\rm Ker}(\pi_\omega(H)-\lambda_*{\bf 1})\;=\:\{\psi_\omega\}
\] 
we will say that the  gapped ground state is \emph{non-degenerate}.
\end{definition}

\medskip

Let us denotes with $\mathcal{K}_\omega:={\rm Ker}(\pi_\omega(H)-\lambda_*{\bf 1})^\bot$ the 
orthogonal complement in $\mathcal{H}_\omega$  to
the eigenspace associated with the ground state $\lambda_\ast$. By observing that the infimum of  ${\rm Spec}\big(\pi_\omega(H)\big)\setminus \{\lambda_\ast\}$ can be estimated with the help of \cite[Theorem 2.19]{Tes} applied to the subspace   $\mathcal{K}_\omega$, one can reformulate the gap condition by
\begin{equation}\label{eq:gpa_ineq}
\inf_{ \substack{\phi\in\mathcal{K}_\omega\\\|\phi\|_{\s{H}_\omega}=1}} \langle\phi,\pi_\omega(H)\phi\rangle_{\s{H}_\omega}\;\geqslant\;\lambda_\ast+\Delta\;.
\end{equation}

\medskip

The next result shows that also the gap condition can be characterized by a dynamical property \cite[Section 6]{Fannes-nachtergaele-werner-92}.

\begin{proposition}\label{Prop gs}
Let  $H=H^*$ be a self-adjoint element of the $C^*$-algebra $\rr{A}$ and $\omega\in \n{E}_{\rr{A}}$ a ground state of $H$. Then, $\omega$ is a 
non-degenerated {gapped ground state}  if and only there exists a $\Delta >0$ such that
\begin{equation}\label{eq:gap_01}
-\ii\;\omega\big(A^*\delta_H(A)\big)\;\geqslant\;\Delta\left(\omega\big(A^*A\big)-|\omega(A)|^2\right)\;, \qquad \forall\; A\in\rr{A}\;.  
\end{equation}
\end{proposition}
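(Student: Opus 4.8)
The plan is to transport everything into the GNS representation $(\pi_\omega,\s{H}_\omega,\psi_\omega)$ and to read both sides of \eqref{eq:gap_01} as quadratic forms on $\s{H}_\omega$. Since $\omega$ is a ground state, Proposition \ref{prop:gs} guarantees that it is an eigenstate of $H$ with eigenvalue $\lambda_\ast$, hence $\pi_\omega(H)\psi_\omega=\lambda_\ast\psi_\omega$ by Corollary \ref{coro 2.15}; in particular $\psi_\omega\in{\rm Ker}(\pi_\omega(H)-\lambda_\ast{\bf 1})$. Writing $\psi_A:=\pi_\omega(A)\psi_\omega$, the very computation carried out in the proof of Proposition \ref{prop_ground_A} gives the identity
$$
-\ii\,\omega\big(A^*\delta_H(A)\big)\;=\;\langle\psi_A,(\pi_\omega(H)-\lambda_\ast{\bf 1})\psi_A\rangle_{\s{H}_\omega}\;,
$$
which is the left-hand side of \eqref{eq:gap_01}. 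For the right-hand side I would use $\omega(A^*A)=\|\psi_A\|^2_{\s{H}_\omega}$ and $\omega(A)=\langle\psi_\omega,\psi_A\rangle_{\s{H}_\omega}$, so that, denoting by $R:=|\psi_\omega\rangle\langle\psi_\omega|$ the rank-one projection onto $\C\psi_\omega$,
$$
\omega\big(A^*A\big)-|\omega(A)|^2\;=\;\|({\bf 1}-R)\psi_A\|^2_{\s{H}_\omega}\;.
$$
Thus \eqref{eq:gap_01} is equivalent to the assertion that $\langle\psi_A,(\pi_\omega(H)-\lambda_\ast{\bf 1})\psi_A\rangle_{\s{H}_\omega}\geqslant\Delta\,\|({\bf 1}-R)\psi_A\|^2_{\s{H}_\omega}$ for all $A\in\rr{A}$, and the key point is that subtracting $|\omega(A)|^2$ removes precisely the $\psi_\omega$–component.

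For the forward implication I assume $\omega$ is a non-degenerate gapped ground state, so that ${\rm Ker}(\pi_\omega(H)-\lambda_\ast{\bf 1})=\C\psi_\omega$, whence $\mathcal{K}_\omega=(\C\psi_\omega)^\bot$ and $R$ is the projection onto the whole ground eigenspace. Because $\pi_\omega(H)$ commutes with $R$ and $(\pi_\omega(H)-\lambda_\ast{\bf 1})R\psi_A=0$, the form localizes on the complement, i.e. $\langle\psi_A,(\pi_\omega(H)-\lambda_\ast{\bf 1})\psi_A\rangle_{\s{H}_\omega}=\langle({\bf 1}-R)\psi_A,(\pi_\omega(H)-\lambda_\ast{\bf 1})({\bf 1}-R)\psi_A\rangle_{\s{H}_\omega}$, with $({\bf 1}-R)\psi_A\in\mathcal{K}_\omega$. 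Applying the reformulated gap inequality \eqref{eq:gpa_ineq} to $\phi=({\bf 1}-R)\psi_A$ then produces exactly \eqref{eq:gap_01}.

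For the converse I would exploit the cyclicity of $\psi_\omega$, that is the density of $\{\psi_A:A\in\rr{A}\}$ in $\s{H}_\omega$, to probe the quadratic inequality by suitable vectors. First I establish non-degeneracy: if there were a unit vector $\psi_1\in{\rm Ker}(\pi_\omega(H)-\lambda_\ast{\bf 1})$ orthogonal to $\psi_\omega$, then choosing $A_n$ with $\psi_{A_n}\to\psi_1$ forces the left-hand side of \eqref{eq:gap_01} to tend to $\langle\psi_1,(\pi_\omega(H)-\lambda_\ast{\bf 1})\psi_1\rangle_{\s{H}_\omega}=0$ while the right-hand side tends to $\Delta\|({\bf 1}-R)\psi_1\|^2_{\s{H}_\omega}=\Delta>0$, a contradiction; hence ${\rm Ker}(\pi_\omega(H)-\lambda_\ast{\bf 1})=\C\psi_\omega$. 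With non-degeneracy in hand one has $\mathcal{K}_\omega=(\C\psi_\omega)^\bot$, so for an arbitrary unit vector $\phi\in\mathcal{K}_\omega$ I approximate $\phi$ by $\psi_{A_n}$; since $\phi\bot\psi_\omega$ we get $({\bf 1}-R)\psi_{A_n}\to\phi$, and passing to the limit in \eqref{eq:gap_01} yields $\langle\phi,(\pi_\omega(H)-\lambda_\ast{\bf 1})\phi\rangle_{\s{H}_\omega}\geqslant\Delta\|\phi\|^2_{\s{H}_\omega}$, which is precisely the gap reformulation \eqref{eq:gpa_ineq}, hence the gap condition \eqref{eq:gap_02}.

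I expect the main obstacle to be the converse direction, and specifically the extraction of \emph{non-degeneracy}: a bare lower bound over the orthogonal complement of the full ground eigenspace would only reproduce the gap, so the fact that \eqref{eq:gap_01} subtracts $|\omega(A)|^2$ (the projection onto $\C\psi_\omega$ only) rather than the projection onto the entire eigenspace is exactly what collapses the eigenspace to a line. Turning this observation into a contradiction relies on the cyclicity-based approximation together with the continuity in $\psi_A$ of both quadratic forms, and it is here that the argument must be carried out with care.
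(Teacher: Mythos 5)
Your proposal is correct and follows essentially the same route as the paper's proof: the same GNS quadratic-form identities (the paper's \eqref{eq:gapp_002} and \eqref{eq:gapp_003}, with your $({\bf 1}-R)\psi_A$ being the paper's $\psi_A^\bot$), the same application of the reformulated gap inequality \eqref{eq:gpa_ineq} in the forward direction, and the same cyclicity-based density argument in the converse. The only notable difference is that you prove non-degeneracy explicitly inside the converse via the contradiction with a kernel vector orthogonal to $\psi_\omega$, a point the paper's proof does not carry out in the proposition itself but defers to the subsequent Remark \ref{rem 22}, so your write-up is slightly more self-contained while mathematically equivalent.
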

\begin{proof}
 Let us start with the implication $(\Rightarrow)$. Since the ground state is assumed to be  non-degenerated one has that $\mathcal{K}_\omega\subset \mathcal{H}_\omega$ be the orthogonal complement of the one-dimensional subspace generated by $\psi_\omega$. 
 Let $\psi_A:=\pi_\omega(A)\psi_\omega$ and 
 $\psi^\bot_A:=\psi_A-\langle \psi_\omega,\psi_A\rangle_{\s{H}_\omega} \psi_\omega$ its orthogonal projection  on $\mathcal{K}_\omega$. In view of \eqref{eq:gpa_ineq} one has that 
 \begin{equation}\label{eq:gapp_001}
\big\langle \psi_A^\bot,\big(\pi_\omega(H)-(\lambda_*+\Delta){\bf 1}\big)\psi_A^\bot\big\rangle_{\s{H}_\omega}\;\geqslant\; 0 
\end{equation}
 for every $A\in\rr{A}$. However, a direct computation shows that
 \begin{equation}\label{eq:gapp_002}
\big\langle \psi_A^\bot,\psi_A^\bot\big\rangle_{\s{H}_\omega}\;=\; \omega(A^*A)\;-\;|\omega(A)|^2 
 \end{equation}
 and
 \begin{equation}\label{eq:gapp_003}
 \begin{aligned}
 \big\langle \psi_A^\bot,\big(\pi_\omega(H)-\lambda_*{\bf 1}\big)\psi_A^\bot\big\rangle_{\s{H}_\omega}\;&=\; \big\langle \psi_A,\big(\pi_\omega(H)-\lambda_*{\bf 1}\big)\psi_A\big\rangle_{\s{H}_\omega}\\
&=\; \omega(A^*(HA-AH))
\end{aligned}
  \end{equation}
 where in the last equality we used that $\lambda_\ast\omega(A^*A)=\omega(A^*AH)$ in view of Proposition \ref{prop:gs}. By putting together inequality  \eqref{eq:gapp_001} with \eqref{eq:gapp_002} and \eqref{eq:gapp_003} one finally gets \eqref{eq:gap_01}.
For the implication  $(\Leftarrow)$ let us observe that in view of \eqref{eq:gapp_002} and \eqref{eq:gapp_003} one has that condition \eqref{eq:gap_01} is equivalent to condition \eqref{eq:gapp_001}.
Now,  in view of the cyclicity of $\psi_\omega$, for any $\phi\in \mathcal{K}_\omega$ there is a sequence $\{A_n\}_{\n\in\N}\subset \mathfrak{A}$ such that  $\psi_{A_n}\rightarrow\phi$. This immediately implies that $\psi_{A_n}^\bot\rightarrow\phi$. Therefore, one has that 
 \[
 \big \langle \phi,\big(\pi_\omega(H)-(\lambda_*+\Delta){\bf 1}\big)\phi\big\rangle_{\s{H}_\omega}\;=\;\lim_{n\rightarrow\infty}\big \langle \psi_{A_n}^\bot,\big(\pi_\omega(H)-(\lambda_*+\Delta){\bf 1}\big)\psi_{A_n}^\bot\big\rangle_{\s{H}_\omega}\;\geqslant\; 0 \]
 in view of \eqref{eq:gapp_001}. The positivity of 
 the quantity on the left-hand side for every  $\phi\in \mathcal{K}_\omega$ implies \eqref{eq:gpa_ineq} which is equivalent to the gap condition.
 \end{proof}

\begin{remark}\label{rem 22}
The condition of non-degeneracy of the ground state in Proposition \ref{Prop gs} is  necessary for the implication $(\Rightarrow)$. In fact, if ${\rm dim}\;{\rm Ker}(\pi_\omega(H)-\lambda_*{\bf 1})>1$ there is 
a non-zero $\psi'\in {\rm Ker}(\pi_\omega(H)-\lambda_*{\bf 1})$ which is orthogonal to $\psi_\omega$. For simplicity let us assume that there exists a $A\in\rr{A}$ such that $\psi':=\psi_A$. In this case, the same computation in  \eqref{eq:gapp_003} provides $\omega\big(A^*\delta_H(A)\big)=0$
and the orthogonality between $\psi_A$ and $\psi_\omega$ reads $\omega(A)=0$. In summary, one obtains
\[
|\omega(A)|^2\;-\;\ii\Delta^{-1}\omega\big(A^*\delta_H(A)\big)\;=\;0\;<\;\|\psi_A\|_{\s{H}_\omega}^2\;=\;\omega(A^*A)
\]
which contradicts \eqref{eq:gap_01}.
The general case just follows by a density argument similar to that used in the second part of the proof of Proposition \ref{Prop gs}.
\hfill $\blacktriangleleft$
\end{remark}
%

\section{Fermi surfaces and eigenstates}\label{secFS}
In this section, we introduce the notion of \emph{generalized (or noncommutative) Fermi surface} for an abstract $C^*$-algebra $\A$. We prove that under suitable conditions on $\A$, we can construct an eigenstate associated with a  Fermi surface. Classical references in the physical literature about the notion of Fermi surfaces are \cite{Ash,Cal,Kit}. From the mathematical point of view, we will refer to the review 
 \cite{Kuc}.

\subsection{Standard notion of Fermi surface}\label{sec:stanFer}
Let us start  by introducing the concept of Fermi surface in a fairly standard framework.
Let $X$ be a compact metric  space and  consider the $C^*$-algebra 
\begin{equation}\label{eq:nic-alg}
\rr{A}\;:=\;C(X)\otimes \rr{K}\;\simeq\;C(X,\rr{K})\;.
\end{equation}
The isomorphism above \cite[Theorem 6.4.17]{Murphy} allows us to think of $\rr{A}$ as the $C^*$-algebra of continuous functions on $X$ with values on the compact operators $\mathfrak{K}$ (on some sperable Hilbert space $\s{H}$). 
Since we will need to consider the spectrum of the elements in $\rr{A}$ and the states of $\rr{A}$, we will consider $\rr{A}$ as a subalgebra of its standard unitization $\rr{A}^+$ \cite[Proposition 2.1.5]{bratteli-robinson-87} made by continuous functions of the type $X\ni x\mapsto A(x)+\lambda{\bf 1}$ with  $A(x)\in \mathfrak{K}$ and $\lambda\in\C$.
These technicalities become irrelevant if one replaces $\mathfrak{K}$ with the matrix algebra ${\rm Mat}_n(\C)$.
For every $A\in \A$  the following spectral equality holds true
\begin{equation}\label{spec: eq}
   {\rm Spec}(A)\;=\;\bigcup_{x\in X}{\rm Spec}\big(A(x)\big)\;.
 \end{equation}
 Equation \eqref{spec: eq} can be deduced from a very general claim about the spectra of direct integrals of operators \cite[Theorem XIII.85]{reed-simon-IV}. However, it can also be justified directly.
 First of all, since $A\mapsto A(x)$ is a $*$-homomorphism for every fixed $x\in X$, one has that  ${\rm Spec}(A(x))\subset {\rm Spec}(A)$
 and in turn 
$\bigcup_{x\in X}{\rm Spec}(A(x))\subseteq {\rm Spec} (A)$. The reverse inclusion follows from \cite[Theorem 1.4.6]{Kuch2} which claims that if $\lambda \notin {\rm Spec}(A(x))$ for all $x\in X$, namely if  $A(x)-\lambda {\bf 1}$ is invertible for all $x\in X$, then $A-\lambda {\bf 1}$ is invertible, and in turn  $\lambda \notin {\rm Spec}(A)$. Therefore, one gets $\bigcup_{x\in X}{\rm Spec}(A(x))\supseteq {\rm Spec} (A)$.
Given the equality \eqref{spec: eq} we can introduce the  \emph{Fermi surface} (or \emph{variety}) of $A$ at the spectral point $\lambda\in {\rm Spec}(A)$ as the subset of $X$ defined by
\begin{equation}\label{eq:FS}
\bb{F}^A_{\lambda}\;:=\;\{x\in X\;|\; \lambda\in {\rm Spec}(A(x))\}\;\subseteq\; X\;.
\end{equation}

\medskip

The concept of  Fermi surface originates initially in the study of electrical conduction in metals and is a central concept in solid state physics \cite{Ash,Kit}. In fact, the Hamiltonians describing the dynamics of an electron in $d$-dimensional crystals are linear elliptic partial differential operators with $\Z^d$-periodic coefficients. The Bloch-Floquet transform maps (the resolvent of) such operators in elements of the algebra $\rr{A}=C(\n{T}^d,\rr{K})$ \cite{Kuc,Kuch2}. The $d$-dimensional torus $\n{T}^d$, which appears in this context as the Pontryagin dual of the symmetry group  $\Z^d$, 
 is known as the \emph{Brillouin zone}. Therefore, in condensed matter physics the 
Fermi surfaces are the \emph{isoenergy} level sets contained in the Brillouin zone of a given periodic Hamiltonian.
A toy model of this type is described in Section \ref{sec:graf}.

\subsection{Eigenstates for Fermi surfaces}\label{asec:class_eig}
Our next task is to show that under certain circumstances it is possible to associate an eigenstate to a Fermi surface defined by a self-adjoint operator.

\medskip

As in the previous section let us focus on the $C^*$-algebra $\A=C(X,\rr{K})$ with $X$ a 
compact metric space. Let $\bb{B}_X$ be the Borel $\sigma$-algebra of $X$ and  assume that $X$ is endowed with a probability Borel measure $\mu$.
Let $H=H^*$ be a self-adjoint element in $\A$, $\lambda\in{\rm Spec}(H)$ a spectral point and $\bb{F}^H_{\lambda}$ the associated Fermi surface according to \eqref{eq:FS}. 
Since $H(x)$ is a compact operator for every $x\in X$, 
one has that ${\rm Spec}(H(x))$ is made by an increasing  sequence of real eigenvalues
 \[
\varepsilon_{\rm min} \;\leq\; \varepsilon_0(x)\;\leq\; \varepsilon_1(x)\;\leq\;\ldots\;\leq\;\varepsilon_n(x)\;\leq\;\ldots\;\leq\;\varepsilon_{\rm max}\;,
\]
that are repeated according to their  multiplicity,
and $\varepsilon_{\rm min}$ and $\varepsilon_{\rm max}$ are the minimum and the maximum of ${\rm Spec}(H)$, respectively. We will refer to every $\varepsilon_n:X\to\R $ as the $n$-\emph{th energy function}. Standard 
perturbation theory provides that the {energy function} $\varepsilon_n$ are 
continuous \cite{kato-80,Kuch2}. Therefore $\varepsilon_n^{-1}(\{\lambda\})$
is a closed (hence Borelian) subset of $X$ for every $\lambda\in\R$, and evidently $\varepsilon_n^{-1}(\{\lambda\})=\emptyset$ whenever $\lambda\notin {\rm Spec}(H)$. In view of the equality \eqref{spec: eq} and the definition \eqref{eq:FS} one gets that
\[
\bb{F}^H_{\lambda}\;=\;\bigcup_{n=0}^{+\infty}\varepsilon_n^{-1}(\{\lambda\})\;,
\]
and as a consequence one concludes that every Fermi surface is an element of the  Borel $\sigma$-algebra $\bb{B}_X$.
\begin{assumption}[Finite band contribution]\label{ass:FBC}
There is only a finite number $M$ of energy functions $\{\varepsilon_{n_1},\ldots,\varepsilon_{n_M}\}$ such that $\varepsilon_{n_j}^{-1}(\{\lambda\})\neq \emptyset$.
\end{assumption}
\noindent
The consequence of the assumption above is that $\bb{F}^H_{\lambda}$ is a closed, hence  compact  subset of $X$.

\medskip

Consider now the function   $f_{H,\lambda}:X\to\R$ defined by
\[
f_{H,\lambda}(x)\;:=\;\prod_{j=1}^M\left(\varepsilon_{n_j}(x)-\lambda\right)\;.
\]
This is evidently a  continuous function (product of continuous functions), hence measurable,
such that $f_{H,\lambda}^{-1}(\{0\})=\bb{F}^H_{\lambda}$.
Theorem \ref{teo: desintegration} assures that there exists a disintegration $\{\mu_t^H\}_{t\in\R}$ of $\mu$ subordinated to 
$f_{H,\lambda}:X\to\R$ such that $\mu_t^H$ is supported in $f_{H,\lambda}^{-1}(\{t-\lambda\})$ (the shift in the variable is for notation purposes only). In particular, $\mu_\lambda^H$ turns out to be a probability measure supported on $\bb{F}^H_{\lambda}$. 
\begin{definition}[Fermi measure]\label{def:F-Mes}
The probability measure $\mu_\lambda^H$ on $X$ defined above will be called the Fermi measure associated with the Fermi surface $\bb{F}^H_{\lambda}$.
\end{definition}

\medskip

In order to go further we need to assume a gap condition subordinated to the Fermi surface.
\begin{assumption}[Local gap condition]\label{ass:loc_gap}
There is a positive constant $g>0$ such that
\[
\inf_{x\in \bb{F}^H_{\lambda}}\big\{{\rm dist}\left(\{\lambda\}, {\rm Spec}(H(x))\setminus \{\lambda\}\right)\big\}\;=\; g\;>\;0\;.
\]
Equivalently, on each point of the Fermi surface the eigenvalue $\lambda$ is separated by a finite gap of size at least $2g$ from the rest of the spectrum of $H(x)$.  
\end{assumption}
\noindent
Let $Q:\R\to\R$ be a continuous function such that
\[
Q(t)\;:=\;\left\{
\begin{aligned}
&1&\quad&\text{if}\; t\in \left[\lambda-\frac{g}{3},\lambda+\frac{g}{3}\right]\;,\\
&0&\quad&\text{if}\; t\in \left(-\infty,\lambda-\frac{2}{3}g\right]\cup\;\left[\lambda+\frac{2}{3}g,+\infty\right)\;.\\
\end{aligned}
\right.
\]
By functional calculus, one gets that $Q(H)\in\A$. Moreover, since the evaluation at  $x$ is a homomorphism of $C^*$-algebras, and homomorphisms commute with the functional calculus, one obtains that 
$Q(H)(x)=Q(H(x))$. Taking into account Assumption \ref{ass:loc_gap} one infers that 
\[
Q(H)(x)\;=\;P_\lambda^H(x)\;,\qquad x\in \bb{F}^H_{\lambda}\;,
\]
where $P_\lambda^H(x)$ denotes the spectral projection of $H(x)$ for the eigenvalue $\lambda$. In particular, in view of the compactness of $H(x)$ the projection $P_\lambda^H(x)$ is finite-dimensional, hence trace class, whenever $\lambda\neq 0$. 

\medskip

Let us introduce the density matrix
\begin{equation}\label{eq:rho3}
\rho_\lambda^H(x)\;:=\;\left\{
\begin{aligned}&\frac{P_\lambda^H(x)}{{\rm Tr}(P_\lambda^H(x))}&\quad&\text{if}\; x\in \bb{F}^H_{\lambda}\;,\\
&0&\quad&\text{otherwise}\;.\\
\end{aligned}
\right.
\end{equation}
\begin{lemma}\label{lemma:meas}
Let Assumptions \ref{ass:FBC} and \ref{ass:loc_gap} be valid. Then, for every $A\in\A$ the map
\begin{equation}\label{eq:main_map}
X\;\ni\;x\;\longmapsto\; {\rm Tr}\left(\rho_\lambda^H(x) A(x)\right)\;\in\;\C
\end{equation}
is a measurable function.
\end{lemma}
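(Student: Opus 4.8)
The plan is to reduce the claim to the measurability of two scalar functions built from the finite-rank element $Q(H)$, together with the Borel character of the Fermi surface. Throughout I assume $\lambda\neq 0$, as is implicit in the definition \eqref{eq:rho3} of $\rho_\lambda^H$, so that $P_\lambda^H(x)$ is finite-dimensional and $Q(H)(x)=Q(H(x))$ is trace class for every $x\in X$. The first step is to rewrite the integrand using the two facts established just before the statement: $Q(H)(x)=P_\lambda^H(x)$ for $x\in\bb{F}^H_\lambda$, while $\rho_\lambda^H(x)=0$ off the Fermi surface. This yields
\[
{\rm Tr}\big(\rho_\lambda^H(x)\, A(x)\big)\;=\;\chi_{\bb{F}^H_\lambda}(x)\,\frac{{\rm Tr}\big(Q(H)(x)\, A(x)\big)}{{\rm Tr}\big(Q(H)(x)\big)}\;,
\]
where $\chi_{\bb{F}^H_\lambda}$ is the characteristic function of the Fermi surface. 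Since $\bb{F}^H_\lambda\in\bb{B}_X$ (indeed it is compact under Assumption \ref{ass:FBC}), the factor $\chi_{\bb{F}^H_\lambda}$ is measurable, so it remains to treat the numerator, the denominator, and the well-posedness of the quotient.

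For the denominator I would observe that on $\bb{F}^H_\lambda$ one has ${\rm Tr}(Q(H)(x))={\rm Tr}(P_\lambda^H(x))\geqslant 1$, because $P_\lambda^H(x)$ is a nonzero projection there; hence no division by zero occurs on the support of $\chi_{\bb{F}^H_\lambda}$, and the quotient may be assigned an arbitrary value elsewhere without affecting the product. The heart of the argument is then the measurability of the maps $x\mapsto{\rm Tr}(Q(H)(x)A(x))$ and $x\mapsto{\rm Tr}(Q(H)(x))$. I would prove this by fixing an orthonormal basis $\{e_i\}_{i\in\N}$ of $\s{H}$ and writing, for each trace-class operator $Q(H)(x)A(x)$,
\[
{\rm Tr}\big(Q(H)(x)\, A(x)\big)\;=\;\lim_{N\to\infty}\sum_{i=1}^N\big\langle e_i,\,Q(H)(x)\, A(x)\, e_i\big\rangle_{\s{H}}\;.
\]
Since $Q(H)$ and $A$ belong to the $C^*$-algebra $\A=C(X,\rr{K})$, their product $Q(H)A$ again lies in $\A$, so $x\mapsto Q(H)(x)A(x)$ is norm-continuous; consequently each summand $x\mapsto\langle e_i,Q(H)(x)A(x)e_i\rangle_{\s{H}}$ is continuous, every partial sum is continuous, and the trace is a pointwise limit of continuous functions, hence Borel measurable. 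The same reasoning with $A$ replaced by the unit gives the measurability of the denominator.

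The main obstacle is precisely this last point: the trace functional is continuous only in trace norm, not in operator norm, so the operator-norm continuity of $x\mapsto Q(H)(x)A(x)$ inherited from the $C^*$-algebra structure does not transfer directly to continuity of the trace. The device of expanding the trace in a fixed basis circumvents this by trading continuity for measurability, which is all that the statement requires. (Alternatively, one could exploit the uniform finite-rank bound furnished by Assumptions \ref{ass:FBC} and \ref{ass:loc_gap} to upgrade $x\mapsto Q(H)(x)$ to a trace-norm continuous map and deduce continuity of the trace outright; the basis-expansion route is cleaner and sufficient here.) Collecting the pieces, the integrand is the product of a measurable characteristic function with a quotient of measurable functions whose denominator is bounded below on the relevant set, and is therefore measurable, which completes the proof.
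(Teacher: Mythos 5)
Your proof is correct, and its outer skeleton coincides with the paper's: both write the map \eqref{eq:main_map} as the characteristic function $\chi_{\bb{F}^H_{\lambda}}$ times the quotient ${\rm Tr}\big(Q(H)(x)A(x)\big)/{\rm Tr}\big(Q(H)(x)\big)$, observe that on the Fermi surface the denominator equals ${\rm Tr}\big(P_\lambda^H(x)\big)\geqslant 1$, and conclude by stability of measurability under products and quotients. The genuine difference is in how measurability of the two trace functions is established. The paper restricts to the compact set $\bb{F}^H_{\lambda}$ via the restriction homomorphism $\iota:C(X,\rr{K})\to C(\bb{F}^H_{\lambda},\rr{K})$ and asserts that $x\mapsto {\rm Tr}\big(P_\lambda^H(x)\iota(A)(x)\big)$ is \emph{continuous} there; as you correctly flag, this does not follow from norm continuity alone (the trace is only trace-norm continuous), and the ingredient left implicit in the paper is that a norm-continuous family of finite-rank projections has locally constant rank, which upgrades $x\mapsto P_\lambda^H(x)A(x)$ to a trace-norm continuous family. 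Your basis-expansion device sidesteps this point entirely: the partial sums $\sum_{i=1}^N\langle e_i,Q(H)(x)A(x)e_i\rangle_{\s{H}}$ are continuous on all of $X$, so their pointwise limit is Borel. You obtain a weaker conclusion (measurability instead of continuity on $\bb{F}^H_{\lambda}$), but that is all the lemma needs, and the argument is more elementary and avoids the rank-constancy subtlety altogether.

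Two small points to tighten. First, your expansion requires $Q(H)(x)A(x)$ to be trace class for \emph{every} $x\in X$, not only on the Fermi surface; this is true but deserves a line of justification: since $H(x)$ is compact on an infinite-dimensional space, $0\in{\rm Spec}(H(x))$ for all $x$, so Assumption \ref{ass:loc_gap} applied at any point of the (nonempty) Fermi surface forces $g\leqslant|\lambda|$, hence the support of $Q$ is bounded away from $0$ and $Q(H(x))$ is finite rank everywhere (the same observation is what guarantees $Q(H)\in\A$ rather than merely $\A^+$, so the paper relies on it implicitly as well). Second, ``assigned an arbitrary value'' off the Fermi surface should be ``assigned a measurable value''; for instance, replace the denominator by $\max\{{\rm Tr}(Q(H)(x)),1\}$, which is measurable and agrees with ${\rm Tr}(Q(H)(x))$ on $\bb{F}^H_{\lambda}$, so the product with $\chi_{\bb{F}^H_{\lambda}}$ is unchanged.
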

\proof
Since by Assumption $\ref{ass:FBC}$ the Fermi surface $\bb{F}^H_{\lambda}\subset X$ is compact, there is a restriction map $\iota:C(X,\rr{K})\to C(\bb{F}^H_{\lambda},\rr{K})$ which is a homomorphism of $C^*$-algebras. Said differently $\iota(Q)=Q|_{\bb{F}^H_{\lambda}}$ is a continuous map and 
$\iota(Q(H))(x)=P_\lambda^H(x)$ is trace class. Therefore the map
\[
\bb{F}^H_{\lambda}\;\ni\;x\;\longmapsto\; {\rm Tr}\left(\iota(Q(H)A)(x)\right)\;=\; {\rm Tr}\left(P_\lambda^H(x)\iota(A)(x)\right)\;\in\;\C
\]
is continuous, hence measurablel on $\bb{F}^H_{\lambda}$. The same is true for the map
\[
\bb{F}^H_{\lambda}\;\ni\;x\;\longmapsto\; {\rm Tr}\left(\iota(Q(H))(x)\right)\;=\; {\rm Tr}\left(P_\lambda^H(x)\right)\;\neq\;0\;.
\]
As a result 
\begin{equation}\label{eq:main_map2}
\bb{F}^H_{\lambda}\;\ni\;x\;\longmapsto\; \frac{{\rm Tr}\left(P_\lambda^H(x)\iota(A)(x)\right)}{{\rm Tr}\left(P_\lambda^H(x)\right)}\;\in\;\C\;
\end{equation}
is a measurable function.
The proof is concluded by observing that the map \eqref{eq:main_map} is obtained by multiplying the map \eqref{eq:main_map2} by the characteristic function $\chi_{\bb{F}^H_{\lambda}}$ of the Fermi surface $\bb{F}^H_{\lambda}$, the latter being  a measurable function on $X$.\qed

\medskip
We are now in a position to prove that Fermi surfaces are related to the existence of specific eigenstates.

\begin{theorem}\label{prop: eigenstate fermi_S}
Assume that $X$ is a compact  metric space endowed with a 
Borel probability measure $\mu$. Let $H=H^*$ be a self-adjoint element of the $C^*$-algebra $\A:=C(X,\rr{K})$,  $\lambda\in {\rm Spec}(H)\setminus\{0\}$ a spectral point, $\bb{F}^H_{\lambda}$ the associated Fermi surface defined by \eqref{eq:FS} and $\mu_\lambda^H$ the Fermi measure described in Definition \ref{def:F-Mes}. If Assumptions \ref{ass:FBC} and \ref{ass:loc_gap} are satisfied then the functional
\begin{equation}
    \omega_\lambda(A)\;:=\;\int_{X}\dd \mu^H_\lambda(x)\;{\rm Tr}\left(\rho_\lambda^H(x) A(x)\right)\;,\qquad A\in \A\;,
\end{equation}
with $\rho_\lambda^H$ defined by \eqref{eq:rho3},
provides an eigenstate of $H$ with eigenvalue $\lambda$.
\end{theorem}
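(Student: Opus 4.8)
My plan is to prove the statement in three stages: first I would check that $\omega_\lambda$ is well-defined and bounded, then that it is a genuine state, and finally that it satisfies the eigenstate equation, this last step being reduced to Proposition \ref{prop:quad=0}. Well-definedness is almost free: Lemma \ref{lemma:meas} guarantees that the integrand $x\mapsto{\rm Tr}(\rho_\lambda^H(x)A(x))$ is measurable, and the pointwise bound $|{\rm Tr}(\rho_\lambda^H(x)A(x))|\leqslant\|A(x)\|\leqslant\|A\|$ --- valid because $\rho_\lambda^H(x)$ is a density matrix on $\bb{F}^H_\lambda$ and zero elsewhere --- shows the integral converges with $|\omega_\lambda(A)|\leqslant\|A\|$. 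I would extend the same formula to the unitization $\A^+$ by writing a general element as $A_0+c{\bf 1}$ with $A_0\in\A$, $c\in\C$, and using ${\rm Tr}(\rho_\lambda^H(x){\bf 1})={\rm Tr}(\rho_\lambda^H(x))=\chi_{\bb{F}^H_\lambda}(x)$.

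For the state axioms, linearity is inherited from the trace and the integral. Positivity I would check fiberwise: for $A=B^*B$ one has $A(x)\geqslant0$ and $\rho_\lambda^H(x)\geqslant0$, hence ${\rm Tr}(\rho_\lambda^H(x)A(x))\geqslant0$, and integrating against the positive measure $\mu_\lambda^H$ gives $\omega_\lambda(A)\geqslant0$. For normalization I would invoke the defining property of the Fermi measure (Definition \ref{def:F-Mes}): since $\mu_\lambda^H$ is a probability measure supported on $\bb{F}^H_\lambda$, on which ${\rm Tr}(\rho_\lambda^H(x))=1$,
\[
\omega_\lambda({\bf 1})\;=\;\int_X {\rm Tr}(\rho_\lambda^H(x))\,\dd\mu_\lambda^H(x)\;=\;\mu_\lambda^H(\bb{F}^H_\lambda)\;=\;1\;,
\]
so $\omega_\lambda\in\n{E}_{\A}$.

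The eigenstate property is where the geometry of the construction enters, yet it turns out to be the shortest step. Since $H=H^*$ and $\lambda\in\R$, the element $H-\lambda{\bf 1}\in\A^+$ is self-adjoint, so by Proposition \ref{prop:quad=0} it suffices to show $\omega_\lambda\big((H-\lambda{\bf 1})^2\big)=0$. I would reduce this to a fiberwise identity: off $\bb{F}^H_\lambda$ the density matrix vanishes by \eqref{eq:rho3}, while on $\bb{F}^H_\lambda$ the projection $P_\lambda^H(x)$ lands in the $\lambda$-eigenspace of $H(x)$, so $(H(x)-\lambda{\bf 1})P_\lambda^H(x)=0$ and therefore ${\rm Tr}\big(\rho_\lambda^H(x)(H(x)-\lambda{\bf 1})^2\big)=0$ for every $x$. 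The integrand being identically zero, the integral vanishes and Proposition \ref{prop:quad=0} closes the argument. Equivalently, I could argue directly from cyclicity of the trace together with $H(x)\rho_\lambda^H(x)=\lambda\,\rho_\lambda^H(x)$ on the Fermi surface, obtaining $\omega_\lambda(BH)=\lambda\,\omega_\lambda(B)$ for all $B\in\A$.

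The genuinely delicate points, I expect, lie upstream rather than in this final argument: the main obstacles are the measurability of the integrand in Lemma \ref{lemma:meas} --- which needs the compactness of $\bb{F}^H_\lambda$ from Assumption \ref{ass:FBC} so that $P_\lambda^H(x)$ is honestly trace class --- and the construction of a Fermi measure supported exactly on $\bb{F}^H_\lambda$, which rests on the disintegration theorem together with the local gap condition (Assumption \ref{ass:loc_gap}) ensuring $Q(H)(x)=P_\lambda^H(x)$ there. With those results in hand, the verification above is essentially bookkeeping.
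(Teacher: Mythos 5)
Your proposal is correct and follows essentially the same route as the paper: well-definedness via Lemma \ref{lemma:meas} and the support of the Fermi measure, the state axioms checked fiberwise, and the eigenstate property reduced to the fiberwise identity $(H(x)-\lambda{\bf 1})P_\lambda^H(x)=0$ on $\bb{F}^H_{\lambda}$. The only (cosmetic) difference is in the last step: the paper verifies Definition \ref{def_00} directly, computing $\omega_\lambda(AH)=\lambda\,\omega_\lambda(A)$ via cyclicity of the trace under the integral, whereas your primary route passes through Proposition \ref{prop:quad=0} --- and the alternative you mention at the end is precisely the paper's argument.
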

\proof
First of all  the map $ \omega_\lambda:\A\to\C$ in \eqref{eigen_fer_S} is well defined in view of Lemma \ref{lemma:meas} and the fact that the 
Fermi measure is supported exactly on $\bb{F}^H_{\lambda}$.
 It is evidently linear and normalized
since
\[
 \omega_\lambda({\bf 1})\;=\;\int_{X}\dd \mu^H_\lambda(x)\;=\;\mu^H_\lambda(X)\;=\;1.
\]
In the last equation  we have tacitly identified $ \omega_\lambda$ with its canonical extension on $\A^+$ \cite[Corollary 2.3.13]{bratteli-robinson-87}.
 It is also positive since ${\rm Tr}(\rho_\lambda^H(x) A(x))\geqslant 0$ for every $x\in X$ if $A$ is a positive element. Summing up, it turns out that  $ \omega_\lambda$ is a state of $\A^+$. It remains to prove that $ \omega_\lambda$ is an eigenstate of $H$. Observe that when $x\in\bb{F}^H_{\lambda}$ one has that $P_\lambda^H(x)$ is the spectral projection of $H(x)$ at the eigenvalue $\lambda$ and therefore
 \[
 {\rm Tr}\left(\rho_\lambda^H(x) A(x)H(x)\right)\;=\; {\rm Tr}\left(H(x)\rho_\lambda^H(x) A(x)\right)\:=\;\lambda  {\rm Tr}\left(\rho_\lambda^H(x) A(x)\right)\;.
 \]
 On the other hand, the Fermi measure is supported exactly on $\bb{F}^H_{\lambda}$. As a result, one has that
 \[
 \begin{aligned}\label{eigen_fer_S}
    \omega_\lambda(AH)\;&=\;\int_{X}\dd \mu^H_\lambda\;{\rm Tr}\left(\rho_\lambda^H(x) A(x)H(x)\right)\;=\;\int_{\bb{F}^H_{\lambda}}\dd \mu^H_\lambda\;{\rm Tr}\left(\rho_\lambda^H(x) A(x)H(x)\right)\\
    &=\;\lambda\int_{\bb{F}^H_{\lambda}}\dd \mu^H_\lambda\;{\rm Tr}\left(\rho_\lambda^H(x) A(x)\right)\;=\;\lambda\int_{X}\dd \mu^H_\lambda\;{\rm Tr}\left(\rho_\lambda^H(x) A(x)\right)\;=\;\lambda\omega_\lambda(A)
\end{aligned}
 \]
 for every $A\in\R$. This concludes the proof.
\qed

\begin{definition}[Fermi eigenstate]
The eigenstate $\omega_\lambda$ defined by \eqref{eigen_fer_S} will be called the \emph{Fermi eigenstate} of $H$ at the spectral value $\lambda\in {\rm Spec}(H)\setminus\{0\}$.
\end{definition}

\begin{remark}
The crucial point of the proof of Theorem \ref{prop: eigenstate fermi_S} is to prove that the map
\[
\bb{F}^H_{\lambda}\;\ni\;x\;\longmapsto\;  {\rm Tr}\left(\rho_\lambda^H(x) A(x)\right)\;\in\;\C
\]
is measurable. Assumptions \ref{ass:FBC} and \ref{ass:loc_gap} are sufficient to ensure this fact but are not necessary in principle. \hfill $\blacktriangleleft$
\end{remark}

\subsection{A physical application: the graphene}\label{sec:graf}
Let $\rr{s}_1$ and $\rr{s}_2$ be the shift operators defined on $\ell^2(\Z^2)$ by
$(\rr{s}_j\psi)(n):=\psi(n-e_j)$ with $j=1,2$, $\psi\in \ell^2(\Z^2)$ and $e_1:=(1,0)$ and 
$e_2:=(0,1)$ the canonical basis of $\Z^2$. We will denote by $C^*(\rr{s}_1,\rr{s}_2)$ the unital $C^*$-algebra generated by the shift operators. Elements of $C^*(\rr{s}_1,\rr{s}_2)$ are usually called \emph{periodic} operators. In order to provide a mathematical model for the graphene we need to extend the algebra of  {periodic} operators to allow for the isospin degree of freedom. 
As explained in detail in \cite{denittis-lein-13} this is accomplished by considering the algebra $\rr{A}:=C^*(\rr{s}_1,\rr{s}_2)\otimes {\rm Mat}_2(\C)$. The \emph{isotropic nearest-neighbor model}  
perturbed by a \emph{stagger potential} $\gamma>0$ is described by the self-adjoint operator
\begin{equation}\label{op:graf}
H\;:=\;\begin{pmatrix}
+\gamma{\bf 1}& \rr{s}_1+\rr{s}_2&\\
\rr{s}_1^*+\rr{s}_2^*&-\gamma{\bf 1}
\end{pmatrix}\;.
\end{equation}
The Hamiltonian \eqref{op:graf}  can also be understood  as a two dimensional version of the Su-Schrieffer-Heeger (SSH) model \cite{su-schrieffer-heeger-79}.

\medskip

Consider the (inverse) Fourier transform $\bb{F}:\ell^2(\Z^2)\to L^2(\n{T}^2)$ where 
$\n{T}^2\simeq[0,2\pi)^2$ is the two-dimensional torus, parameterized by the coordinates $k=(k_1,k_2)$, and endowed with the normalized Haar measure
$\dd\mu(k):=(2\pi)^{-2}\dd k$. Under this transform one has that the shift operators are mapped to the multiplication by the phases $\expo{-\ii k_1}$ and $\expo{-\ii k_2}$, \ie
$\bb{F}\rr{s}_j\bb{F}^*={\expo{-\ii k_j}}$ for $j=1,2$. As a consequence one gets the isomorphism $\A\simeq C(\n{T}^2)\otimes {\rm Mat}_2(\C)$ and the operator \eqref{op:graf}
is mapped to the function $k\mapsto H(k)$ with
\begin{equation}\label{op:graf2}
H(k)\;:=\;\begin{pmatrix}
+\gamma{\bf 1}& \expo{-\ii k_1}+\expo{-\ii k_2}&\\
\expo{+\ii k_1}+\expo{+\ii k_2}&-\gamma{\bf 1}
\end{pmatrix}\;.
\end{equation}

\medskip

Since we are in the precise scenario of Section \ref{asec:class_eig} we can compute the Fermi surfaces and the related measures for the operator \eqref{op:graf2}. First of all it is easy to verify that $H(k)$ as two energy bands $\varepsilon_-(k)<\varepsilon_+(k)$ given by
\begin{equation}\label{eq:en_band}
\varepsilon_\pm(k)\;:=\;\pm\sqrt{\gamma^2+4\cos^2\left(\frac{k_1-k_2}{2}\right)}
\;
\end{equation}
and the related spectral projections are
\begin{equation}\label{eq:eigen_band}
P_\pm(k)\;:=\;\frac{1}{2}{\bf 1}_2\;+\; \frac{1}{2 \varepsilon_\pm(k)}H(k)
\;.
\end{equation}
The energy band $\varepsilon_\pm$ takes al the values between the limit values $\pm\gamma$ and $\pm\sqrt{\gamma^2+4}$. Therefore the spectrum of $H$ is given by
\[
\sigma(H)\;:=\;\left[-\sqrt{\gamma^2+4},-\gamma\right]\;\cup\;\left[\gamma,\sqrt{\gamma^2+4}\right]
\]
and shows a central gap of size $2\gamma$.

\medskip

Let us construct the Fermi surfaces for the upper band $\varepsilon_+$. The construction for the lower band $\varepsilon_-$ is totally similar.
For every $\lambda\in Y:=[\gamma,\sqrt{\gamma^2+4}]$, the Fermi surface 
$\bb{F}^H_{\lambda}$ is the null-set of the function $f_{H,\lambda}(k):=\varepsilon_+(k)-\lambda$, namely it is described by the solutions of the equations
\[
\cos\left(\frac{k_1-k_2}{2}\right)\;=\;\pm\frac{\sqrt{\lambda^2-\gamma^2}}{2}\;.
\]
Let us fix $\theta\equiv \theta(\lambda)$ as
\[
\theta\;:=\;\arccos\left(\frac{\sqrt{\lambda^2-\gamma^2}}{2}\right)\;\in\;\left[0,\frac{\pi}{2}\right]\;
\]
and consider the linear subsets
\[
\begin{aligned}
L_\pm\;&:=\;\{k\in\n{T}^2\;|\;k_2=k_1\pm2\theta\}\;,\\
G_\pm\;&:=\;\{k\in\n{T}^2\;|\;k_2=k_1\pm2(\pi-\theta)\}\;
\end{aligned}
\]
that depend on $\lambda$ through $\theta$.
Then, one gets that  
\[
\bb{F}^H_{\lambda}\;=\; L_+\;\cup\;L_-\;\cup\;G_+\;\cup\;G_-\;.
\]
The two extreme cases are given by $\lambda_{\rm min}:=\gamma$ which corresponds to $\theta=\frac{\pi}{2}$,  and $\lambda_{\rm max}:=\sqrt{\gamma^2+4}$ which provides $\theta=0$. In the first case, one obtains that the lines $L_\pm$ coincide with the lines $G_\pm$
and as a result one gets $\bb{F}^H_{\lambda_{\rm min}}=L_{+}^{\rm min}\cup L_{-}^{\rm min}$ where
\[
L_{\pm}^{\rm min}\;:=\;\{k\in\n{T}^2\;|\;k_2=k_1\pm \pi\}\;.
\]
In the second case, the four lines collapse into the single line
\[
L^{\rm max}\;:=\;\{k\in\n{T}^2\;|\;k_2=k_1\}\;
\]
which coincides with $\bb{F}^H_{\lambda_{\rm max}}$.

\medskip

The next task is to compute the Fermi measures induced by the upper  energy band.  For simplicity let us exclude the 
extreme cases by assuming $\lambda\neq \lambda_{\rm min},\lambda_{\rm max}$.
First, let us compute the expectation operator $E$ induced by $\varepsilon_+$. Observe that  $\varepsilon_+=f\circ g$, where  $g(k_1,k_2)=k_1-k_2$ and $f(y)=\sqrt{\gamma^2+4\cos^2\big(\frac{y}{2}\big)}$. Therefore, by invoking  Proposition \ref{prop: composition}, one gets $E=E^fE^g$. The expectation operators $E^g$ and $E^f$ are computed in  Examples \ref{ex: a} and \ref{dis: function}, respectively. By using these results one obtains  that 
\begin{equation}
\begin{split}
       E_\psi(\lambda)\;&=\;E^f_{E^g_\psi}(\lambda)\;=\;\frac{1}{4}\sum_{y\in P_\lambda}E^g_\psi(y)
\end{split}
\end{equation}
with $\psi\in L^1(\T^2,\mu)$,  $\lambda\in Y$
 a given spectral value and
 \[
 P_\lambda\;:=\;f^{-1}(\{\lambda\})\;=\;\big\{\pm2\theta,\pm2(\pi-\theta)\big\}\;.
 \]
On the other hand, from the computation in  Example \ref{ex: a}, and some manipulation, one obtains
\begin{equation*}
    \begin{split}
   E^g_\psi(\pm2\theta) \;&=\;     \int_{0}^{1}\dd \tau\;\psi\big(s_{L_\pm}(\tau)\big)\\
    \end{split}
\end{equation*}
where  $[0,1]\ni \tau\mapsto s_{L_\pm}(\tau)\in\n{T}^2$ is a linear parametrization of $L_\pm$
More precisely one has that
\[
\begin{aligned}
s_{L_+}(\tau)\;&:=\;\big(\tau(2\pi-2\theta), \tau(2\pi-2\theta)+2\theta \big)\;,\\
s_{L_-}(\tau)\;&:=\;\big(\tau(2\pi-2\theta)+2\theta, \tau(2\pi-2\theta) \big)\;.
\end{aligned}
\]
Similarly one has that
\begin{equation*}
    \begin{split}
   E^g_\psi\big(\pm2(\pi-\theta)\big) \;&=\;     \int_{0}^{1}\dd \tau\;\psi\big(s_{G_\pm}(\tau)\big)\\
\end{split}
\end{equation*}
with
\[
\begin{aligned}
s_{G_+}(\tau)\;&:=\;\big(\tau 2\theta, \tau 2\theta+(2\pi-2\theta) \big)\;,\\
s_{G_-}(\tau)\;&:=\;\big(\tau 2\theta+(2\pi-2\theta), \tau2\theta \big)\;.
\end{aligned}
\]
Putting the ingredients all together one obtains
\begin{equation}
E_\psi(\lambda)\;=\;\frac{1}{4}\sum_{\sharp=\pm}\int_{0}^{1}\dd \tau\;\left[\psi\big(s_{L_\sharp}(\tau)\big)+\psi\big(s_{G_\sharp}(\tau)\big)\right]\;
\end{equation}
and the disintegration $\mu^H_\lambda$ of the Haar measure 
induced by the energy band $\varepsilon_+$
takes the form
\[
\mu^H_\lambda(k_1,k_2)\;=\;\frac{1}{4}\sum_{\sharp=\pm}\int_{0}^{1}\dd \tau\;\left[\delta_{s_{L_\sharp}(\tau)}(k_1,k_2)\;+\;\delta_{s_{G_\sharp}(\tau)}(k_1,k_2)\right]
\]
where $\delta_{k}$ denotes the Dirac measure concentrated at the point $k\in\n{T}^2$.
Evidently, $\mu^H_\lambda$ is concentrated on the Fermi surface $\bb{F}^H_{\lambda}$.

\medskip

With these ingredients,  the associated Fermi eigenstate $\omega_\lambda$  is given by
\begin{equation*}
\begin{split}
\omega_\lambda(A)\;:=&\;\int_{\bb{F}^H_{\lambda}}{\rm d}\mu_\lambda^H(k_1,k_2)\,{\rm Tr}_{\C^2}\big(P_+(k_1,k_2)A(k_1,k_2)\big) \\
=&\; \frac{1}{4}\sum_{\sharp=\pm}\int_{0}^{1}\dd \tau\,{\rm Tr}_{\C^2}\big(P_+A\big(s_{L_\sharp}(\tau)\big)+P_+A\big(s_{G_\sharp}(\tau)\big)\big)\\
\end{split}
\end{equation*}
 for any $A\in \A$.


\subsection{Generalized notion of Fermi surface}\label{sec:GenFer}
We are now in a position to provide a generalized, algebraic definition of Fermi surface. 

\medskip

For a $C^*$-algebra $\A$ let us denote by $\hat{\A}$ to the spectrum of $\A$, \ie, the set of unitary  equivalence  classes  of  non-zero irreducible  $*$-representations of $\A$ \cite[Section 5.4]{Murphy}. There is a privileged topology, called \emph{hull-kernel} (or \emph{Jacobson}) \emph{topology} which makes $\hat{\A}$ a  topological space. Let ${\rm Prim}(\A)$ be the set of {primitive ideals} of $\A$. Then $\rr{I}\in {\rm Prim}(\A)$ if and only if there is a non-zero irreducible representation $\pi$ of $\A$ on the Hilbert space $\s{H}_\pi$ such that 
$\rr{I}={\rm Ker}(\pi)$ \cite[Theorem 5.4.2]{Murphy}. The latter characterization will be used here as the \virg{handy definition} of {primitive ideal}. Notice that primitive ideals are automatically closed and proper.
Since any two unitary equivalent representations have the same kernel, it makes sense to use the notation ${\rm Ker}([\pi])$ where $[\pi]$ is the equivalence class of $\pi$. This fact provides a  canonical surjection $\theta:\hat{\A}\to {\rm Prim}(\A)$ defined by $\theta:[\pi]\mapsto {\rm Ker}([\pi])$. 
The set ${\rm Prim}(\A)$ can be endowed with a unique topology that satisfies the following property: for each subset $\bb{R}\subseteq {\rm Prim}(\A)$ the closure of 
$\bb{R}$ is given by the set of primitive ideals of $\A$ which contain the intersection of the ideals in $\bb{R}$ \cite[Theorem 5.4.6]{Murphy}, that is
\begin{equation}\label{eq:clos}
\overline{\bb{R}}\;:=\;\left\{\rr{P}\in{\rm Prim}(\A)\;\left|\; \bigcap_{\rr{I}\in\bb{R}}\rr{I}\;\subseteq\;\rr{P}\right\}\right.\;.
\end{equation}
 The (induced) {hull-kernel} topology on $\hat{\A}$ is the 
weakest topology making the canonical surjection $\theta$ continuous. The topological space 
$\hat{\A}$  can be quite bizarre. However, one has that 
$\hat{\A}$ is a $T_0$ space if and only if the canonical surjection $\theta$ is a homeomorphism, namely if and only if $\hat{\A}\simeq {\rm Prim}(\A)$ as  topological spaces  \cite[Theorem 5.4.9]{Murphy}. In particular, this is the case when $\hat{\A}$ is a Hausdorff space. Let us also mention that when 
$\A$ is a unital $C^*$-algebra then $\hat{\A}$ turns out to be compact (this follows from \cite[Theorem 5.4.8]{Murphy}).

\medskip

For every  $A\in\A$ and $[\pi]\in\hat{\A}$ let us denote with $A([\pi])$ the image of $A$ in the quotient $\A_{[\pi]}:=\A/ Ker([\pi])$. Given any representative of $\pi$ of the class $[\pi]$, it is immediate to check that $\pi$  
provides an isomorphism of $C^*$-algebras between  $\A_{[\pi]}$ and $\pi(\A)$. Therefore, for any $A\in\A$ one has that
\[
{\rm Spec}(A([\pi]))\;=\; {\rm Spec}(\pi(A))\;\subseteq\; {\rm Spec}(A)
\] independently of the choice of the representative $\pi$.
The following result provides a generalization of \eqref{spec: eq}.

\begin{lemma}
For every $A\in\A$ it holds true that
\[
{\rm Spec}(A)\;=\;\bigcup_{[\pi]\in \hat{\A}}{\rm Spec}(A([\pi]))\;.
\]
\end{lemma}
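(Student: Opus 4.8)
The inclusion $\bigcup_{[\pi]\in\hat{\A}}{\rm Spec}(A([\pi]))\subseteq{\rm Spec}(A)$ has already been observed just above the statement, so the plan is to establish only the reverse inclusion ${\rm Spec}(A)\subseteq\bigcup_{[\pi]\in\hat{\A}}{\rm Spec}(A([\pi]))$. Fix $\lambda\in{\rm Spec}(A)$ and set $B:=A-\lambda{\bf 1}$, which by definition is not invertible in $\A$. I would reformulate the goal as follows: it suffices to exhibit a single nonzero irreducible representation $\pi$ of $\A$ for which $\pi(B)$ fails to be invertible, since then $\lambda\in{\rm Spec}(\pi(A))={\rm Spec}(A([\pi]))$ with $[\pi]\in\hat{\A}$.

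The first step is to reduce to a positive element. In a unital $C^*$-algebra an element is invertible if and only if both $B^*B$ and $BB^*$ are invertible (a left inverse is built from $(B^*B)^{-1}B^*$ and a right inverse from $B^*(BB^*)^{-1}$). Hence at least one of the positive elements $B^*B$ and $BB^*$ is non-invertible; treating first $C:=B^*B$, positivity forces $0=\min{\rm Spec}(C)\in{\rm Spec}(C)$.

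The crux is to convert this spectral information into an \emph{exact} statement inside an irreducible GNS representation, and this is where I expect the main difficulty to lie. Since $\A$ is unital, the state space $\n{E}_{\A}$ is weak-$\ast$ compact and convex, and the affine weak-$\ast$ continuous functional $\omega\mapsto\omega(C)$ attains its minimum, which equals $\min{\rm Spec}(C)=0$. The set of minimizers $F:=\{\omega\in\n{E}_{\A}\;|\;\omega(C)=0\}$ is a nonempty closed face (if a proper convex combination of states lies in $F$, positivity of $C$ forces each summand into $F$), so applying the Krein--Milman theorem to the compact convex set $F$ yields an extreme point $\omega_0\in F$, which is then a pure state of $\A$. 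Its GNS triple $(\pi_{\omega_0},\s{H}_{\omega_0},\psi_{\omega_0})$ is irreducible and satisfies $\langle\psi_{\omega_0},\pi_{\omega_0}(C)\psi_{\omega_0}\rangle_{\s{H}_{\omega_0}}=\omega_0(C)=0$; since $\pi_{\omega_0}(C)=\pi_{\omega_0}(B)^*\pi_{\omega_0}(B)\geqslant 0$, this gives $\pi_{\omega_0}(B)\psi_{\omega_0}=0$ (equivalently, $\omega_0$ is an eigenstate of $A$ with eigenvalue $\lambda$ in the sense of Theorem \ref{teo_rep}). In particular $\pi_{\omega_0}(B)$ has nontrivial kernel and cannot be invertible, so $\lambda\in{\rm Spec}(\pi_{\omega_0}(A))={\rm Spec}(A([\pi_{\omega_0}]))$.

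Finally, the remaining case in which $BB^*$ is non-invertible is handled by applying the previous argument verbatim to $B^*$ in place of $B$: one obtains a pure state $\omega_0$ with $\pi_{\omega_0}(B^*)\psi_{\omega_0}=0$, so that $\pi_{\omega_0}(B)^*=\pi_{\omega_0}(B^*)$ is not invertible and hence neither is $\pi_{\omega_0}(B)=\pi_{\omega_0}(A)-\lambda{\bf 1}$; again $\lambda\in{\rm Spec}(A([\pi_{\omega_0}]))$. In either case $\lambda$ lies in the union, which completes the reverse inclusion and hence the proof. The essential point, and the step that genuinely uses the compactness of the state space (\ie the unitality of $\A$), is that the minimum of the spectrum of a positive element is attained \emph{exactly} by a pure state, so that the union need not be closed; this is precisely the feature that can fail for non-unital algebras.
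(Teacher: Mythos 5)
Your proof is correct, and it takes a genuinely different route from the paper's. The paper argues globally: it forms the atomic representation $\rho=\bigoplus_{\omega\in\n{P}_\A}\pi_\omega$ over \emph{all} pure states, shows (by adapting the proof of \cite[Theorem 2.1.10]{bratteli-robinson-87}, with the norm-attaining state chosen pure) that $\rho$ is a $\ast$-isomorphism onto its image, and then writes ${\rm Spec}(A)={\rm Spec}(\rho(A))=\bigcup_{\omega}{\rm Spec}(\pi_\omega(A))$. You argue locally: for a fixed $\lambda\in{\rm Spec}(A)$ you produce one irreducible representation witnessing $\lambda$, by minimizing $\omega\mapsto\omega(B^*B)$ (with $B=A-\lambda{\bf 1}$, or $BB^*$ in the second case) over the weak-$\ast$ compact state space, extracting a pure state as an extreme point of the face of minimizers, and running GNS so that $\pi_{\omega_0}(B)\psi_{\omega_0}=0$. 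Each approach buys something. Yours is tighter at the one delicate point: for an infinite direct sum, the equality ${\rm Spec}(\rho(A))=\bigcup_\omega{\rm Spec}(\pi_\omega(A))$ is asserted rather than argued in the paper --- the inclusion $\supseteq$ is elementary (a block-diagonal invertible operator has block-diagonal inverse), but the inclusion $\subseteq$ requires ruling out that all the $\pi_\omega(A)-\lambda{\bf 1}$ are invertible with unboundedly growing inverses, which is essentially the content of the lemma itself; your construction sidesteps this entirely by exhibiting a single representation in which $A-\lambda{\bf 1}$ has nontrivial kernel. Moreover, your argument proves the slightly stronger statement that every spectral point is witnessed by a \emph{pure} eigenstate of $A$ (or, in the $BB^*$ case, of $A^*$ with eigenvalue $\overline{\lambda}$), which refines the relation ${\rm Eig}(A)={\rm Spec}_L(A)$ of Proposition \ref{prop_ev_02}. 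What the paper's route buys is brevity through citation --- as the paper itself notes, the whole proof collapses if one invokes \cite[Theorem A.38]{Rae} --- whereas you re-derive the required pure state by hand (Krein--Milman plus the face property, valid because $\A$ is unital) and rely on the standard facts that $\min_{\omega\in\n{E}_\A}\omega(C)=\min{\rm Spec}(C)$ for self-adjoint $C$ in a unital $C^*$-algebra and that $B$ is invertible if and only if both $B^*B$ and $BB^*$ are.
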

\proof
Let $\n{P}_\A$ be the set of pure states of $\A$, and for a given $\omega\in \n{P}_\A$  consider the associated irreducible GNS representation $(\s{H}_\omega,\pi_\omega,\psi_\omega)$. As in the proof of \cite[Theorem 2.1.10]{bratteli-robinson-87} let us consider the direct sum representation $(\s{H},\pi)$ given by
\[
\s{H}\;:=\;\bigoplus_{\omega\in \n{P}_\A}\s{H}_\omega\;,\qquad \rho\;:=\;\bigoplus_{\omega\in \n{P}_\A}\pi_\omega\;.
\]
By repeating verbatim the proof of \cite[Theorem 2.1.10]{bratteli-robinson-87}, and observing that the state $\omega_A$ in this proof can be chosen pure in view of  \cite[Lemma 2.1.23]{bratteli-robinson-87},
one gets that $\rho$ is a $\ast$-isomorphism. It follows that
\[
{\rm Spec}(A)\;=\;{\rm Spec}(\rho(A))\;=\;\bigcup_{\omega\in \n{P}_\A}{\rm Spec}(\pi_\omega(A))\;.
\]
Since the representations $(\s{H}_\omega,\pi_\omega,\psi_\omega)$ are irreducible one gets that ${\rm Spec}(A)$ is contained in the union of ${\rm Spec}(\pi(A))$ when $\pi$ runs over all the irreducible representations. Since the spectrum only depend on the equivalence class of the representation one obtains 
\[
{\rm Spec}(A)\;\subseteq\;\bigcup_{[\pi]\in \hat{\A}}{\rm Spec}(A([\pi]))\;.
\]
The other inclusion is trivial in view of the fact that ${\rm Spec}(\pi(A))\subseteq {\rm Spec}(A)$ for every (not necessarily irreducible) representation $\pi$.
\qed

\medskip

\noindent
The proof above can be shortened using \cite[Theorem A.38]{Rae}.

\medskip 

We are now in a position to generalize the concept of Fermi surfaces.

\begin{definition}[Generalized Fermi surface]\label{def:genFS}
Let $A\in\A$ and $\lambda\in {\rm Spec}(A)$. The Fermi surface $\bb{F}^A_{\lambda}$ defined by $A$ at the spectral point $\lambda$ is the subset of $\hat{\A}$
defined by
\begin{equation}\label{eq:GenFS}
   \bb{F}^A_{\lambda}\;:=\;\left.\left\{[\pi]\in \hat{\A}\;\right|\; \lambda\in {\rm Spec}\big(A([\pi])\big)\right\}\;\subseteq\;\hat{\A}\;.
\end{equation}
\end{definition}

\begin{example}[Relation with the standard case]
To justify Definition \ref{def:genFS} it is needed to show that 
\eqref{eq:GenFS} is equivalent to \eqref{eq:FS} when $\rr{A}$ is a $C^*$-algebra of the form \eqref{eq:nic-alg}. Let us start from the simpler case of the abelian $C^*$ algebra $C(X)$ with $X$ a compact Hausdorff space. Then it is known that
 $\widehat{C(X)}=\{\epsilon_x\;|\; x\in X\}$ where $\epsilon_x:\rr{A}\to\C$ is the \emph{evaluation} at $x$ defined by $\epsilon_x(f):=f(x)$ for every 
$f\in C(X)$ \cite[Example A.16]{Rae}. Therefore the map $x\mapsto \epsilon_x$ provides a bijection between $X$ and $\widehat{C(X)}$ which turns out to be a homeomorphism of topological spaces \cite[p. 213-214]{Rae}. This allows us to write  
$\widehat{C(X)}\simeq X$ and this is an incarnation of the Gelfand isomorphism
\cite[Theorem 2.1.11A]{bratteli-robinson-87}.
The more general case in which  $\rr{A}$ is of the form $C(X,{\rm Mat}_n(\C))$ or $C(X,\mathfrak{K})$
is discussed in  \cite[Examples A.23 \& A.24]{Rae}.
Also in this case one has that the irreducible representations of $\rr{A}$ are evaluations of the type $\epsilon_x(A)=A(x)$ for every $x\in X$,
up to unitary equivalences. Therefore, there is a homeomorphism  $X\simeq \widehat{\A}$ provided by 
 $x\mapsto [\epsilon_x]$. Putting all this information together one can see that \eqref{eq:FS}  can be indeed rewritten in the generalized form \eqref{eq:GenFS}.   
\hfill $\blacktriangleleft$
\end{example}

\begin{example}[Compact operators]
Let us consider the compact operators $ \mathfrak{K}\subset\rr{B}(\s{H})$ (for some Hilbert space $\s{H}$) and its standard unitization
 $\mathfrak{K}^+=\C{\bf 1}+\mathfrak{K}$. The argument of 
 \cite[Example A.15]{Rae} shows that 
 the identity representation ${\rm id}:\mathfrak{K}\to\rr{B}(\s{H})$ is irreducible and every irreducible representation is unitarily equivalent to ${\rm id}$. Consequently
 $\widehat{\mathfrak{K}}=\{[{\rm id}]\}$ and ${\rm Prim}(\mathfrak{K})=\{\rr{I}_0\}$ are (homeomorphic) singletons with $\rr{I}_0:=\{0\}$ 
 the trivial ideal.
 To compute ${\rm Prim}(\mathfrak{K}^+)$ let us observe that $\mathfrak{K}$ is an ideal in $\mathfrak{K}^+$. Then by virtue of \cite[Proposition A.26 (a)]{Rae} one gets
 \[
 \begin{aligned}
 {\rm Prim}(\mathfrak{K}^+)\;&=\;\{\rr{P}\in{\rm Prim}(\mathfrak{K}^+)\;\left|\; \mathfrak{K}\;\subseteq\;\rr{P} \}\right.\;\cup\;\{\rr{P}\in{\rm Prim}(\mathfrak{K}^+)\;\left|\; \mathfrak{K}\;\not\subset\;\rr{P} \}\right.\\
 &=\;\{\rr{P}\in{\rm Prim}(\mathfrak{K}^+)\;\left|\; \mathfrak{K}\;\subseteq\;\rr{P} \}\right.\;\cup\;{\rm Prim}(\mathfrak{K})\\
 &=\;\{\rr{K},\rr{I}_0\}\:.
 \end{aligned}
 \]
The irreducible representation with kernel $\rr{K}$ is the
  one-dimensional representation $\pi_0(K+\alpha{\bf 1})=\alpha$ for every $K\in\rr{K}$ and $\alpha\in\C$. 
  By  using \eqref{eq:clos}
one gets that
\[
\overline{\rr{I}_0}\;=\; \left\{\rr{P}\in{\rm Prim}(\mathfrak{K}^+)\;\left|\; \rr{I}_0\;\subseteq\;\rr{P}\right\}\right.\;=\;{\rm Prim}(\mathfrak{K}^+)
\]
which shows that $\rr{I}_0$ is dense in ${\rm Prim}(\mathfrak{K}^+)$.
 Therefore ${\rm Prim}(\mathfrak{K}^+)$  is not Hausdorff.
  However, it is $T_0$ and this provides the homeomorphism ${\rm Prim}(\mathfrak{K}^+)\simeq\widehat{\rr{K}^+}=\{[{\rm id}],[\pi_0]\}$. It is interesting to notice that 
 $\widehat{\rr{K}^+}$ provides an easy example of a non Hausdorff spectrum.
 Now let $A\in \rr{K}^+$ and $\lambda\in {\rm Spec}(A)$. Then, by applying the definition \eqref{eq:GenFS} one obtains that the Fermi surface $\bb{F}^A_{\lambda}$ defined by $A$ at the spectral point $\lambda$ is given by
\[
\bb{F}^A_{\lambda}\;=\;
\left\{
\begin{aligned}
&\{[{\rm id}]\}&\quad&\text{if}\;\pi_0(A)\neq\lambda\;,\\
&\widehat{\rr{K}^+}&\quad&\text{if}\;\pi_0(A)=\lambda\;.\\
\end{aligned}
\right.
\]
The presence of the dense point $[{\rm id}]$ independently of $\lambda$
follows from the fact that it corresponds to the identity representation  which preserves the spectrum.
\hfill $\blacktriangleleft$
\end{example}

\begin{example}[The Toeplitz $C^*$-algebra]
Let $\rr{T}$ be the \emph{Toeplitz $C^*$-algebra}, namely the $C^*$-subalgebra of $\rr{B}(\ell^2(\mathbb{N}))$ generated by the unilateral shift operator $\mathfrak{s}$  described in Example \ref{ex:uni-shif}.
There is a well-known  (not-split) exact sequence, called Toeplitz extension, given by
\begin{equation}
  0\;\longrightarrow\;\mathfrak{K} \;\stackrel{\iota}{\longrightarrow}\; \rr{T} \;\stackrel{\varphi}{\longrightarrow}\; C(\mathbb{S}^1) \;\longrightarrow\;0
\end{equation}
where $\mathfrak{K}$ is the algebra of compact operators on $\ell^2(\mathbb{N})$, $\iota$ is the inclusion (indeed $\mathfrak{K}\subset \rr{T}$) and
$\varphi$ is called the \emph{symbol map}  \cite[Theorem V.1.5]{davidson-96} and \cite[Proposition A.26]{Rae}.
As a consequence one has that $\rr{T}/\mathfrak{K}\simeq C(\mathbb{S}^1)$.  
Moreover, $\rr{T}$ acts irreducibly on $\ell^2(\mathbb{N})$ (it is a primitive $C^*$-algebra) meaning that $\rr{I}_0\in{\rm Prim}(\rr{T})$.
As in the example above one has that the trivial ideal is dense, \ie $\overline{\rr{I}_0}\;=\; {\rm Prim}(\rr{T})$.
Since $\rr{T}$ contains 
$\mathfrak{K}$ as its unique minimal (non-trivial) ideal, it follows that any other primitive ideal of $\rr{T}$ must contain $\mathfrak{K}$. 
This means that the associate irreducible representation must vanish on  $\mathfrak{K}$.
Therefore the other irreducible representations of $\rr{T}$ can be obtained by the combination of $\varphi$ with the evaluation at the points of $\mathbb{S}^1$.
More precisely one gets that $\pi_\theta:=\epsilon_\theta\circ\varphi:\rr{T}\to\C$ are irreducible representations of $\rr{T}$ for every $\theta\in \mathbb{S}^1$. Consequently one obtains that 
\[
{\rm Prim}(\rr{T})\;=\; \{\rr{I}_0\}\;\cup\;\{\rr{I}_\theta\;|\;\theta\in \mathbb{S}^1\}
\]
where $\rr{I}_\theta=\mathfrak{K}+\rr{L}_\theta$ and $\rr{L}_\theta$ satisfies
$\varphi(\rr{L}_\theta)=\{f\in C(\mathbb{S}^1)\;|\; f(\theta)=0 \}$.
One can check that ${\rm Prim}(\rr{T})\;\simeq\; \{\rr{I}_0\}\cup \mathbb{S}^1$
with the usual topology on $\mathbb{S}^1$ but with the property that the singleton $\{\rr{I}_0\}$  is dense. Therefore ${\rm Prim}(\rr{T})$  is not Hausdorff. However, it is $T_0$ and this provides the homeomorphism $\hat{\rr{T}}\simeq {\rm Prim}(\rr{T})$. Now let $A\in \rr{T}$ and $\lambda\in {\rm Spec}(A)$. Then  the Fermi surface $\bb{F}^A_{\lambda}$ defined by $A$ at the spectral point $\lambda$ is given by
\[
\bb{F}^A_{\lambda}\;=\;\{\rr{I}_0\}\;\cup\;\left.\left\{\theta\in\mathbb{S}^1 \;\right|\; \varphi(A)(\theta)=\lambda\right\}\;
\]
in view of the definition   \eqref{eq:GenFS}.
\hfill $\blacktriangleleft$
\end{example}

\medskip

The generalization of the notion of Fermi surface as given in Definition \ref{eq:GenFS} paves the possibility of extending the construction of the Fermi eigenstates in Theorem \ref{prop: eigenstate fermi_S} to more general $C^*$-algebras. This problem remains outside the scope of this work and provides material for new future investigations. However, it is worth anticipating that there is at least a class of $C^*$-algebras for which the extension of Theorem \ref{prop: eigenstate fermi_S} seems to be directly accessible, even if not entirely trivial. Following  \cite[Proposition 5.15]{Rae}
let us introduce the following definition:
\begin{definition}[Continuous-trace $C^*$-algebra]\label{def:CTCA}
A $C^*$-algebra $\A$ is called \emph{continuous trace} if and only if: (i) it spectrum $\hat\A$ is a Hausdorff space; (ii) $\A$ is locally Morita equivalent to $C_0(\A)$.
\end{definition}

\noindent
Let us clarify the meaning of condition (ii) in the definition above.
First of all notice that
 the open sets $\bb{U}\subset {\rm Prim}(\A)$ are in one-to-one correspondence with the ideals 
 \[
 \A_\bb{U}\;:=\;\bigcap\{ \rr{P} \in {\rm Prim}(\A)\;|\;\rr{P} \notin \bb{U}\}\;.
 \]
In other words, the topology on ${\rm Prim}(\A)$ always determines the ideal structure of $\A$. There are natural homeomorphisms $\bb{U}\simeq {\rm Prim}( \A_\bb{U})$ and ${\rm Prim}(\A)\setminus \bb{U}\simeq{\rm Prim}(\A/ \A_\bb{U})$ \cite[Proposition A.27]{Rae}.
When ${\rm Prim}(\A)$ is a (locally compact) Hausdorff space (and in this case ${\rm Prim}(\A)\simeq\hat{\A}$) one can \emph{localize} at a given point $\rr{I}\in {\rm Prim}(\A)$ considering a compact neighborhood $\bb{F}$ of 
$\rr{I}$ and considering the quotient $\A^\bb{F}:=\A/\A_{{\rm Prim}(\A)\setminus \bb{F}}$. In fact, by the Hausdorff condition ${\rm Prim}(\A)\setminus \bb{F}$ is open in ${\rm Prim}(\A)$. Condition (ii) in Definition \ref{def:CTCA} means that for every $\rr{I}\in {\rm Prim}(\A)$ there is a 
compact neighborhood $\bb{F}$ of 
$\rr{I}$ such that 
\[
\A^\bb{F}\;\simeq\; C(\bb{F})\otimes\rr{K}\;\simeq\;C(\bb{F},\rr{K})\;.
\]
We need a final observation. For a separable $C^*$-algebra  its spectrum is a second-countable space \cite[Proposition 3.3.4]{dixmier-77}. 
Then for a continuous trace $C^*$-algebra $\A$ its spectrum $\hat\A$ is  Hausdorff and second-countable. If $\bb{F}\subset \hat\A$ is a compact subset then $\bb{F}$ is compact, Hausdorff and  second-countable, hence metrizable (Urysohn metrization Theorem).
Then,  separable continuous-trace $C^*$-algebra are at least locally of the form of the 
  $C^*$-algebras considered in Sections \ref{sec:stanFer} and \ref{asec:class_eig} and the $C^*$-algebras of the type \eqref{eq:nic-alg} are special examples of (separable) continuous-trace $C^*$-algebras \cite[Example 5.18]{Rae}. Under these conditions Theorem \ref{prop: eigenstate fermi_S} applies at least locally and one would expect to cook up a global version by some technical gluing procedure.

\appendix

\section{Disintegration of measures}
\label{sec:disint}

In this appendix, we briefly review the theory of disintegration of measures between compact metric (or metrizable) spaces. The results of this section are adapted from 
\cite[Section VII.2]{conway2}.

\medskip

 Let $(X,\bb{B}_X, \mu)$ be a Borel measure space, 
 with $X$
  a compact metric space\footnote{With a little more generality we can require   that $X$ is  a Radon space.}  endowed with
 its Borel $\sigma$-algebra $\bb{B}_X$ and $\mu:\bb{B}_X\to [0,+\infty]$ a positive regular measure. We will denote with $\rr{M}(X)$ the space of  regular Borel  measures on $X$.
  We say that $(X,\bb{B}_X, \mu)$ is a   probability space when $\mu(X)=1$.    Let   $(Y,\bb{B}_Y)$ be a second compact metric  space endowed with its Borel $\sigma$-algebra and 
  $f:X\to Y$ a Borel function. The \emph{pushforward} of the measure $\mu$ by $f$ is the measure 
  $\nu:\bb{B}_Y\to [0,+\infty]$ defined by $\nu(\Sigma):=\mu(f^{-1}(\Sigma))$ for every Borel set $\Sigma\in \bb{B}_Y$. In the following, we will use the short notation $\nu=\mu\circ f^{-1}$. 
Let us denote with $\bb{L}^p(X,\mu)$  the space of $p$-integrable functions and by   $L^p(X,\mu)$ the corresponding Lebesgue space of equivalence classes of these functions.
 For each $\psi\in \bb{L}^1(X,\mu)$ the map 
 \[
 L^\infty(Y,\nu)\;\ni\; \phi\;\longmapsto\; \int_X{\rm d}\mu(x)\;\psi(x)\;(\phi\circ f)(x)\;\in\;\C
 \]
 defines a bounded linear functional on $L^\infty(Y,\nu)$.  Let $\chi_\Sigma\in L^\infty(Y,\nu)$ be the characteristic function of $\Sigma$. Then the mapping
 \[
\bb{B}_Y\;\ni\; \Sigma\;\longmapsto\;\int_X{\rm d}\mu(x)\;\psi(x)\;(\chi_\Sigma\circ f)(x)\;=\;\int_{f^{-1}(\Sigma)}{\rm d}\mu(x)\;\psi(x)\;\in\;\C
 \]
  is a countably additive (complex-valued) measure  on $Y$ that is absolutely continuous with respect to $\nu$. Therefore,  there is a unique element $E_\psi\in L^1(Y,\nu)$, the so-called  Radon–Nikodym derivative, such that
 \begin{equation}\label{eq:disin_01}
 \int_X{\rm d}\mu(x)\;\psi(x)\;(\phi\circ f)(x)\;=\;\int_Y{\rm d}\nu(y)\; E_\psi(y)\;\phi(y) \;,\qquad\forall\;\phi\in L^\infty(Y,\nu)\;.
\end{equation}
 This construction provides a well defined  map  $E\colon {L}^1(X,\mu)\to L^1(Y,\nu)$  called the \emph{expectation operator}. As proved in  \cite[Section VII.2, Proposition 2.8]{conway2} this is a contraction, \ie $\|E_\psi\|_{L^1}\leqslant\|\psi\|_{L^1}$ for every $\psi\in {L}^1(X,\mu)$.

\begin{definition}
 A disintegration of the measure $\mu$ with respect to $f:X\to Y$ is a function
 $Y\ni y\mapsto \mu_y \in\rr{M}(X)$ such that: (i) for each $y\in Y$, $\mu_y$ is a positive probability measure on $X$; (ii) if $\psi\in  {L}^1(X,\mu)$ then
 \begin{equation}\label{eq:dis_dis}
    E_\psi(y)\;=\;\int_X{\rm d}\mu_y(x)\;\psi(x)\;
\end{equation}
where the equality is meant $\nu$-almost everywhere. 
\end{definition}

  \medskip
    
 Let    $\{\mu_y\}_{y\in Y}$ be a disintegration $\mu$  with respect to $f:X\to Y$ and $\Delta\in \bb{B}_X$ with characteristic function $\chi_\Delta$.
 Starting from \eqref{eq:disin_01} and setting $\phi\equiv 1$, a direct computation   shows that 
\[
    \begin{split}
        \mu(\Delta)\;&=\;\int_X{\rm d}\mu(x)\;\chi_\Delta(x)\;=\;\int_Y{\rm d}\nu(y)\;E_{\chi_\Delta}(y)\\
        &=\;\int_Y{\rm d}\nu(y)\left(\int_X{\rm d}\mu_y(x)\;\chi_\Delta(x)\right)\;=\;\int_Y\,{\rm d}\nu(y)\;\mu_y(\Delta)\;.
    \end{split}
\]   
So the disintegration $\{\mu_y\}_{y\in Y}$ does indeed \emph{disintegrate} the measure $\mu$ into the pieces $\mu_y$.
Moreover, for every $\psi\in  {L}^1(X,\mu)$ a similar computation provides
\begin{equation}\label{eq:disin_02}
    \begin{split}
     \int_X{\rm d}\mu(x)\;\psi(x)\;&=\;\int_Y{\rm d}\nu(y)\;E_{\psi}(y)\;=\;\int_Y{\rm d}\nu(y)\left(\int_X{\rm d}\mu_y(x)\;\psi(x)\right)\;.
    \end{split}
\end{equation}
From its very definition it also follows that each measure    $\mu_y$  of a disintegration is carried by the level set $f^{-1}(\{y\})\subseteq X$ for
    $\nu$-almost every $y\in Y$ \cite[Section VII.2, Proposition 2.8]{conway2}. More precisely one has that 
\[
\mu_y\left(X\setminus f^{-1}(\{y\})\right)\;=\;0
\]
   for
    $\nu$-almost every $y\in Y$.
    This implies that $\mu_y(\Delta)=\mu_y(\Delta\cap f^{-1}(\{y\}))$
for every     $\Delta\in \bb{B}_X$. Combining the latter fact with \eqref{eq:disin_02} one gets
\begin{equation}\label{eq:disin_03}
    \begin{split}
     \int_X{\rm d}\mu(x)\;\psi(x)\;&=\;\int_Y\int_{f^{-1}(y)}{\rm d}\nu(y)\;{\rm d}\mu_y(x)\;\psi(x)
    \end{split}
\end{equation}
    for every $\psi\in  {L}^1(X,\mu)$, or more in general for every Borel-measurable function such that $|\psi|:X\to[0,+\infty]$.

      \medskip

The next result guarantees the existence and the uniqueness of the disintegration of a measure \cite[Section VII.2, Proposition 2.8]{conway2}. 

\begin{theorem}\label{teo: desintegration}(Disintegration Theorem)
Let $(X,\bb{B}_X, \mu)$ be a Borel measure space, given by  a compact metric space $X$ endowed with
 its Borel $\sigma$-algebra $\bb{B}_X$ and a positive regular probability measure
 $\mu:\bb{B}_X\to [0,+\infty]$.   Let   $(Y,\bb{B}_Y)$ be a second compact metric  space endowed with its Borel $\sigma$-algebra and 
  $f:X\to Y$ a Borel function. Then, there exists a disintegration $\{\mu_y\}_{y\in Y}$ of
   $\mu$  with respect to $f:X\to Y$. Moreover, if there is a second 
disintegration $\{\mu_y'\}_{y\in Y}$ of
   $\mu$  with respect to the same $f$ then $\mu_y=\mu_y'$ for
    $\nu$-almost every $y\in Y$.
\end{theorem}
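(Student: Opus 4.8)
The plan is to realize the fibre measures $\mu_y$ by freezing $y$, turning the (a priori only $L^1$-defined) values $E_g(y)$ for continuous $g$ into a genuine positive normalized linear functional on $C(X)$, and then invoking the Riesz--Markov--Kakutani representation theorem to obtain $\mu_y$. Uniqueness will be separated off and handled at the end, where it reduces to the fact that a regular Borel measure on the compact metric space $X$ is determined by its integrals against continuous functions. Throughout I use that $\nu=\mu\circ f^{-1}$ and that the expectation operator $E$ of \eqref{eq:disin_01} is a contraction of $L^1$.

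For existence, first I would exploit that a compact metric space has separable $C(X)$: fix a countable $\Q$-linear subspace $\f{D}\subset C(X)$ that is dense and contains the constant function ${\bf 1}$. For each $g\in\f{D}$ the class $E_g\in L^1(Y,\nu)$ is defined by \eqref{eq:disin_01}. The operator $E$ is $\Q$-linear, positive (if $\psi\geq 0$ then the measure $\Sigma\mapsto\int_{f^{-1}(\Sigma)}\psi\dd\mu$ is positive, so its Radon--Nikodym derivative $E_\psi$ is $\geq 0$), normalized ($E_{{\bf 1}}=1$, by setting $\psi\equiv{\bf 1}$ in \eqref{eq:disin_01} and using the change of variables for the pushforward), and hence bounded by $\|g\|_\infty$. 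Each of these assertions holds $\nu$-a.e.; since only countably many $g\in\f{D}$ are involved, I can collapse them onto a single $\nu$-conull set $Y_0\subseteq Y$ on which all the relations hold simultaneously. For $y\in Y_0$ the map $g\mapsto E_g(y)$ is then a genuine positive $\Q$-linear functional on $\f{D}$ of norm one, which extends uniquely by continuity to a positive normalized bounded functional $\Lambda_y$ on $C(X)$; Riesz--Markov--Kakutani produces a unique probability measure $\mu_y\in\rr{M}(X)$ with $\Lambda_y(g)=\int_X g\dd\mu_y$. For $y\notin Y_0$ I set $\mu_y:=\mu$.

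It remains to verify property (ii), namely \eqref{eq:dis_dis}. For fixed $g\in C(X)$ one gets $\int_X g\dd\mu_y=E_g(y)$ $\nu$-a.e. (by construction on $\f{D}$, and for general $g$ by uniform approximation together with the $L^1$-contraction property of $E$, along a $\nu$-a.e. convergent subsequence). Putting $\phi\equiv 1$ in \eqref{eq:disin_01} gives $\int_Y\big(\int_X g\dd\mu_y\big)\dd\nu=\int_X g\dd\mu$ for all $g\in C(X)$; representing $\mu_y(U)=\lim_n E_{g_n}(y)$ for open $U$ via an increasing sequence $g_n\uparrow\chi_U$ in $C(X)$ shows $y\mapsto\mu_y(U)$ is measurable, and monotone convergence plus regularity then propagate the identity to $\int_Y\big(\int_X|\psi|\dd\mu_y\big)\dd\nu=\int_X|\psi|\dd\mu$ for $\psi\in L^1(X,\mu)$. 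Hence $F_\psi(y):=\int_X\psi\dd\mu_y$ defines a contraction $F\colon L^1(X,\mu)\to L^1(Y,\nu)$; since $F$ and the contraction $E$ agree on the dense subspace $C(X)$, they agree on all of $L^1(X,\mu)$, which is exactly \eqref{eq:dis_dis}. For uniqueness, if $\{\mu_y\}$ and $\{\mu_y'\}$ are two disintegrations then $\int_X g\dd\mu_y=E_g(y)=\int_X g\dd\mu_y'$ off a $\nu$-null set for each $g\in\f{D}$; intersecting the countably many exceptional sets, for $\nu$-a.e. $y$ the two measures integrate every $g\in\f{D}$, hence every $g\in C(X)$, identically, and Riesz uniqueness forces $\mu_y=\mu_y'$ $\nu$-a.e.

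The hard part will be the existence step, and specifically the device of collapsing the countably many $\nu$-a.e. relations (linearity, positivity, normalization, boundedness) into one conull set $Y_0$: this is what upgrades the equivalence classes $E_g$ into an honest pointwise functional $\Lambda_y$ for almost every $y$, so that Riesz representation can be applied fibrewise. The secondary difficulty, promoting property (ii) from $C(X)$ to all of $L^1(X,\mu)$, is then disposed of cleanly once one observes that both $E$ and $F$ are contractions and therefore determined by their common restriction to the dense subspace $C(X)$.
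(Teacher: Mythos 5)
The paper does not contain a proof of this theorem at all: it is imported verbatim from \cite[Section VII.2, Proposition 2.8]{conway2}, so there is no internal argument to compare yours against. Your proposal is a correct, self-contained reconstruction of exactly that standard proof --- expectation operator, separability of $C(X)$ for $X$ compact metric, collapsing the countably many $\nu$-null exceptional sets onto one conull set $Y_0$, fibrewise Riesz--Markov--Kakutani representation, and the contraction/density argument passing property (ii) from $C(X)$ to $L^1(X,\mu)$ --- and the steps you gloss (positivity of the extension $\Lambda_y$ via rational-constant shifts, and the $\pi$-$\lambda$ argument promoting $\int_Y \mu_y(A)\,\mathrm{d}\nu = \mu(A)$ from open to Borel sets) are routine to fill in.
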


\medskip

It is useful in applications to know   the behavior of the disintegration under the composition of measurable maps.

\begin{proposition}\label{prop: composition}
    Let $(X_i,\mu^i)$ be Borel probability spaces for $i=1,2,3$. Let $f\colon X_1\to X_2$ and $g\colon X_2\to X_3$ be two measurable maps so that $\mu^2$ and $\mu^3$ are the pushforward measures induced by $f$ and $g$, respectively. If $h:=g\circ f$, then the expectation operator $E^h$ with respect to $h$ fulfills the composition rule $E^h=E^gE^f$, where $E^g$ and $E^f$ are the expectation operators associated with $f$ and $g$, respectively. Furthermore, the disintegration $\{\mu^{1,h}_z\}_{z\in X_3}$ of $\mu^1$ with respect to $\mu^3$ related to $h$ meets the equality
    \begin{equation*}
        \mu_z^{1,h}(\Sigma)\;=\;\int_{X_2}\dd\mu^{2,g}_{z}(y)\;\mu^{1,f}_y(\Sigma)\;,
    \end{equation*}
  for $\mu^3$-almost all $z\in X_3$, and any measurable sets $\Sigma\subset X_1$.  Here $\{\mu_z^{2,g}\}_{z\in X_3}$ and $\{\mu_y^{1,f}\}_{y\in X_2}$
  denote the  disintegrations of $\mu^2$ and $\mu^1$
  related to $g$ and  $f$, respectively.
\end{proposition}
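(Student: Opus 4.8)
The plan is to establish the two assertions in turn, deriving the disintegration identity as a consequence of the composition rule for the expectation operators and of the uniqueness part of Theorem \ref{teo: desintegration}.

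First I would prove $E^h=E^gE^f$ directly from the defining relation \eqref{eq:disin_01}. Fix $\psi\in L^1(X_1,\mu^1)$ and a test function $\chi\in L^\infty(X_3,\mu^3)$. Since $h=g\circ f$ one has $\chi\circ h=(\chi\circ g)\circ f$, and because $\mu^3$ is the pushforward of $\mu^2$ by $g$ the function $\phi:=\chi\circ g$ belongs to $L^\infty(X_2,\mu^2)$ (indeed $\mu^2(g^{-1}(\{|\chi|>M\}))=\mu^3(\{|\chi|>M\})=0$ for any essential bound $M$). Applying \eqref{eq:disin_01} for $f$ with this admissible $\phi$, and then \eqref{eq:disin_01} for $g$ with the integrand $E^f_\psi\in L^1(X_2,\mu^2)$, gives
\begin{equation*}
\int_{X_1}\dd\mu^1\;\psi\;(\chi\circ h)\;=\;\int_{X_2}\dd\mu^2\;E^f_\psi\;(\chi\circ g)\;=\;\int_{X_3}\dd\mu^3\;E^g_{E^f_\psi}\;\chi\;.
\end{equation*}
Comparing with the defining relation for $E^h$ and using that $\chi$ is arbitrary, the uniqueness of the Radon--Nikodym derivative yields $E^h_\psi=E^g_{E^f_\psi}$, that is $E^h=E^gE^f$.

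For the disintegration identity I would introduce the candidate family $\tilde\mu_z(\Sigma):=\int_{X_2}\dd\mu^{2,g}_z(y)\;\mu^{1,f}_y(\Sigma)$ for $\Sigma\in\bb{B}_{X_1}$, and show that $\{\tilde\mu_z\}_{z\in X_3}$ is a disintegration of $\mu^1$ with respect to $h$; the claim then follows from the uniqueness in Theorem \ref{teo: desintegration}. Positivity is clear, and since $\mu^{1,f}_y$ and $\mu^{2,g}_z$ are probability measures one gets $\tilde\mu_z(X_1)=\int_{X_2}\dd\mu^{2,g}_z(y)=1$, so each $\tilde\mu_z$ is a probability measure. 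To verify property \eqref{eq:dis_dis}, fix $\psi\in L^1(X_1,\mu^1)$; by a Tonelli-type argument (first for simple $\psi$, then by monotone approximation and linearity) one has
\begin{equation*}
\int_{X_1}\dd\tilde\mu_z(x)\;\psi(x)\;=\;\int_{X_2}\dd\mu^{2,g}_z(y)\left(\int_{X_1}\dd\mu^{1,f}_y(x)\;\psi(x)\right)\;.
\end{equation*}
By the disintegration property of $\{\mu^{1,f}_y\}$ the inner integral equals $E^f_\psi(y)$ for $\mu^2$-almost every $y$, and applying the disintegration property of $\{\mu^{2,g}_z\}$ to $E^f_\psi\in L^1(X_2,\mu^2)$ together with the composition rule just proved gives $\int_{X_2}\dd\mu^{2,g}_z(y)\;E^f_\psi(y)=E^g_{E^f_\psi}(z)=E^h_\psi(z)$ for $\mu^3$-almost every $z$. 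Hence $\tilde\mu_z$ satisfies \eqref{eq:dis_dis}, so $\tilde\mu_z=\mu^{1,h}_z$ for $\mu^3$-almost every $z$, which is the asserted formula evaluated on $\Sigma$.

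The delicate point, and the main obstacle, is the bookkeeping of the $\psi$-dependent null sets. The identity $\int_{X_1}\psi\,\dd\mu^{1,f}_y=E^f_\psi(y)$ holds only off a $\mu^2$-null set $N_\psi\subset X_2$, and to substitute it inside the $\mu^{2,g}_z$-integral I must know that $N_\psi$ is $\mu^{2,g}_z$-null for $\mu^3$-almost every $z$. This follows from the disintegration formula $\mu^2(N_\psi)=\int_{X_3}\dd\mu^3(z)\;\mu^{2,g}_z(N_\psi)$ (\cf \eqref{eq:disin_02} applied to $g$ with $\psi=\chi_{N_\psi}$): since $\mu^2(N_\psi)=0$ and the integrand is non-negative, $\mu^{2,g}_z(N_\psi)=0$ for $\mu^3$-almost every $z$, as required. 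One must also check the measurability of $y\mapsto\mu^{1,f}_y(\Sigma)$ and of $z\mapsto\tilde\mu_z(\Sigma)$, which is part of the standard measurable structure of disintegrations provided by Theorem \ref{teo: desintegration}; with these points settled the argument above goes through.
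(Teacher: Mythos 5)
Your proof is correct; the first half coincides with the paper's argument, while the second half is organized differently. For the composition rule $E^h=E^gE^f$ both you and the paper apply the defining relation \eqref{eq:disin_01} twice --- once for $f$ with the admissible test function $\phi\circ g$, once for $g$ applied to $E^f_\psi\in L^1(X_2,\mu^2)$ --- and compare with the defining relation for $E^h$, using that $\mu^3$ is the pushforward of $\mu^1$ by $h$ (the paper states this observation explicitly; you use it tacitly when invoking the defining relation for $E^h$). For the disintegration identity the paper proceeds by direct computation: for fixed $\Sigma$ it writes $\mu^{1,h}_z(\Sigma)=E^h_{\chi_\Sigma}(z)=(E^gE^f)_{\chi_\Sigma}(z)=\int_{X_2}\dd\mu^{2,g}_z(y)\,E^f_{\chi_\Sigma}(y)=\int_{X_2}\dd\mu^{2,g}_z(y)\,\mu^{1,f}_y(\Sigma)$, each equality holding $\mu^3$-almost everywhere, with no appeal to uniqueness since $\mu^{1,h}_z$ is already the disintegration furnished by Theorem \ref{teo: desintegration}. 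You instead introduce the candidate family $\tilde\mu_z$, verify that it is itself a disintegration of $\mu^1$ with respect to $h$, and conclude from the uniqueness clause of Theorem \ref{teo: desintegration}. The computational core is identical (your verification of \eqref{eq:dis_dis} for $\tilde\mu_z$ is the paper's chain read in reverse, for general $\psi$ rather than $\chi_\Sigma$), but your organization buys two things: first, the uniqueness clause yields a single $\mu^3$-null set off which $\tilde\mu_z=\mu^{1,h}_z$ \emph{as measures}, i.e.\ simultaneously for all $\Sigma$, which is formally stronger than the $\Sigma$-dependent null sets produced by the paper's chain; second, you make explicit the null-set bookkeeping --- that a $\mu^2$-null set is $\mu^{2,g}_z$-null for $\mu^3$-almost every $z$, via \eqref{eq:disin_02} applied to $g$ --- which the last two equalities of the paper's computation use tacitly. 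The price is length: the paper's four-line computation is more economical, and your appeal to uniqueness is, strictly speaking, dispensable.
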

\begin{proof}First of all, observe that $\mu^3$ coincides with the pushforward measure of $\mu^1$ induced by $h.$ Thus, one has that
    \begin{equation}\label{equ: 1}
        \int_{X_1}\dd\mu^1(x)\;\phi(h(x))\psi(x)\;=\;\int_{X_3}\dd\mu^3(z)\;\phi(z)(E^h_\psi)(z)
    \end{equation}
for all $\phi\in L^\infty(X_3,\mu^3)$ and $\psi\in L^1(X_1,\mu^1).$    On the other hand, let us note that the left-hand side of \ref{equ: 1} satisfies
    \begin{equation}\label{equ: 2}
    \begin{split}
           \int_{X_1}\dd\mu^1(x)\;\phi(h(x))\psi(x)\;&= \;   \int_{X_1}\dd\mu^1(x)\;(\phi\circ g)(f(x))\psi(x)\\
           &=\;\int_{X_2}\dd\mu^2(y)\;(\phi \circ g)(y)(E^f_\psi)(y)\\
             &=\;\int_{X_2}\dd\mu^2(y)\;\phi (g(y))(E^f_\psi)(y)\\
             &=\;\int_{X_3}\dd\mu^3(z)\;\phi(z) (E^g_{E^f_\psi})(z)\;.
    \end{split}
    \end{equation}
With the identification $E^g_{E^f_\psi}:=(E^gE^f)_\psi$, and in light of \ref{equ: 1} and \ref{equ: 2}, it holds that $E^h=E^gE^f$ 
as composition of linear maps between the related $L^1(X^i,\mu^i)$
spaces.
Moreover,
 \begin{equation*}
     \begin{split}
      \mu^{1,h}_{z}(\Sigma)\;&=\;(E^h_{\chi_\Sigma})(z)\;=\;\left((E^gE^f)_{\chi_\Sigma}\right)(z)\;=\;\int_{X_2}\dd\mu_z^{2,g}(y)\;(E^f_{\chi_\Sigma})(y)\\
      &=\;\int_{X_2}\dd\mu_z^{2,g}(y)\;\mu_y^{1,f}(\Sigma)
     \end{split}
 \end{equation*}
 for every measurable set $\Sigma\subseteq X^1$.
\end{proof}

\medskip

Let us end this section with some simple but useful examples.
\begin{example}[Disintegration along a product space]
  Let $(X_1,\mu_1)$ and $(X_2,\mu_2)$ be two compact metric spaces endowed with regular probability measures. Let us consider the product space $(X\times Y,\mu)$ with  the corresponding product measure $\mu:=\mu_1\times \mu_2$ and the     projection on the first component $\pi: X_1\times X_2\to X_1$ defined by $\pi(x_1,x_2)=x_1$ for every $x_j\in X_j$ and $j=1,2$. Let $\nu$ be the pushforward  of the measure $\mu$ by $\pi$. From its very definition it follows that for every 
   measurable set $\Sigma\subseteq X_1$ one gets
    \[
    \nu(\Sigma)\;=\;\mu\left(\pi^{-1}(\Sigma)\right)\;=\;\mu(\Sigma\times X_2)\;=\;\mu_1(\Sigma)\mu_2(X_2)\;=\;\mu_1(\Sigma)\;.
    \]
    Therefore one has that $\nu=\mu_1$.
The next task is to calculate the expectation operator. For that, consider a $\psi\in L^1(X_1\times X_2,\mu)$. Then, in view of \eqref{eq:disin_01} and for all $\phi\in L^\infty(X_1,\mu_1)$ it holds that
    \begin{equation*}
        \begin{split}
            \int_{X_1}{\rm d}\mu_1(x_1)\; E_\psi(x_1)\phi(x_1)\;&=\;\int_{X_1\times X_2}{\rm d}\mu(x_1,x_2)\; \psi(x_1,x_2)\phi(\pi(x_1,x_2))\\
            &=\;\int_{X_1\times X_2}{\rm d}\mu(x_1,x_2)\;\psi(x_1,x_2)\phi(x_1)\\
            &=\;\int_{X_1}{\rm d}\mu_1(x_1)\; \phi(x_1)\left(\int_{X_2}{\rm d}\mu_2(x_2)\psi(x_1,x_2)\right)\;,
        \end{split}
    \end{equation*}
where the last  line is a consequence of Fubini's theorem. Since
\[
 \int_{X_1}{\rm d}\mu_1(x_1)\; \phi(x_1)\left[E_\psi(x_1)-\int_{X_2}{\rm d}\mu_2(x_2)\psi(x_1,x_2)\right]\;=\;0
\]
for every  $\phi\in L^\infty(X_1,\mu_1)$, it follows that 
\[
E_\psi(x_1)\;=\;\int_{X_2}{\rm d}\mu_2(x_2)\psi(x_1,x_2)
\]
as elements of $L^1(X_2,\mu_2)$.
To compute the disintegration $\{\mu_{x_1}\}_{x_1\in X_1}$ of $\mu$ let us use \eqref{eq:dis_dis}.
Let $\Sigma\subseteq X_1$ and $\Gamma\subseteq X_2$ be two measurable sets and $\chi_{\Sigma\times \Gamma}$  the characteristic function of the product $\Sigma\times \Gamma$.
Then, one has that
\begin{equation}
    \mu_{x_1}(\Sigma\times \Gamma)\;=\;E_{\chi_{\Sigma\times \Gamma}}(x_1)\;=\;\int_{X_2}{\rm d}\mu_2(x_2)\,\chi_{\Sigma\times \Gamma}(x_1,x_2)\;=\;\mu_2(\Gamma)\chi_\Sigma(x_1)\;,
\end{equation}
where in the last equality it has been used that $\chi_{\Sigma\times \Gamma}=\chi_{\Sigma}\chi_{\Gamma}$. Therefore, one gets that
\[
\mu_{x_1}\;=\;\delta_{x_1}\times \mu_2
\]
is a product measure where 
$\delta_{x_1}$ denotes the  the Dirac measure on $X_1$ concentrated on the point $x_1$.
\hfill $\blacktriangleleft$
\end{example}
 \begin{example}[Disintegration on the real line]\label{dis: function}
Let $X:=[a,b]\subset\R$ endowed with the normalized Lebesgue measure $\mu$, and $f:X\to\R$ a $C^1$-function. Let $a<x_1<\ldots<x_{N}<b$ be the set of points in which $f'(x_k)=0$
(it is not necessary to assume that $f'$ vanishes on the extremes $a$ and $b$). The set of intervals $I_k:=[x_{k},x_{k+1}]$ with $k=0,\ldots,N$, $x_0:=a$ and $x_{N+1}:=b$, provides a partition of $X$ such that $f$ is strictly monotone inside each $I_k$. 
The set $Y:=f(X)=[f_{\rm min},f_{\rm max}]\subset\R$ is again an interval delimited by the maximum and minimum values of $f$ over $X$. In particular the values $f_{\rm min}$ and $f_{\rm max}$  
are attained by evaluating $f$ over some of the points $x_k$.
It is also true that for every $y\in Y$ the preimmage $P_y:=f^{-1}(\{y\})$ is a finite, hence discrete, set of $X$.
 Let $\nu$ be the pushforward  of the measure $\mu$ by $f$. From its very definition it follows that for every 
   measurable set $\Sigma\subseteq Y$ one gets
    \[
    \begin{aligned}
    \nu(\Sigma)\;&=\;\mu\left(f^{-1}(\Sigma)\right)\;=\;\frac{1}{b-a}\int_a^b\dd x\;\chi_{f^{-1}(\Sigma)}(x)\\
    &=\frac{1}{b-a}\sum_{k=0}^N\int_{I_k}\dd x\;\chi_{f^{-1}(\Sigma)}(x)
    \end{aligned}
    \]
        Observe that  $f$ is invertible in each $I_{k}$. Therefore, by setting $f_k:=f|_{I_k}$ one can use  the change of variable $y=f(x)$ in every integral by obtaining
    \[
    \begin{aligned}
    \int_{I_k}\dd x\;\chi_{f^{-1}(\Sigma)}(x)\;&=\;\int_{f(x_k)}^{f(x_{k+1})}{\dd y}\;{\big(f_k^{-1}\big)'(y)}\;\chi_{f^{-1}(\Sigma)}\big(f^{-1}_k(y)\big)\\
    &=\;\int_{Y}{\dd y}\;\frac{\chi_{f(I_k)}(y)}{\left|f'\big(f^{-1}_k(y)\big)\right|}\;\chi_{\Sigma}(y)\;.
    \end{aligned}
    \]
 Therefore, one obtains that $\nu$ is absolutely continuous with respect to the Lebesgue measure on $Y$ and can be written in the form $\dd\nu(y)=\rho(y)\dd y$ with
 \[
 \rho(y)\;:=\;\frac{1}{(b-a)}\sum_{k=0}^N \frac{\chi_{f(I_k)}(y)}{\left|f'\big(f^{-1}_k(y)\big)\right|}\;.
 \]
 The next task is to calculate the expectation operator.  For that, consider a $\psi\in L^1(X,\mu)$ and a $\phi\in L^\infty(Y,\nu)$. Then, in view of \eqref{eq:disin_01} one gets
\[
\begin{aligned}
\int_Y{\rm d}y\;\rho(y)\; E_\psi(y)\;\phi(y)\;&=\;\int_X\dd\mu(x)\;\psi(x)\;(\phi\circ f)(x)\;.
\end{aligned}
\]
Bay decomposing the integral on the right-hand side over the intervals $I_k$, and applying again the change of variables $y=f(x)$, one gets
\[
\begin{aligned}
\int_Y{\rm d}y\;\rho(y)\; E_\psi(y)\;\phi(y)\;&=\;\int_Y\dd y\;\left(\frac{1}{(b-a)}\sum_{k=0}^N \frac{\chi_{f(I_k)}(y)\psi\big(f^{-1}_k(y)\big)}{\left|f'\big(f^{-1}_k(y)\big)\right|}\right)\;\phi(y)\;.
\end{aligned}
\]
Since the equality  above must hold independently of $\phi$ it turns out that 
\[
\sum_{k=0}^N \frac{\chi_{f(I_k)}(y)\big[E_\psi(y)-\psi\big(f^{-1}_k(y)\big)\big]}{\left|f'\big(f^{-1}_k(y)\big)\right|}\;=\;0
\] 
for almost every $y$ in $Y$. Therefore, one has that
\[
E_\psi(y)\;=\;\sum_{x\in P_y}c_x(f)
\psi(x)\;,\qquad c_k(f)\;:=\;\left(\sum_{x\in P_y} \frac{1}{\left|f'(x)\right|}\right)^{-1} \frac{1}{\left|f'(x)\right|}\;.
\]
Finally, the disintegration $\{\mu_{y}\}_{y\in Y}$ of $\mu$ is obtained  by using  \eqref{eq:dis_dis} which provides
\[
\int_X{\rm d}\mu_y(x)\;\psi(x)\;=\;\sum_{x\in P_y}c_x
\psi(x)
\]
showing that
\[
\mu_y\;=\;\sum_{x\in P_y}c_x\delta_{x}
\]
 coincides with a convex combination of Dirac measures supported on the points of $P_y$.
 As  a special application of this result let us consider the 
 trigonometric expression 
 \[
f(x)\;:=\;\sqrt{\gamma^2+4\cos^2\left(\frac{x}{2}\right)}\;,\qquad x\in[-2\pi,2\pi]
\]
  with $\gamma>0$. In this case, the critical points are $x_k:=(k-2)\pi$ for $n=0,1,2,3,4$ and one can define the associated partition $I_k$.
 For every $y\in Y:=(\gamma,\sqrt{\gamma^2+4})$ the set of points $P_y:=f^{-1}(\{y\})$ is given by four points of the type $\{\pm x_\ast,\pm2\pi\mp x_\ast\}$ where the reference point $x_\ast=x_\ast(y)$ is chosen as $x_\ast:=f^{-1}(\{y\})\cap[0, {\pi}]$. 
   To compute the coefficients $c_x(f)$, let us observe that 
   \[
   f'(x)\;:=\;-\frac{\sin(x)}{f(x)}\;.
   \]
  By observing that $| f'(x)|=|\sin(x_*)|f(x_*)^{-1}$ for every $x\in P_y$ it follows that $c_x(f)=\frac{1}{4}$. As a result one gets
 \[
\mu_y\;=\;\frac{1}{4}\sum_{\sharp=\pm}\delta_{\sharp x_\ast\;+\;\delta_{\sharp(2\pi-x_\ast)}}
\]
  for the associated disintegration of the normalized Lebesgue measure $\mu$ on $[-2\pi,2\pi]$.
    \hfill $\blacktriangleleft$
\end{example}

\begin{example}[Disintegration by a linear function]\label{ex: a}
Given  $\ell>0$, let us consider the square $X:=[0,\ell]\times[0,\ell]\subset\R^2$ endowed with its normalized (product) Lebesgue measure
$\dd\mu:=\ell^{-2}\dd x_1 \dd x_2$. Consider the linear function
on $X$ given by $f(x_1,x_2):=x_2-x_1$. It is not hard to see that
 the range $f(X)$ coincides with the interval $Y:=[-\ell ,\ell]$. Therefore, one can see $f$ as a map $f:X\to Y$. Since $f$ is continuous, hence measurable,  we can compute the 
  pushforward  measure $\nu$ on $Y$ induced by the measure $\mu$ through $f$.  Let $\Sigma\subseteq Y$ by any measurable set and $\chi_\Sigma$ its characteristic function. Then, by definition
  \[
  \begin{aligned}
  \nu(\Sigma)\;:&=\;\mu\left(f^{-1}(\Sigma)\right)\;=\;\int_{X}\dd\mu(x_1,x_2)\chi_{f^{-1}(\Sigma)}(x_1,x_2)\\
  &=\;\frac{1}{\ell^2}\int_{0}^{\ell}\dd x_1\int_{0}^{\ell}\dd x_2\;\chi_\Sigma(x_2-x_1)
   \end{aligned}
  \]
  where the last equality is a consequence of Fubini's theorem.
Let us assume for the moment that $\chi_\Sigma$ is Riemann integrable and consider the change of variables $(s,y)\mapsto(x_1,x_2)$ given by  
 $x_1(s,y):=s$ and  $x_2(s,y):=s+y$. The Jacobian determinant of the transformation is $1$, and after some manipulation one gets 
 \begin{equation}\label{eq:indu:mwes}
 \begin{aligned}
  \nu(\Sigma)\;&=\;
 \frac{1}{\ell^2}\int_{0}^{\ell}\dd s\left(\int_{-s }^{\ell-s}\dd y\;\chi_\Sigma(y)
\right) \\
&=\;
 \frac{1}{\ell}\int_{0}^{\ell}\dd s\;g_\Sigma(s)
 \end{aligned}
\end{equation}
  where
 \[
 g_\Sigma(s)\;:=\;\frac{1}{\ell}\int_{-s }^{\ell-s}\dd y\;\chi_\Sigma(y)\;=\;\frac{\big|[-s,\ell-s]\cap\Sigma\big|}{\ell}
 \]
 and $|\cdot|$ in the numerator denotes the usual Lebesgue measure on $\R$. Evidently, it holds true that $g_\Sigma:[0,\ell]\to[0,1]$. 
The formula above and the Lebesgue’s criterion for integrability imply that  $g_\Sigma=0$ whenever  $\Sigma$ is any measurable set of zero Lebesgue measure. Therefore  from \eqref{eq:indu:mwes} one infers that $\nu$  is  absolutely continuous with respect to the Lebesgue measure $\dd y$ on $Y$. Therefore, in view of the Radon–Nikodym theorem, there exists a measurable function $\rho:Y\to [0,\infty)$ such that $\dd\nu(y)=\rho(y)\dd y$.
Let us observe that the integral \eqref{eq:indu:mwes} can be rewritten as
\[
 \begin{aligned}
 \nu(\Sigma)\;&=\; \frac{1}{\ell^2}\int_{0}^{\ell}\dd s\left(\int_{-\ell }^{\ell}\dd y\;\chi_{[-s,\ell-s]}(y)\chi_\Sigma(y)\right)\\
 &=\; \frac{1}{\ell^2}\int_{Y}\dd y\; \chi_\Sigma(y)\left(\int_{0}^{\ell}\dd s\;\chi_{[-s,\ell-s]}(y)
\right)
 \end{aligned}
\]
where in the second equality the order of integration has been exchanged.
Therefore one gets that
\[
\rho(y)\;:=\;\frac{1}{\ell^2}\int_{0}^{\ell}\dd s\;\chi_{[-s,\ell-s]}(y)\;=\;\frac{\ell-|y|}{\ell^2}
\]
which provides the expression for the Radon–Nikodym derivative.
The next task is to calculate the expectation operator. For that, consider a $\psi\in L^1(X,\mu)$. Then, in view of \eqref{eq:disin_01} and for all $\phi\in L^\infty(Y,\nu)$ it holds that
    \begin{equation*}
        \begin{split}
            \int_{Y}{\rm d}y\; \rho(y)\phi(y)E_\psi(y)\;&=\;\frac{1}{\ell^2}\int_{0}^\ell\int_{0}^\ell{\rm d}x_1{\rm d}x_2\; \psi(x_1,x_2)\phi(x_2-x_1)\;.
        \end{split}
    \end{equation*} 
To massage the right-hand side of this equation let us assume for the moment that the function $\psi$ and $\phi$ are Riemann integrable so that change of variable  $(s,y)\mapsto(x_1,x_2)$ can be used again. Then
 \[
 \begin{aligned}
 \int_{Y}{\rm d}y\; \rho(y)\phi(y)E_\psi(y)\;&=\;\frac{1}{\ell^2}\int_{0}^\ell{\rm d}s\int_{-s}^{\ell-s}{\rm d}y\; \psi(s,s+y)\phi(y)\\
 &=\;\frac{1}{\ell^2}\int_{0}^\ell{\rm d}s\int_{Y}{\rm d}y\; \chi_{[-s,\ell-s]}(y)\psi(s,s+y)\phi(y)\\
 &=\;\int_{Y}{\rm d}y\; \phi(y)\left(\frac{1}{\ell^2}\int_{0}^\ell{\rm d}s\; \chi_{[-s,\ell-s]}(y)\psi(s,s+y)\right)\;.
 \end{aligned}
 \]
 The equality between the first and the last member holds for very bounded Riemann measurable function $\phi$, and in view of the Lebesgue’s criterion for every $\phi\in L^\infty(Y,\nu)$. This implies that
 \[
 \rho(y)E_\psi(y)\;=\;\frac{1}{\ell^2}\int_{0}^\ell{\rm d}s\; \chi_{[-s,\ell-s]}(y)\psi(s,s+y)
 \]
for almost all $y$ in $Y$. With a final manipulation one gets
 \[
 E_\psi(y)\;=\;\frac{1}{\ell-|y|}
 \left\{
 \begin{aligned}
 &\int_{0}^{\ell-y}\dd s\;\psi(s,s+y)&\;\;& &0\leqslant y\leqslant \ell&\\
  &\int_{|y|}^{\ell}\dd s\;\psi(s,s+y)&\;\;& &-\ell\leqslant y\leqslant 0&\;.\\
 \end{aligned}
 \right.
 \]
 Let us compare the last equation with the \eqref{eq:dis_dis}, here rewritten as
 \[
 E_\psi(y)\;=\;\int_{X}\dd\mu_y(x_1,x_2)\;\psi(x_1,x_2)\;,
 \]
where $\{\mu_y\}_{y\in Y}$ is the disintegration of $\mu$  with respect to $f:X\to Y$. Then one gets that $\mu_y$ is concentrated on the segment 
\[
L_y\;:=\;f^{-1}(\{y\})\;=\;\{(x_1,x_2)\in\R^2\;|\; x_2-x_1= y\}\;\cap\; [0,\ell]^2
\]
and over $L_y$ it agrees with the normalized one-dimensional Lebesgue measure. Let $[0,1]\ni\tau \mapsto s(\tau)\in L_y$ be the linear reparametrization of $L_y$, given by $s(\tau)=(\alpha +\beta\tau,\alpha +\beta\tau+y) $
and $\alpha,\beta$ depending on $y$. Then, with an innocent misuse of notation, and by using the Dirac measure $\R^2$, one can write
\[
\mu_{y}(x_1,x_2)\;=\;\int_0^1\dd \tau\; \delta_{s(\tau)}(x_1x_2)
\]
for the disintegration of the measure $\mu$ along $f$.
\hfill $\blacktriangleleft$
\end{example}



\begin{thebibliography} {CTSR}


 

\bibitem[AM]{Ash} {Ashcroft, N.;  Mermin, N.}: 
{\em Solid  state  physics.} 
Saunders College Publishing, 1976



 


\bibitem[BR1]{bratteli-robinson-87} {Bratteli, O.; Robinson, D. W.}: {\em  Operator Algebras and Quantum Statistical Mechanics 1}. 
 Springer-Verlag, Berlin-Heidelberg, 1987

\bibitem[BR2]{bratteli-robinson-97} {Bratteli, O.; Robinson, D. W.}: {\em  Operator Algebras and Quantum Statistical Mechanics 2}. 
 Springer-Verlag, Berlin-Heidelberg, 1997

 


\bibitem[Call]{Cal} Callaway, J.:
{\sl Energy Band Theory}. 
Academic Press, New York, 1964


\bibitem[Conw1]{conway-90}  
Conway, J. B.:  
{\em A Course in Functional Analysis}.  
Springer, Berlin, 1990

\bibitem[Conw2]{conway2} Conway, J. B.: {\em The Theory of Subnormal Operators}, AMS, 1991

 

\bibitem[Davi]{davidson-96} Davidson, K. R.: 
{\em $C^*$-Algebras by Example}. AMS, 1996




\bibitem[Dira]{dirac-30} 
{Dirac, P. A. M.}:
 {\em The Principles of Quantum Mechanics,}. 
 Clarendon Press, Oxford, 1930




\bibitem[Dixm]{dixmier-77} Dixmier, J.: 
{\em $C^*$-Algebras}. North-Holland Publishing Co., 1977



\bibitem[DL]{denittis-lein-13}
{De Nittis,~G.; Lein,~M.}:
{\sl Topological polarization in graphene-like systems}.
J. Phys. A: Math. Theor.~{\bf 46}, 385001 (2013)

 
\bibitem[FNW]{Fannes-nachtergaele-werner-92}
{Fannes, M.; Nachtergaele, B.;  Werner, R. F.}:
{\sl Finitely correlated states on quantum spin chains}.
Commun. Math. Phys.~{\bf 144}, 443-490 (1992)


 


\bibitem[Haag]{haag-96}
Haag, R.: 
{\em  Local Quantum Physics.
Fields, Particles, Algebras}. 
Springer, 1996

 


\bibitem[Kato]{kato-80}
Kato, T.: {\em  Perturbation theory for linear operators}. 
Springer-Verlag, Berlin, 1980



\bibitem[Kitt]{Kit} Kittel, C.: 
{\em Introduction to Solid State Physics}. 
John Wiley \& Sons, New York, 1976

\bibitem[Kuch1]{Kuc} Kuchment, P.: 
{\sl An overview on elliptic operators}. 
Bull. Amer. Math. Soc. {\bf 53}, 343-41 (2016)

\bibitem[Kuch2]{Kuch2}
Kuchment, P.: {\em  Floquet Theory for Partial Differential Equations}. Operator  Theory:   Advances  and  Applications,  vol.  60,  Birkhauser  Verlag,  Basel,  1993

 
 
 
\bibitem[MP]{Pash1} Mitchel, L.; Paschke, W.: 
{\sl  Simplicity of $C^*$-algebras using unique eigenstates}. J. Operator Theory {\bf 64}, 321-347 (2010)


\bibitem[Murp]{Murphy} {Murphy, G.}: {\em C*-Algebras and Operator Theory}. 
Academic Press, Inc., 1990


\bibitem[MvN]{murray-von-neumann-36}{Murray, F. J.; von Neumann, J.}:
{\sl On Rings of Operators}.
Ann. of Math. {\bf 37}, 116-229
 (1936)


\bibitem[Nils]{Nil} Nilsen, M.: {\sl The Stone-\v{C}ech compactification of ${\rm Prim}\; A$}. Bull. Aust. Math. Soc.  {\bf 52}, 377-383 (1995)



\bibitem[Pash2]{Pash2} Paschke, W.: {\sl Pure eigenstates for the sum of generators of the  free group}. Pacific J.  Math. {\bf197}, 151-171 (2001)



\bibitem[Pede]{pedersen-79} 
{Pedersen, G. K.}:
{\em $C^*$-algebras and their automorphism groups}. Academic Press, London-New
York, 1979

 

\bibitem[Ried1]{riedel-88}
Riedel, N.: 
{\sl Almost Mathieu Operators and Rotation $C^*$-Algebras}.
Proc. London Math. Soc.  {\bf 56}, 281-302 (1988)



\bibitem[Ried2]{riedel-90}
Riedel, N.: 
{\sl On spectral properties of almost Mathieu operators and connections with irrational rotation $C^*$algebras}.
Rocky Mountain J. Math.  {\bf 20}, 541-550 (1990)




\bibitem[Rine]{rinehart-21} 
{Rinehart, L.}: 
{\sl Eigenstates of $C^*$-Algebras}. 
E-print \texttt{https://arxiv.org/abs/2102.07049}, (2021)



\bibitem[RS1]{reed-simon-I}
Reed, M.; Simon, B.:
{\em Functional Analysis (Methods of modern mathematical physics I)}.
Academic Press,  New York-London, 1972

\bibitem[RS2]{reed-simon-IV}
Reed, M.; Simon, B.:
{\em Analysis of Operators (Methods of modern mathematical physics IV)}.
Academic Press,  New York-London, 1978

\bibitem[RW]{Rae} Raeburn, P.; Williams, D.: {\em Morita equivalence and continuous-trace $C^*$-algebras}. Mathematical Surveys and Monographs
Vol 60. AMS, 1998



 

\bibitem[Sega]{segal-47}
{Segal I. E.}:
{\sl Postulates for General Quantum Mechanics}.
Ann. Math.   {\bf 48}, 930-948 (1947)


\bibitem[SSH]{su-schrieffer-heeger-79}
{Su, W. P.; Schrieffer, J. R.;  Heeger, A. J.}:
{\sl Postulates for General Quantum Mechanics}.
Phys. Rev. Lett. {\bf 42}, 1698 (1979)


 
\bibitem[Tesc]{Tes}
Tesch, G.: {\em Mathematical Methods in Quantum Mechanics With Applications to Schrodinger Operators}.  AMS, 2009



\bibitem[vNeu]{vonneumann-32} 
{von Neumann, J.}:
{\em Mathematical Foundations of Quantum Mechanics}.
Princeton University Press, 1932









 \end{thebibliography}
\end{document}